\renewcommand\footnotetextcopyrightpermission[1]{} 
\xpatchcmd{\ps@firstpagestyle}{\@journalNameShort, Vol. \@acmVolume, No.
      \@acmNumber, Article \@acmArticle.  Publication date:
      \@acmPubDate.}{}{\typeout{First patch succeeded}}{\typeout{first patch failed}}
\newtheorem{theorem}{Theorem}
\newtheorem{definition}{Definition}
\newtheorem{lemma}{Lemma}
\newtheorem{problem}{Problem}
\definecolor{purple}{rgb}{1,0,1}
\definecolor{red}{rgb}{0,0,0}
\definecolor{blue}{rgb}{0,0,0}
\definecolor{camera}{rgb}{0,0,0}
\definecolor{camerax}{rgb}{0,0,0}
\let\emptyset\varnothing
\newcommand{\red}[1]{\textcolor{camera}{#1}}
\newcommand{\blue}[1]{\textcolor{camera}{#1}}
\newcommand{\camera}[1]{\textcolor{camera}{#1}}
\newcommand{\camerax}[1]{\textcolor{camerax}{#1}}
\newcommand{\todo}[1]{\textcolor{red}{TODO: #1}}
\newcommand{\pavlos}[1]{\textcolor{purple}{\textbf{PS:} #1}} 
\newcommand{\vkotronis}[1]{\textcolor{green}{\textbf{VK:} #1}} 
\newcommand{\eg}{\textit{e.g., }}
\newcommand{\ie}{\textit{i.e., }}
\newcommand{\myitem}[1]{\vspace*{0.04in}\noindent\textbf{#1}}
\newcommand{\myitemit}[1]{\vspace*{0.04in}\noindent\textbf{\textit{#1}}}
\newcommand{\aspath}[2]{p_{#1 \rightarrow #2}}
\newcommand{\bestaspath}[2]{bp_{#1 \rightarrow #2}}
\newcommand{\eq}[1]{Eq.~(\ref{#1})}
\newcommand{\LineComment}[1]{\hfill/* \textit{#1} */}
\newcommand{\yes}{$\checkmark$}
\newcommand{\Rgraph}{R-graph\xspace}
\newcommand{\Rpath}{R-path\xspace}
\newcommand{\ppoint}{ingress point\xspace}
\newcommand{\ppoints}{ingress points\xspace}
\begin{document}

\title{Inferring Catchment in Internet Routing}

\thanks{\camera{This work has been funded by the European Research Council grant agreement no. 338402.}}

\author{Pavlos Sermpezis}
\affiliation{%
  \institution{Institute of Computer Science, FORTH}
  \city{Heraklion}
  \country{Greece}
}
\email{sermpezis@ics.forth.gr}

\author{Vasileios Kotronis}
\affiliation{%
  \institution{Institute of Computer Science, FORTH}
  \city{Heraklion}
  \country{Greece}
}
\email{vkotronis@ics.forth.gr}

\renewcommand{\shortauthors}{Pavlos Sermpezis and Vasileios Kotronis}


\begin{abstract}
BGP is the de-facto Internet routing protocol for exchanging prefix reachability information between Autonomous Systems (AS). It is a dynamic, distributed, path-vector protocol that enables rich expressions of network policies (typically treated as secrets). In this regime, where complexity is interwoven with information hiding, answering questions such as ``what is the expected catchment of the anycast sites of a content provider on the AS-level, if new sites are deployed?'', or ``how will load-balancing behave if an ISP changes its routing policy for a prefix?'', is a hard challenge. In this work, we present a formal model and methodology that takes into account policy-based routing and topological properties of the Internet graph, to predict the routing behavior of networks. We design algorithms that provide new capabilities for \textit{informative} route inference (e.g., isolating the effect of randomness that is present in prior simulation-based approaches).
We analyze the properties of these inference algorithms, and evaluate them using publicly available routing datasets and real-world experiments. The proposed framework can be useful in a number of applications: measurements, traffic engineering, network planning, Internet routing models, etc. As a use case, we study the problem of selecting a set of measurement vantage points to maximize route inference. Our methodology is general and can capture standard valley-free routing, as well as more complex topological and routing setups appearing in practice.
\end{abstract}


 \begin{CCSXML}
<ccs2012>
<concept>
<concept_id>10003033.10003079.10011672</concept_id>
<concept_desc>Networks~Network performance analysis</concept_desc>
<concept_significance>500</concept_significance>
</concept>
<concept>
<concept_id>10003033.10003079.10003080</concept_id>
<concept_desc>Networks~Network performance modeling</concept_desc>
<concept_significance>300</concept_significance>
</concept>
<concept>
<concept_id>10003033.10003083.10003090</concept_id>
<concept_desc>Networks~Network structure</concept_desc>
<concept_significance>100</concept_significance>
</concept>
<concept>
<concept_id>10003033.10003099.10003104</concept_id>
<concept_desc>Networks~Network management</concept_desc>
<concept_significance>100</concept_significance>
</concept>
</ccs2012>
\end{CCSXML}

\ccsdesc[500]{Networks~Network performance analysis}
\ccsdesc[300]{Networks~Network performance modeling}
\ccsdesc[100]{Networks~Network structure}
\ccsdesc[100]{Networks~Network management}

\keywords{Border Gateway Protocol (BGP); Internet Routing; IP Anycast; Catchment Inference; Internet Measurements.}

\maketitle

\vspace{5\baselineskip}
\section{Introduction}\label{sec:introduction}
Routing between networks (or Autonomous Systems--AS) in the Internet takes place via the Border Gateway Protocol (BGP)~\cite{bgprfc4271}. BGP is a policy-based, destination-oriented path-vector protocol, where an AS receives paths to a destination network from its neighbors, selects which path to prefer based on its local routing policies, and advertises it to other neighbors based on its export policies. This typically results in asymmetric paths between networks~\cite{giotsas2014inferring}. Each destination network has control only over \emph{its own} routing decisions, and typically \textit{cannot~control or even know how other networks route their traffic} to it.

Knowing how networks route traffic to a destination is important for (i) network planning or monitoring (\eg allocation of network resources, detection of routing anomalies)~\cite{cicalese2015fistful,de2017broad,gursun2012traffic}, and (ii) indirect control --if possible-- of routing decisions of other networks (\eg through manipulation of BGP announcements
, selection of local routing policies, establishment of new links)~\cite{lodhi2015complexities,rfc4786}. Specifically, for a destination network, it is of particular interest to know from which of its \textit{\ppoints} (\eg border routers) it should expect to receive traffic from other networks  \blue{under a given routing configuration}~\cite{baltra2014ingress}. \blue{We consider the following indicative examples.}

\noindent\textit{Example A:} A regional ISP $R$, whose network spans a region of two major cities, $city_{A}$ and $city_{B}$, has a single upstream tier-1 ISP $T_{A}$ and connects to it at $city_{A}$. To avoid overloading its infrastructure in $city_{A}$, $R$ decides to connect to another tier-1 ISP $T_{B}$ at $city_{B}$. However, after connecting with $T_{B}$, $R$ observes that $90\%$ of the incoming Internet traffic still enters its network at $city_{A}$, therefore the new setup fails to balance $R$'s load among its infrastructure in the two cities. \blue{In fact, how to select a transit provider, is a question that lacks a clear answer, and engages operators in active discussions}~\cite{nanog-select-transit}.

\noindent\textit{Example B:} A content provider $C$ applies IP anycast (\ie announces the same IP prefix)~\cite{rfc4786,cicalese2015fistful,moura2016anycast,de2017broad,ra2017seeing} from three sites. Due to traffic increase, $C$ decides to add one more anycast site. 
It needs to select where to deploy and how to connect the new site, in order to best split the traffic among its sites. \blue{The ongoing research in IP anycasting, \eg \cite{de2017broad,ra2017seeing,anycast-sigcomm2018}, indicates that this is a problem that is not well-understood yet.} 

While a network can \emph{partially} determine how other networks route traffic to it through passive (\eg BGP data~\cite{caidabgpstream}) or active (\eg traceroute, ping) measurements~\cite{baltra2014ingress,mao2005level,lee2011scalable,cicalese2015fistful,de2017broad,ra2017seeing}, \textit{measurements can provide information only for an existing deployment}. However, in many applications (traffic engineering, peering decisions, network resilience, etc.)~\cite{lodhi2015complexities}, it is important to know, \ie predict, how the routing behavior of other networks will change \textit{in advance}, before a network actually alters its local policies or connections. 
\blue{Moreover, even when it is possible to afford several trials to test different traffic engineering (TE) decisions, the large number of possible options limits the efficiency and/or applicability of this ``trial-and-error'' approach, \textit{unless an informed methodology is used}.}

To this end, \blue{the primary goal of this paper is to \textit{provide an informative inference or prediction for}} \textit{the catchment of the \ppoints of a network, under a given (existing or not) topological and routing configuration}. With the term ``catchment'' (see, \eg~\cite{rfc4786,de2017broad,ra2017seeing}) of an \ppoint $m$ of a destination network $n_{dst}$, we denote the set of networks that route their traffic to $n_{dst}$ through  $m$.

Route inference is identified as a challenging task~\cite{rfc4786,lodhi2015complexities}, due to the inherent complexity of the behavior of BGP mechanisms, and lack of public data for networks' routing policies (in fact, only coarse estimates are available, \eg the AS-relationships~\cite{luckie2013relationships, caidaasrel}). \blue{Moreover, the related problems (\eg TE optimization) that may
arise in practice are typically of combinatorial nature~\cite{muhlbauer2006building}.}

\blue{The common approach to predict routes is to use models, such as the valley-free model~\cite{gao2001stable} or other variants~\cite{c-bgp,muhlbauer2006building,feamster2004model}, that simulate the Internet routing process (BGP) based on available data. Policies are typically inferred from public data~\cite{caidaasrel,muhlbauer2006building}, and when there is lack of, or coarse-grained, knowledge of policies, they are arbitrarily selected (\eg random tie-breaking), in order to proceed to a simulation and obtain a prediction. However, a simulator computes only one of all the possible outcomes per simulation run. Thus, this approach can lead to an \textit{output that is heavily affected by the introduced randomness} 
(\eg breaking randomly a tie for a central AS in each simulation, may lead to high catchment for an \ppoint $m$ in one simulation run, and to low catchment in another run).
Most importantly, the output \textit{does not reveal what is the effect of the randomness}, \eg how many routes are affected by an arbitrarily chosen policy.}

\blue{
In this paper, we revisit the problem of route/catchment inference, and propose a
framework and methodology for an informative inference that 
quantifies the certainty/uncertainty in the prediction for every network (isolating the effect of randomness), and reveals the factors that affect the inference (\eg certain policies or networks). This in turn enables the development of advanced methods for optimizing traffic engineering, selecting peering strategies, or conducting measurement campaigns. Specifically:}
\begin{itemize}[leftmargin=*,noitemsep,topsep=0pt]
\item We formally model (Section~\ref{sec:model}) and study the problem of inferring the catchment of the \ppoints of a network. \blue{To this end, we propose a graph structure, the \Rgraph, that can efficiently encode rich information about the routing behavior, and isolate the effect of randomness  (Section~\ref{sec:build-vf-graph}).}

\item \blue{We design and analytically study methodologies
that infer catchment in existing or hypothetical scenarios (Section~\ref{sec:methodology}). 
We
identify the networks for which a certain inference is possible, even under coarse estimates of routing policies and topology (Section~\ref{sec:vf-graph-based-inference}), calculate the \textit{probabilistic inference} for the remaining networks (Section~\ref{sec:probabilistic-inference}), and 
enhance the inference when some oracles (\eg from \textit{measurements}) are given (Section~\ref{sec:measurement-enhanced-inference}).
}
\camera{The code for an implementation of the proposed methods is available in~\cite{sermpezis2019code}.}

\item \blue{As a use case of our framework, we consider and study the problem of maximizing the inference of catchment under a limited budget of measurements. We propose an efficient greedy algorithm, which leverages the structure of the \Rgraph, for selecting the measurement targets (Section~\ref{sec:measurements}). \camera{Our analysis sheds light on the complexity of problems related to route inference, and can be of more general interest.}}

\end{itemize}
\blue{While the main focus of the paper is on establishing a theoretical framework for catchment inference, we provide an initial evaluation of} 
our approach in realistic Internet routing scenarios through extensive simulations \camera{and real experiments}, and provide insightful results (Section~\ref{sec:results}). \blue{We present related work in Section~\ref{sec:related}, and conclude \camera{by discussing potential applications and future research directions} in Section~\ref{sec:conclusions-future-work}.}

\blue{As a final remark, we would like to stress that our goal is not to propose a new inter-domain routing model~\cite{gao2001stable}, or infer more accurately the routing policies in the Internet~\cite{muhlbauer2006building}, but to provide inference methods and insights on top of \textit{any} given model and set of policies. Finally, we believe that our methods can be useful for more general applications of BGP (or, similar policy-oriented path-vector protocols), apart from inter-domain routing, such as in iBGP or data centers~\cite{bgp-in-dc-rfc7938}
}. 

\section{Model}\label{sec:model}
We present our model in Section~\ref{sec:generic-model}, and provide the necessary definitions related to route inference in Section~\ref{sec:ppoints}. In Section~\ref{sec:vf-model} we discuss how the commonly used valley-free model~\cite{gao2001stable} can be captured as a special case of our generic model. Important notation is summarized in Table~\ref{table:notation}.
\subsection{Network and Routing}\label{sec:generic-model}
We assume a network with a set of nodes $\mathcal{N}$ and edges $\mathcal{E}$. 
\blue{A node may correspond to a single AS, or a part of an AS (\eg in case of large/distributed ASes; similarly to the concept of ``quasi-routers'' in~\cite{muhlbauer2006building}), or even a group of ASes with the same routing policies (\eg siblings). For brevity, and without loss of generality, in the remainder we consider that a node represents a single AS\footnote{The study of \cite{muhlbauer2006building} showed that more than 98\% of the ASes can be accurately (w.r.t. inter-domain routing behavior) represented as a single node / 
``quasi-router''.}, and an edge corresponds to a peering link between two ASes.}
We refer to the nodes connected with an edge to a node $i$, as the neighbors of $i$.

\myitem{Routing protocol and policies.} Nodes use BGP~\cite{bgprfc4271}
to establish routes towards different Internet destinations. The main operation of BGP is described as follows. A destination node $n_{dst}$ announces a prefix. Every other node $i$ learns from its neighbors paths to $n_{dst}$ (\ie its prefix), stores them in a local routing table (Routing Information Base, RIB), and selects one of them as its best path to $n_{dst}$ (according to, \eg its \textit{local preferences}). Then, $i$ may advertise this best path to its neighbors (according to its \textit{export policies}).

A path contains a sequence of nodes; we denote a path from $i$ to $j$ as $\aspath{i}{j}$, and use the following notation:
\begin{equation*}
\aspath{i}{j}=[i,x,y,...,z,j]
,~~~i,x,y,...,z,j\in\mathcal{N}
\end{equation*}
We further denote the best path from $i$ to $j$ (\ie the path that $i$ prefers --among all paths in its RIB-- to reach $j$) as $\bestaspath{i}{j}$.

\textit{Best path selection}. Each node $i$ assigns a \textit{local preference} to each of its neighbors. We denote the set of local preferences in the network as $\mathcal{Q} = \{q_{ij}\in\mathbb{R}: i,j\in\mathcal{N},e_{ij}\in\mathcal{E}\}$. Note that local preferences are in general asymmetric, \ie $q_{ij}\neq q_{ji}$. If paths are learned from more than one neighbors, then $i$ prefers the path learned from the neighbor with the highest local preference~\cite{bgprfc4271}. If a node $i$ has the same local preference for two neighbors $j$ and $k$ ($q_{ij}=q_{ik}$), then the selection is based on other criteria (``tie-breakers''), such as path length (see Section~\ref{sec:weak-path-inference}), the MED attribute, IGP metrics, time of advertisement, etc.~\cite{ciscobestpath}.

\textit{Path export.} When a node $i$ selects a best path for $n_{dst}$ via a neighbor $j$, it may advertise (export) this path to all, some or none of its neighbors. We denote the set of export policies as $\mathcal{H} = \{h_{ijk}\in\{0,1\}: i,j,k\in\mathcal{N},e_{ij},e_{ik}\in\mathcal{E}\}$, where $h_{ijk}=1$ denotes that $i$ exports to $k$ a path learned from $j$ (and $h_{ijk}=0$ otherwise). Typically, both export policies and local preferences are based on the economic relationships between the nodes, and are consistent with each other.
Therefore, it is safe to assume for practical setups that $q_{ij} = q_{i\ell}\Rightarrow h_{ijk} = h_{i\ell k}$, $\forall k$, \ie
routes learned from neighbors with the same local preference are similarly treated\footnote{\blue{In case a node has different export policies for neighbors of same local preference, we can split the node into more than one sub-nodes (with the same neighbors and local preferences), each of them corresponding to one export policy.}}.

\myitem{}
\blue{
\textit{Remark on the generality of the model:} (i) The proposed model allows to capture generic routing policies by carefully selecting the quantities $\mathcal{Q}$ and $\mathcal{H}$; even sophisticated per-prefix policies can be captured by considering different policies $\mathcal{Q}^{p}$ and $\mathcal{H}^{p}$ per prefix $p$. (ii) The model can be applied in generic settings: when the detailed policies of a node are known by explicitly setting the $\mathcal{Q}$ and $\mathcal{H}$ values; or when we have only coarse-grained information about them (see Section~\ref{sec:vf-model}); or even when we entirely lack policy information for some nodes, where its values for $\mathcal{Q}$ and $\mathcal{H}$ can be set equal to a default value, thus without excluding any possible outcome.
}

\myitem{Eligible paths.} We define the \textit{eligible paths} of a node $i$ to a node $n_{dst}$, as the paths that can be in the RIB of $i$; thus, one of them can be selected by $i$ as its best path to $n_{dst}$. The eligible paths are later used in the route inference methodology (Section~\ref{sec:vf-graph-based-inference}).
\begin{definition}[Eligible path]\label{def:eligible-path}
An \emph{eligible path} $\aspath{i}{n_{dst}}$ is a path from $i$ to $n_{dst}$ that (i) conforms to the routing policies $\mathcal{Q}$ and $\mathcal{H}$, and (ii) can be selected by $i$ as its best path to $n_{dst}$.
\end{definition}
The first condition in Def.~\ref{def:eligible-path} dictates that only paths that can be received by $i$ (\ie be in its RIB) can be eligible. For example, if $h_{ijk}=0$, then $i$ will not export to $k$ a path learned from $j$, and thus the path $[k,i,j,...,n_{dst}]$ does not conform to the routing policies $\mathcal{H}$ and cannot be eligible.
The second condition excludes paths which are not preferred due to $i$'s local preferences. For instance, let $i$ have in its RIB two paths $\aspath{i}{n_{dst}}^{(1)} = [i,j,...,n_{dst}]$ and $\aspath{i}{n_{dst}}^{(2)} = [i,k,...,n_{dst}]$.
If $q_{ij}>q_{ik}$, then $\aspath{i}{n_{dst}}^{(2)}$ is \textit{not} eligible, since $\aspath{i}{n_{dst}}^{(1)}$ will always be preferred by $i$. However, if $q_{ij}=q_{ik}$, both paths are eligible.

\begin{table}
\centering
\caption{Important Notation.}\label{table:notation}
\vspace{-2.5mm}
\begin{small}
\begin{tabular}{|l|l|}
\hline
{$\mathcal{G}$} & {Network graph; $\mathcal{G}(\mathcal{N},\mathcal{E}, \mathcal{Q},\mathcal{H})$}\\
\hline
{$\mathcal{N}$} & {Set of nodes in $\mathcal{G}$}\\
\hline
{$\mathcal{E}$} & {Set of edges $e_{ij}$ in $\mathcal{G}$}\\
\hline
{$\mathcal{Q}$} & {Local preferences $\mathcal{Q} = \{q_{ij}\in\mathbb{R}: i,j\in\mathcal{N},e_{ij}\in\mathcal{E}\}$}\\
\hline
{$\mathcal{H}$} & {Export policies   }\\
{}&{$\mathcal{H} =  \{h_{ijk}\in\{0,1\}: i,j,k\in\mathcal{N},e_{ij},e_{ik}\in\mathcal{E}\}$}\\
\hline
{$\aspath{i}{j}$}&{Path from node $i$ to node $j$}\\
\hline
{$\bestaspath{i}{j}$}&{Best path from node $i$ to node $j$}\\
\hline
{$\mathcal{M}$} & {Ingress points of the destination node $n_{dst}$}\\
\hline
{$i\rhd m$} & {Node route; $i$ reaches $n_{dst}$ through the \ppoint $m$}\\
\hline
{$\pi_{i}(m)$} & {Route probability for node $i$ and \ppoint $m$, \eq{eq:definition-route-probability}}\\
\hline
{$f$} & {Routing function, \eq{eq:routing-function}}\\
\hline
{$\mathcal{G}_{R}$} & {\Rgraph; $\mathcal{G}_{R}(\mathcal{N}_{R},\mathcal{E}_{R})$}\\
\hline
\end{tabular}
\end{small}
\end{table}

\subsection{Ingress Points and Catchment}\label{sec:ppoints}
\myitem{Ingress points.} Let a network $n_{dst}$ that originates a prefix, and is connected to its neighbors (and receives traffic) through a set of \ppoints $\mathcal{M}$. An \ppoint can be a router interface of $n_{dst}$ that is used exclusively in a private peering link (\eg with its upstream provider) or a router at an IXP that connects $n_{dst}$ to multiple other networks (\ie the members of the IXP). 

\textit{Remark:} This notion can be generalized for the case where multiple nodes announce the same prefix (Multi-Origin AS, or MOAS). A virtual node $n_{dst}$ can be connected to these MOAS nodes, which then serve as the \ppoints of $n_{dst}$.

\myitem{Catchment: mapping nodes to \ppoints.} Let assume w.l.o.g. that each neighbor $j$ of $n_{dst}$ is directly connected to $n_{dst}$ through exactly one \ppoint $m$, $m\in\mathcal{M}$. We denote this as $j\rhd m$.
Every other node $i$, $i\in\mathcal{N}$, selects a best path $\bestaspath{i}{n_{dst}}$ towards $n_{dst}$, \eg 
\begin{equation*}
\bestaspath{i}{n_{dst}} = [i,x,...,y,n_{dst}]
\end{equation*}
where $x$ is a neighbor of $i$, and $y$ a neighbor of $n_{dst}$. In this example, if $y\rhd m$, $m\in\mathcal{M}$, then $\bestaspath{i}{n_{dst}}\rhd m$ and $i\rhd m$.

\begin{definition}[Node route / Catchment]\label{def:route-catchment}~

\noindent The \underline{route} of a node $i$ is $m$, and is denoted as $i\rhd m$, when $i$ routes its traffic to $n_{dst}$ through the \ppoint $m$ of $n_{dst}$.

\noindent The \underline{catchment} of an \ppoint $m$ is the set of nodes $i\in\mathcal{N}$, for which it holds that $i\rhd m$.
\end{definition}

We would like to stress that the ``route'' of a node $i$, as defined in Def.~\ref{def:route-catchment} and used throughout the paper, indicates only how the traffic of $i$ enters the network of $n_{dst}$ (\ie the last hop closest to $n_{dst}$ in $\bestaspath{i}{n_{dst}}$), and not the entire AS-path.

\myitem{Route probability and Routing function.} 
In many cases we cannot determine which is the best path of a node $i$, \eg when  the paths $\aspath{i}{n_{dst}}$ (i) are not known, or (ii) are known but the local preferences are unknown. We capture this uncertainty in a probabilistic way, by defining the \textit{route probability} as:
\begin{equation}\label{eq:definition-route-probability}
\pi_{i}(m) = Prob\{\bestaspath{i}{n_{dst}}\rhd m\},~~~~i\in\mathcal{N},m\in\mathcal{M}
\end{equation}
Furthermore, we define the \textit{routing function} $f:\mathcal{N}\rightarrow \mathcal{M}\cup \{0\}$ that maps nodes ($i\in\mathcal{N}$) to \ppoints ($m\in\mathcal{M}$) as:
\begin{equation}\label{eq:routing-function}
f(i) = \left\{
\begin{tabular}{ll}
{m} & {, if  $\pi_{i}(m)=1$ }\\
{0} & {, otherwise }
\end{tabular}
\right.
\end{equation}
In other words, $f(i)=m\neq 0$ denotes a certainty for the route of node $i$ (and $f(i)=0$ denotes uncertainty).

\subsection{A Sub-Case: the Valley-Free (VF) Model}\label{sec:vf-model}


The network and routing model of Section~\ref{sec:generic-model} are generic and can describe the BGP setups encountered in practice.
\blue{Here, we present how the valley-free (VF) routing model~\cite{gao2001stable} can be captured as a special case of our~model. The VF model is widely considered in related work as a useful approximation for Internet routing, thus we believe that this section will facilitate other researchers to apply our framework.}

In the VF model, each pair of adjacent nodes has either a \textit{customer-to-provider} or a \textit{peer-to-peer} relationship%
%
%
. We denote a relationship between two nodes $i,j$ ($i,j\in\mathcal{N},e_{ij}\in\mathcal{E}$) as $\ell_{ij}\in \{\textit{c2p,p2p,p2c}\}$, \eg $\ell_{ij}=\textit{c2p}$ when $i$ is a customer of $j$.
Note that when $\ell_{ij}=c2p$ then $\ell_{ji}=p2c$, but \textit{p2p} relationships are typically symmetric (e.g., settlement-free peering).

Under the VF model, a node $i$ prefers paths received from customers to paths from peers or providers, and paths from peers to paths from providers. We denote this path preference as $p2c \succ p2p \succ c2p$
and we can capture this in our model by assigning local preferences as follows:
\begin{equation}\label{eq:vf-local-preferences}
q_{ij}>q_{ik}~~~\Leftrightarrow~~~\ell_{ij}\succ\ell_{ik}
\end{equation}

Moreover, when a node has a best path for $n_{dst}$ through a customer, it advertises this path to all its neighbors (customers, peers, providers); and when the best path is through a peer or provider, it advertises this path only to its customers:

\begin{equation}\label{eq:vf-export-policies}
h_{ijk}=
\begin{cases} 
      1 & , ~\text{if}~~~ \ell_{ik}=p2c ~\text{or}~ \ell_{ij}=p2c \\
      0 & , ~\text{otherwise}~ 
   \end{cases}
\end{equation}

It is worth noting that in practice, only coarse estimates of the AS-relationships $\ell_{ij}$ are known (\eg CAIDA AS-relationship dataset~\cite{caidaasrel}), while the detailed local preferences $q_{ij}$ are typically not made public by networks. Hence, it is commonly assumed that $q_{ij}=q_{ik}\Leftrightarrow\ell_{ij}=\ell_{ik}$, \ie a network assigns equal local preferences to all neighbors of the same type~\cite{caesar2005bgp}.

\section{Route Inference}
\label{sec:methodology}
\myitem{The problem.} Our goal is to infer through which \ppoint each node $i$ reaches the destination node $n_{dst}$ (or, equivalently, the \textit{route} of each node / the catchment of each \ppoint).
\blue{In this section, we tackle this problem, and provide methods for the route inference. Our methodology is summarized as follows.}

\myitem{Methodology overview.} 
We first calculate for every node $i\in\mathcal{N}$ all its eligible paths to $n_{dst}$ (see Def.~\ref{def:eligible-path}), and encode them in a directed acyclic graph (DAG) rooted in $n_{dst}$; we call this graph the \textit{Routing Graph} or \textit{\Rgraph} (Section~\ref{sec:build-vf-graph}). \blue{The \Rgraph is the basic structure, on which our inference methodology is built.}

\blue{Proceeding to inference,} we first focus on the nodes for which a \textit{certain} inference can be made (Section~\ref{sec:vf-graph-based-inference}); \textit{our goal is to calculate} $f(i)$, $\forall i\in\mathcal{N}$. We infer the values of the routing function $f$ based on the structure of the \Rgraph; when $i$ has only one eligible path $\aspath{i}{n_{dst}}$, and this path is through the \ppoint $m$, then $f(i)=m$. \blue{However, and most importantly}, the \Rgraph enables 
to determine non-zero values of $f(i)$ (\ie certain inference) 
also for some nodes that have multiple eligible paths; even without knowing which of them is the best path, or enumerating all of them.

\blue{We then focus on nodes with uncertain routes, \ie for $i \in \mathcal{N}$ with $f(i)=0$, and present a framework and methodology for \textit{probabilistic} inference of routes (Section~\ref{sec:probabilistic-inference}). We calculate the route probabilities $\pi_{i}(m)$ for all nodes $i\in\mathcal{N}$ and \ppoints $m\in\mathcal{M}$
.}

Next, we study how to enhance (certain or probabilistic) inference, when \textit{oracles} (\eg \textit{measurements}) are given for a set of nodes with uncertain routes (Section~\ref{sec:measurement-enhanced-inference}).

\blue{Finally, we consider the case where nodes \textit{prefer shorter paths} that conform to their routing policies (this frequently holds in practice~\cite{ciscobestpath,anwar2015investigating}), and incorporate this preference in our framework by modifying the \Rgraph; this enables route inference for more nodes (Section~\ref{sec:weak-path-inference}).}

\blue{The aforementioned inference methods (certain, probabilistic, with oracles) can be used independently or complementarily.} %
Table~\ref{table:methodology-overview} gives an overview of the inference methodology, namely, the sequence of steps (algorithms) needed for applying the different route inference variants proposed in this paper.

\begin{table}
\centering
\caption{Inference Methodology Overview.}
\label{table:methodology-overview}
\vspace{-2.5mm}
\begin{small}
\begin{tabular}{|cccc|l|}
\hline
\multicolumn{4}{|c|}{Type of Inference} & \multicolumn{1}{c|}{Methodology}\\
\hline
{\rotatebox[origin=c]{90}{\parbox[c]{1.8cm}{\centering Certain}}}	& {\rotatebox[origin=c]{90}{\parbox[c]{1.8cm}{\centering Probabilistic}}}	&	
{\rotatebox[origin=c]{90}{\parbox[c]{1.8cm}{\centering \blue{Oracles}
}}}	& {\rotatebox[origin=c]{90}{\parbox[c]{1.8cm}{\centering Shortest path preference}}}	&{\parbox[c]{5cm}{\centering Sequence of steps / algorithms. \\(*\textproc{Bel}: any exact or approximate \textit{belief updating} algorithm~\cite{van2008handbook})}}\\
\hline
{\yes}&{}&{}&{(\yes)}&{Alg.\ref{alg:build-vf-graph} $\Rightarrow$  (Alg.\ref{alg:transform-vf-bn-weak} $\Rightarrow$) Alg.\ref{alg:coloring-vf-graph}}\\
\hline
{}&{\yes}&{}&{(\yes)}&{Alg.\ref{alg:build-vf-graph} $\Rightarrow$ (Alg.\ref{alg:transform-vf-bn-weak} $\Rightarrow$) Alg.\ref{alg:coloring-vf-graph} $\Rightarrow$ Alg.\ref{alg:probabilistic-coloring-vf-graph}}\\
\hline
{\yes}&{}&{\yes}&{(\yes)}&{Alg.\ref{alg:build-vf-graph} $\Rightarrow$ (Alg.\ref{alg:transform-vf-bn-weak} $\Rightarrow$) Alg.\ref{alg:coloring-vf-graph} $\Rightarrow$ Alg.\ref{alg:probabilistic-coloring-vf-graph} $\Rightarrow$ Alg.\ref{alg:coloring-from-measurements}}\\
\hline
{}&{\yes}&{\yes}&{(\yes)}&{Alg.\ref{alg:build-vf-graph} $\Rightarrow$  (Alg.\ref{alg:transform-vf-bn-weak} $\Rightarrow$) Alg.\ref{alg:coloring-vf-graph} $\Rightarrow$ Alg.\ref{alg:probabilistic-coloring-vf-graph} $\Rightarrow$ \textproc{Bel}}\\
\hline
\end{tabular}
\end{small}
\end{table}


\myitem{Comparison to simulation models.} \blue{Simulation-based approach\-es
\cite{gao2001stable,c-bgp,muhlbauer2006building,feamster2004model} return a single outcome of catchment each time. Running a simulation more than once, may give different outcomes, since simulators typically employ randomization to determine the best path when not sufficient knowledge (\eg the $\mathcal{Q,H}$ or tie-breaker values) is available. For example, let the outcome for nodes $i$ and $j$ of the first (denoted in the superscript) simulation run be $\bestaspath{i}{n_{dst}}^{(1)}\rhd m1$ and $\bestaspath{j}{n_{dst}}^{(1)}\rhd m1$, and of the second run be $\bestaspath{i}{n_{dst}}^{(2)}\rhd m1$ and $\bestaspath{j}{n_{dst}}^{(2)}\rhd m2$. Based solely on these outcomes, one cannot answer the following questions: \textit{What would be the outcome of a third run? Will $i$ always route to $m1$, or did it happen twice due to random tie-breaking? Which route ($m1$ or $m2$) is more probable for $j$, if we simulate all possible tie-breaking combinations?}}

\blue{Our methodology provides answers to these questions (and with low complexity algorithms), where simulation-based models would need several simulation runs (of much higher complexity) to provide only an approximate answer. For instance, the certain inference algorithm (Section~\ref{sec:vf-graph-based-inference}) infers whether $i$ will always route to $m1$, and the probabilistic inference algorithm (Section~\ref{sec:probabilistic-inference}) calculates the percentage of all possible outcomes in which $j$ will route to $m1$.}

\subsection{Building the \Rgraph}\label{sec:build-vf-graph}
We design Algorithm~\ref{alg:build-vf-graph} to build the \Rgraph $\mathcal{G}_{R}$ that encodes all eligible paths to $n_{dst}$. Any eligible path $\aspath{i}{n_{dst}}$, $\forall i\in\mathcal{N}$, can be extracted by processing $\mathcal{G}_{R}$. Figure~\ref{fig:example-VF-graph} shows an example of a \Rgraph rooted in $n_{dst}$. 

\begin{figure}
\centering
\begin{minipage}{0.5\linewidth}
\includegraphics[width=1\linewidth]{./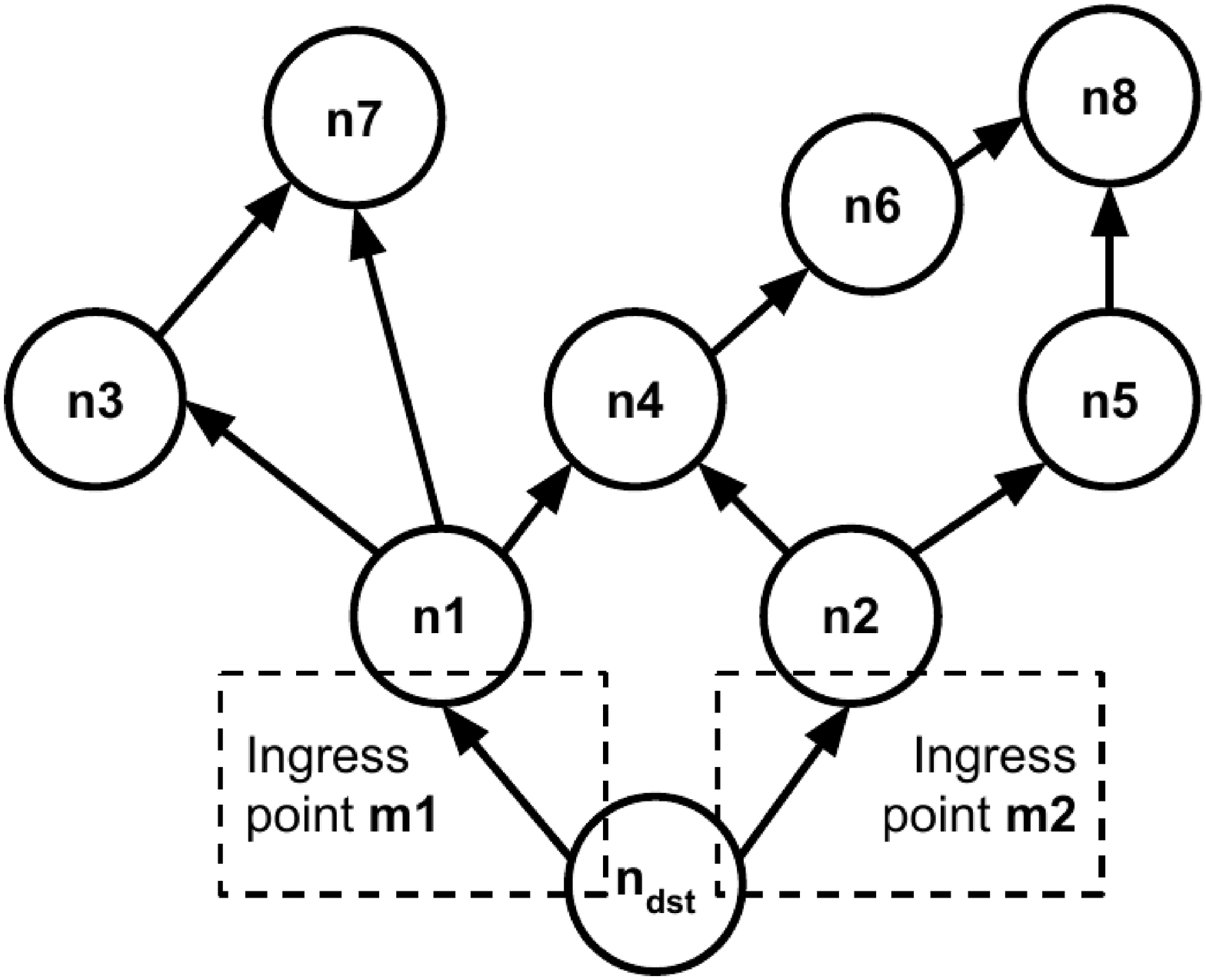}
\end{minipage}
\hspace{0.05\linewidth}
\begin{minipage}{0.4\linewidth}
\centering
\begin{small}
\begin{tabular}{|l|l|l|}
\hline
{\textbf{Node}} & {\textbf{Eligible path(s)}}&{$\mathbf{f(n)}$}\\
\hline
{$n1$} & {$[n1,n_{dst}]$}&{$m1$}\\
\hline
{$n2$} & {$[n2,n_{dst}]$}&{$m2$}\\
\hline
{$n3$} & {$[n3,n1,n_{dst}]$}&{$m1$}\\
\hline
{$n4$} & {$[n4,n1,n_{dst}]$}&{$0$}\\
{}     & {$[n4,n2,n_{dst}]$}&{} \\
\hline
{$n5$} & {$[n5,n2,n_{dst}]$}&{$m2$}\\
\hline
{$n6$} & {$[n6,n4,n1,n_{dst}]$}&{$0$}\\
{}     & {$[n6,n4,n2,n_{dst}]$}&{}\\
\hline
{$n7$} & {$[n7,n1,n_{dst}]$}&{$m1$}\\
{}     & {$[n7,n3,n1,n_{dst}]$}&{}\\
\hline
{$n8$} & {$[n8,n5,n2,n_{dst}]$}&{$0$}\\
{}     & {$[n8,n6,n4,n1,n_{dst}]$}&{} \\
{}     & {$[n8,n6,n4,n2,n_{dst}]$}&{}\\
\hline
\end{tabular}
\end{small}
\end{minipage}
\caption{Example of a \Rgraph (left), and the corresponding eligible paths and values of the routing function $f$ returned by Alg.~\ref{alg:coloring-vf-graph} for every node (right). The destination node $n_{dst}$ has two \ppoints $m1$ and $m2$ through which it connects to its neighbors $n1$ and $n2$, respectively.}
\label{fig:example-VF-graph}
\end{figure}


\myitem{Input/Output.} Algorithm~\ref{alg:build-vf-graph} receives as input a network graph and its routing policies $\mathcal{G}(\mathcal{N},\mathcal{E},\mathcal{Q},\mathcal{H})$, and a destination node $n_{dst} \in \mathcal{N}$. It returns as output the \Rgraph $\mathcal{G}_{R}(\mathcal{N}_{R},\mathcal{E}_{R})$, which is a DAG rooted in node $n_{dst}$.


\myitem{Workflow.} First, Algorithm~\ref{alg:build-vf-graph} simulates the operation of BGP; when needed, ``ties'' are broken randomly, \eg if multiple paths from neighbors with equal local preferences exist, one of them is selected randomly as the best path.
This randomness \emph{does not affect} the construction of the \Rgraph, since all incoming path advertisements exist in the RIB of a node $i$ ($\mathcal{P}_{i}$, returned in \textit{line 1}), and are taken into account (loop in \textit{line 6}).
Then, it initializes the \Rgraph by adding only the nodes, without adding any edge (\textit{line 2}). For each node $i$ (\textit{lines 3--18}), it accesses its RIB $\mathcal{P}_{i}$ and finds all neighbors that advertised a path for $n_{dst}$ (\ie the next-to-$i$ hops in the RIB paths; \textit{line 7}), and selects the set of the neighbors ($best\_neighbors$) with the highest local preference ($max\_q$). The paths from these neighbors are the eligible paths of $i$, since they (a) exist in the RIB and (b) are from neighbors with the highest local preference (as requested by Def.~\ref{def:eligible-path}). For each neighbor $k$ of $i$ in $best\_neighbors$, it adds a directed edge from $k$ to $i$. 

\myitem{Complexity: $O\left(|\mathcal{N}|\cdot |\mathcal{E}|\right)$.} \blue{The computational complexity of Algorithm~\ref{alg:build-vf-graph} is dominated by the complexity of running a BGP simulation (\textit{line 1}) which is equivalent to this of the centralized Bellman-Ford algorithm $O\left(|\mathcal{N}|\cdot |\mathcal{E}|\right)$. The loop in \textit{lines 3--18} examines every edge in the graph at most once and runs in $O(|\mathcal{E}|)$}.


\begin{algorithm}[t]
\caption{Building the \Rgraph. 
}
\label{alg:build-vf-graph}
\begin{small}
\begin{algorithmic}[1]
\Statex {\textbf{Input:} Network graph $\mathcal{G}(\mathcal{N},\mathcal{E},\mathcal{Q},\mathcal{H})$; destination node $n_{dst}$.}
\State $\mathcal{P}\leftarrow \textproc{RunBGP\_RandomTieBreak}(\mathcal{G}(\mathcal{N},\mathcal{E},\mathcal{Q},\mathcal{H}), n_{dst})$  
\Statex \LineComment{$\mathcal{P} = \{\mathcal{P}_{i}: i\in\mathcal{N}\}$, where $\mathcal{P}_{i}$ is the BGP RIB of $i$}
\State $\mathcal{G}_{R}(\mathcal{N}_{R},\mathcal{E}_{R}): \mathcal{N}_{R}\leftarrow\mathcal{N}; \mathcal{E}_{R}\leftarrow\emptyset$ 
\For {$i\in\mathcal{N}$}
	\State $best\_neighbors\leftarrow \emptyset$
    \State $max\_q \leftarrow -\infty$
	\For{$\aspath{i}{n_{dst}}\in \mathcal{P}_{i}$}
    	\State $k\leftarrow \textproc{GetNeighbor}(i,\aspath{i}{n_{dst}})$
    	\If{$q_{ik}< max\_q$}
        	\State \textbf{continue}
        \ElsIf{$q_{ik}> max\_q$}
        	\State $best\_neighbors\leftarrow \{k\}$
            \State $max\_q \leftarrow q_{ik}$
        \Else \LineComment{$q_{ik} = max\_q$}
        	\State $best\_neighbors\leftarrow best\_neighbors\cup\{k\}$
        \EndIf
    \EndFor
    \State $\mathcal{E}_{R}\leftarrow \mathcal{E}_{R}\cup \{e_{ki}:k\in best\_neighbors\}$
\EndFor
\State\Return $\mathcal{G}_{R}(\mathcal{N}_{R},\mathcal{E}_{R})$
\end{algorithmic}
\end{small}
\end{algorithm}

\blue{The following theorem formally states that (i) any path in the R-graph is eligible and (ii) any eligible path is encoded in the R-graph.}

\begin{theorem}\label{thm:vf-graph}
A path $\aspath{i}{n_{dst}}$ is an eligible path \emph{if and only if} it can be constructed by starting from $n_{dst}$ and following a sequence of directed edges in \Rgraph $\mathcal{G}_{R}$ until reaching $i$.
\end{theorem}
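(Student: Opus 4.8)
The plan is to prove the two implications separately, each by induction on the number of hops of the path (equivalently, on distance from $n_{dst}$ in $\mathcal{G}_{R}$), after first isolating the one fact that makes the statement nontrivial: that the edge set $\mathcal{E}_{R}$ produced by Alg.~\ref{alg:build-vf-graph} does \emph{not} depend on the random tie-breaking of line~1. I would establish this invariance as a preliminary observation. A neighbor $k$ of $i$ advertises a path for $n_{dst}$ to $i$ iff $k$'s chosen best path is exportable to $i$; but every best-path candidate of $k$ traverses a neighbor of $k$ of the \emph{same} (maximal) local preference, so the export-consistency assumption $q_{ij}=q_{i\ell}\Rightarrow h_{ijk}=h_{i\ell k}$ forces all these candidates to be exported to exactly the same set of nodes. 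Hence whether $k$ advertises to $i$ is independent of which candidate the tie-break picked, and so $best\_neighbors$ and $max\_q$ for $i$ — and therefore the edges entering $i$ — are well defined regardless of randomness.

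For the ``if'' direction I would show that every directed path $n_{dst}\to\cdots\to v_{1}\to i$ in $\mathcal{G}_{R}$ reverses to an eligible path $\aspath{i}{n_{dst}}=[i,v_{1},\dots,n_{dst}]$. Base case: a single edge $n_{dst}\to v$ means $n_{dst}\in best\_neighbors(v)$, so $[v,n_{dst}]$ is in $v$'s RIB (conforms to $\mathcal{H}$) and comes from a maximal-preference neighbor (is selectable as best), i.e.\ it is eligible by Def.~\ref{def:eligible-path}. Inductive step: assume the suffix $[v_{1},\dots,n_{dst}]$ is eligible for $v_{1}$ and that the edge $v_{1}\to i$ is present. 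Selectability of the extended path is immediate, since $v_{1}\to i$ forces $q_{i,v_{1}}=max\_q$, so a path through $v_{1}$ can win $i$'s selection. Conformance reduces to the single export step $h_{v_{1},v_{2},i}=1$, and this is where the argument does its work: edge $v_{1}\to i$ means $v_{1}$ advertised its run-time best path (through some maximal-preference neighbor $w$) to $i$, so $h_{v_{1},w,i}=1$, and since $q_{v_{1},v_{2}}=q_{v_{1},w}$ the export-consistency assumption yields $h_{v_{1},v_{2},i}=1$.

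For the ``only if'' direction I would take an eligible path $[i,v_{1},\dots,n_{dst}]$ and show that the corresponding edges $v_{1}\to i,\;v_{2}\to v_{1},\dots$ all exist in $\mathcal{G}_{R}$, again by induction. The crux is the first edge $v_{1}\to i$. Since the path is selectable as $i$'s best, $v_{1}$ must carry maximal local preference among \emph{all} neighbors advertising to $i$ — otherwise a strictly higher-preference advertised neighbor would always be preferred, contradicting selectability — so $q_{i,v_{1}}=max\_q$. Moreover $v_{1}$ does advertise to $i$ in the run: conformance gives $h_{v_{1},v_{2},i}=1$, and by export-consistency $v_{1}$'s actual best path (through a neighbor of the same maximal preference) is likewise exported to $i$. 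Hence $v_{1}\in best\_neighbors(i)$ and the edge exists. The suffix $[v_{1},\dots,n_{dst}]$ is itself eligible for $v_{1}$ (it conforms to $\mathcal{H}$ as a suffix of a conforming path, and its first hop $v_{2}$ is a maximal-preference neighbor of $v_{1}$, hence selectable), so the induction hypothesis applies.

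The main obstacle is exactly the preliminary invariance claim: the \Rgraph is built from a \emph{single}, randomized BGP run, yet it must encode \emph{all} eligible paths, including those whose intermediate hops were never selected as best in that particular run. Everything hinges on the export-consistency assumption $q_{ij}=q_{i\ell}\Rightarrow h_{ijk}=h_{i\ell k}$, which decouples ``which best path a node happened to pick'' from ``to whom that node exports''. I would therefore state and justify this invariance up front, since both inductions silently rely on it whenever they translate between $i$'s run-time advertised path and an arbitrary eligible path routed through a co-maximal neighbor.
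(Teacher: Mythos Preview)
Your proposal is correct and follows essentially the same line as the paper's proof: both directions hinge on the export-consistency assumption $q_{ij}=q_{i\ell}\Rightarrow h_{ijk}=h_{i\ell k}$, and the paper's edge-by-edge argument (direct for ``if'', contradiction on a missing edge for ``only if'') is just your induction unrolled. Your explicit preliminary claim that $\mathcal{E}_{R}$ is invariant under the tie-breaking of line~1 is a genuine improvement in rigor over the paper, which uses this fact silently.
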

\begin{proof}
The proof is given in Appendix~\ref{appendix:proof-theorem-vf-graph}.
\end{proof}

\subsection{Route Inference on the R-Graph}\label{sec:vf-graph-based-inference}
We proceed to infer through which \ppoint a node $i$ routes its traffic to $n_{dst}$, by exploiting the structure of the \Rgraph. We demonstrate this inference using the example of Fig.~\ref{fig:example-VF-graph}, where it is given that $n1\rhd m1$ and $n2\rhd m2$ (\ie $f(n1)=m1$ and $f(n2)=m2$).

\myitemit{Case A: When the best path is known, the route inference is straightforward (from \eq{eq:routing-function}).} Node $n3$ has only one way/path to reach $n_{dst}$ (\ie by following links in the \Rgraph; see Theorem~\ref{thm:vf-graph}). This path is through node $n1$, and since $f(n1)=m1$, it follows that $f(n3)=f(n1)=m1$. 

\myitemit{Case B: Route inference is possible, even when the best path cannot be determined.} Node $n7$ has two incoming links from nodes $n1$ and $n3$; it selects only one of them to form its best path, based on its local preferences to $n1$ and $n3$. Without knowing these local preferences, we cannot infer the best path. However, since both $n1$ and $n3$ route traffic through the same \ppoint ($f(n1)=f(n3)=m1$), selecting either path leads to the same value of the routing function: $f(n7)=m1$. 

\myitemit{Case C: Route inference might not be possible for some nodes.} On the contrary to $n7$, while node $n4$ has also two incoming links, they are from nodes $n1$ and $n2$ for which it holds that $f(n1)\neq f(n2)$. Thus, in this case 
we cannot infer which path will be selected, and we write $f(n4)=0$.

The above rules can be applied sequentially for all nodes in the \Rgraph. Algorithm~\ref{alg:coloring-vf-graph} formalizes this inference process.

\myitem{Input/Output.} Algorithm~\ref{alg:coloring-vf-graph} receives as input a \Rgraph, a destination node $n_{dst}$ (root of the graph), and a mapping of the neighbors of $n_{dst}$ to its \ppoints $\mathcal{M}$. It returns the values of the routing function $f$ for all nodes in the \Rgraph.

\myitem{Workflow.} Algorithm~\ref{alg:coloring-vf-graph} starts from the neighbors of $n_{dst}$ and sets the values of $f$ according to their mapping to \ppoints (\textit{lines 2--7}). Then, it calculates a topological ordering%
\footnote{
A topological sort/ordering $\mathcal{T}$ of a directed graph $\mathcal{G}(\mathcal{N},\mathcal{E})$ is a linear ordering of its nodes $\mathcal{N}$ such that for every directed edge $e_{ij}\in\mathcal{E}$ from node $i\in\mathcal{N}$ to node $j\in\mathcal{N}$, $i$ comes before $j$ in the sort/ordering $\mathcal{T}$. For example, in Fig.~\ref{fig:example-VF-graph}, node numbering ($n1,..,n8$) corresponds to a topological ordering.}
of the \Rgraph nodes (\textit{line 8})
and sequentially visits nodes starting from those that are closer to the $n_{dst}$ (\textit{lines 9--20}). For each node $i$, it calculates the set of \textit{routes} (Def.~\ref{def:route-catchment}) of its parent nodes $CR_{i}$ (\textit{lines 10--14}), which are the candidate routes for node $i$. If some of the parents do not have a certain route ($0\in CR_{i}$) or there are more than one candidate routes ($|CR_{i}|\neq 1$), then it cannot make a certain route inference for node $i$, and sets $f(i)=0$  (\textit{lines 15--16}). Otherwise (\ie there is only one candidate route for $i$), an inference is made and the route of $i$ is set equal to this of its parent(s) (\textit{line 18}). 

\noindent\textit{Remark:} Visiting nodes in their topological order ensures correctness of the algorithm, \ie that the routing function of a node $i$ will not be mis-inferred (\eg $f(i)=0$ instead of $f(i)=m$, $m\in\mathcal{M}$). This is because all parent nodes of $i$, which are the only nodes that affect the route of this node, will have been visited before node $i$. 

\myitem{Complexity: $O(|\mathcal{N}_{R}|+ |\mathcal{E}_{R}|)$.} \blue{The topological sort in \textit{line 8} is of complexity $O(|\mathcal{N}_{R}|+ |\mathcal{E}_{R}|)$ and the loop in \textit{lines 9--20} is of complexity $O(|\mathcal{E}_{R}|)$ since it visits each edge in $\mathcal{E}_{R}$ exactly once.}


\begin{algorithm}[t]
\caption{Inference on the \Rgraph.
}
\label{alg:coloring-vf-graph}
\begin{small}
\begin{algorithmic}[1]
\Statex {\textbf{Input:} \Rgraph $\mathcal{G}_{R}(\mathcal{N}_{R}, \mathcal{E}_{R})$; destination node $n_{dst}$; mapping ($\rhd$) of the neighbors of $n_{dst}$ to \ppoints $\mathcal{M}$.}
\State $f(i)\leftarrow 0, ~~\forall i\in\mathcal{N}_{R}$ \LineComment{Initialization}
\State $dst\_neighbors\leftarrow\{i\in\mathcal{N}_{R}: e_{n_{dst},i}\in\mathcal{E}_{R}\}$
\For {$i\in dst\_neighbors,~m\in\mathcal{M}$}
  \If{$i\rhd m$}
  	\State $f(i)\leftarrow m$
  \EndIf
\EndFor
\State $\mathcal{T}\leftarrow$\textproc{TopologicalSort($\mathcal{G}_{R}$)}
\For {$i\in \mathcal{T}\backslash\{n_{dst}\cup dst\_neighbors\}$}
  	\State $CR_{i}\leftarrow \emptyset$ \LineComment{Candidate routes for node $i$}
    \State $P_{i}\leftarrow \{j\in\mathcal{N}_{R}: e_{ji}\in\mathcal{E}_{R}\}$ \LineComment{Parents of $i$}
	\For {$j\in P_{i}$}
    	\State $CR_{i}\leftarrow CR_{i}\cup \{f(j)\}$
    \EndFor
    \If{$(0\in CR_{i}) ~\textbf{or}~ (|CR_{i}|\neq 1)$}
    	\State $f(i)\leftarrow 0$
    \Else
    	\State $f(i)\leftarrow CR_{i}$
    \EndIf
\EndFor
\State \Return $f(i),~\forall i\in\mathcal{N}_{R}$
\end{algorithmic}
\end{small}
\end{algorithm}




\subsection{Probabilistic Route Inference}\label{sec:probabilistic-inference}

The goal of probabilistic inference is to calculate the route probabilities $\pi_{i}(m)$ 
(defined in \eq{eq:definition-route-probability}). Hence, even for nodes for which a certain inference is not possible
, the \textit{probabilities $\pi_{i}(m)$ can provide extra information that can be useful}, \eg to predict the total load per \ppoint by taking the expectation over the route probabilities:
\begin{equation}\label{eq:traffic-load}
    Traffic\_Load(m) = \textstyle\sum_{i\in\mathcal{N}} T_{i}\cdot \pi_{i}(m)
\end{equation}
where $T_{i}$ is the known traffic load from $i$ to $n_{dst}$ ($T_{i}$ can be estimated independently of the deployment/routing setup\blue{, \eg from Netflow statistics or} similarly to \blue{the system proposed in}~\cite{de2017broad}).  

\myitem{The \Rgraph as a Bayesian Network (BN).} To proceed to probabilistic route inference, we handle the \Rgraph as a Bayesian network (BN)%
\footnote{A BN is a directed acyclic graph (DAG), where a directed edge $e_{ij}$ denotes a dependence of node  $j$ on node $i$~\cite{van2008handbook}. We remind that the \Rgraph is a DAG that encodes routing path dependencies; \eg a directed edge $e_{ij}$ denotes that node $i$ is the next hop of $j$ in a path $\aspath{j}{n_{dst}}$ from $j$ to $n_{dst}$.}%
, where a node $i$ can take a value 
$m\in\mathcal{M}$, and the respective probability is given by $\pi_{i}(m)$. Based on BN properties (and the \textit{causality} in the \Rgraph\blue{, \ie children nodes select routes learned from their parents and not the opposite}), \blue{the following expression can be used to calculate the probabilities $\pi_{i}(m)$, from the probabilities of the parents ($P_{i}$) of $i$:} 
%
%
\begin{equation}\label{eq:route-probability-from-parents}
\pi_{i}(m) = \textstyle\sum_{j\in P_{i}} \pi_{j}(m)\cdot p_{ij}
\end{equation}
where $p_{ij}$ the probability for $i$ to prefer a path from $j$ than any other parent node, and $\sum_{j\in P_{i}}p_{ij}=1$.

Algorithm~\ref{alg:probabilistic-coloring-vf-graph} applies the above \blue{equation} 
and calculates the probabilistic route inference on a \Rgraph. 

\myitem{Input/Output.} Algorithm~\ref{alg:probabilistic-coloring-vf-graph} receives as input the \Rgraph, the \ppoints, the values of the routing function and the probabilities $p_{ij}$, and returns the route probabilities $\pi_{i}(m)$, $\forall i\in\mathcal{N},m\in\mathcal{M}$.

\myitem{Workflow.} Algorithm~\ref{alg:probabilistic-coloring-vf-graph} initializes all probabilities to zero (\textit{line 1}) and starts visiting all nodes according to a topological sort (\textit{lines 2--14}). If a visited node $i$ has a certain route $m$, then it sets the probability $\pi_{i}(m)$ equal to $1$ (\textit{lines 4--5}). Otherwise, it applies \eq{eq:route-probability-from-parents} 
to calculate $\pi_{i}(m)$ from the probabilities of the parent nodes (\textit{lines 7--13}). Visiting nodes in a topological order satisfies that the probability of all parent nodes $P_{i}$ will have been calculated before visiting $i$.

\myitem{Complexity: $O(|\mathcal{N}_{R}|+ |\mathcal{E}_{R}|)$.} \blue{Similarly to the certain inference methodology, the complexity of the topological sort in \textit{line 2} is $O(|\mathcal{N}_{R}|+ |\mathcal{E}_{R}|)$, and this of the loop in \textit{lines 3--14} is $O(|\mathcal{E}_{R}|)$. However, Algorithm~\ref{alg:probabilistic-coloring-vf-graph} is used with Algorithm~\ref{alg:coloring-vf-graph} (see Table~\ref{table:methodology-overview}), which means that the topological sort is already calculated in Algorithm~\ref{alg:coloring-vf-graph} and can be passed as input to Algorithm~\ref{alg:probabilistic-coloring-vf-graph}.}

\myitem{Setting the values of the probabilities $p_{ij}$.} %
\blue{
Algorithm~\ref{alg:probabilistic-coloring-vf-graph} and \eq{eq:route-probability-from-parents}, require the probabilities $p_{ij}$ to be known. We stress that these probabilities are not the local preferences 
$q_{ij}$ (which are equal for all the parents of a node in the \Rgraph; cf. Algorithm~\ref{alg:build-vf-graph}), but other criteria based on which a node will break ties, such as, the router IP address or the time of the received BGP announcements~\cite{wei2018does}. In some cases, these criteria (and the respective probabilities) can be inferred from past measurements,\eg~\cite{muhlbauer2006building}. However, given no prior knowledge on the criteria or in the case where the tie-breaker values change over time, the probabilities can be set to equal values (uniformly) for all parents in the \Rgraph, \ie $p_{ij}=\frac{1}{|P_{i}|},~\forall j\in P_{i}$.
}

\begin{algorithm}[t]
\caption{Probabilistic route inference on the \Rgraph.
}
\label{alg:probabilistic-coloring-vf-graph}
\begin{small}
\begin{algorithmic}[1]
\Statex {\textbf{Input:} \Rgraph $\mathcal{G}_{R}(\mathcal{N}_{R}, \mathcal{E}_{R})$; \ppoints $\mathcal{M}$; routing function $f(i)$, $\forall i\in\mathcal{N}_{R}$; probabilities $p_{ij}$, $\forall e_{ji}\in\mathcal{E}_{R}$.}
\State $\pi_{i}(m)\leftarrow 0,~~\forall i\in\mathcal{N}_{R},m\in\mathcal{M}$ \LineComment{Initialization}
\State $\mathcal{T}\leftarrow$\textproc{TopologicalSort($\mathcal{G}_{R}$)}
\For {$i\in \mathcal{T}$}
	\If{$f(i)\neq 0$}
    	\State $\pi_{i}(f(i))\leftarrow 1$        
    \Else
    	\State $P_{i}\leftarrow \{j\in\mathcal{N}_{R}: e_{ji}\in\mathcal{E}_{R}\}$ \LineComment{Parents of $i$}
    	\For{$j\in P_{i}$}
        	\For{$m\in\mathcal{M}$}
                \State $\pi_{i}(m)\leftarrow \pi_{i}(m) + \pi_{j}(m)\cdot p_{ij}$        
            \EndFor
        \EndFor
    \EndIf
\EndFor
\State \Return $\pi_{i}(m),~\forall i\in\mathcal{N}_{R},m\in\mathcal{M}$
\end{algorithmic}
\end{small}
\end{algorithm}



\subsection{ \blue{Inference under Oracles}}\label{sec:measurement-enhanced-inference}
\blue{We proceed to study how to enhance the certain or probabilistic route inference}, when an ``oracle'' for the value of the routing function for a set of nodes $\mathcal{X}$, $\mathcal{X}\subset\mathcal{N}$, with previously uncertain routes ($f(i)=0$, $\forall i\in\mathcal{X}$), is given. Obviously, the values of $f$ for nodes in $\mathcal{X}$ are trivially inferred (from the oracle). However, here we show that \textit{an oracle for the routing function for a set of nodes $\mathcal{X}$, enables route inference for a --potentially-- larger set of nodes $\mathcal{Y}$, $\mathcal{Y}\supseteq \mathcal{X}$}.


\myitem{``Oracles'' in reality.} In practice, an ``oracle''  can be obtained by a measurement, such as BGP messages/RIBs collected at some node, \eg through a route collector~\cite{routeviews} (passive measurement), or traceroutes/pings \camera{(see, \eg~\cite{de2017broad})} 
from a node towards the destination node $n_{dst}$ (active measurement). In the remainder, we consider oracles in the context of a measurement, however, our methodology is valid in the general case, independently of how the oracle is obtained.

\textit{Remark:} Actual measurements are applicable only in the case of an existing deployment\blue{, where a destination node $n_{dst}$ has already established connections and announces prefixes to its neighbors}. The measurement-enhanced inference can then be useful for lightweight route inference, \eg with only a few, instead of exhaustive~\cite{de2017broad}, measurements. \blue{However, the oracle-enhanced inference techniques can be useful for planning purposes (hypothetical scenarios) as well, \eg identifying the optimal locations for installing monitoring equipment to efficiently monitor future deployments and routing configurations (see, \eg Section~\ref{sec:measurements}).}

 
We use again
the example of Fig.~\ref{fig:example-VF-graph} to demonstrate the measure\-ment-enhanced inference metho\-dology. The basic inference methodology (Sections~\ref{sec:vf-graph-based-inference} and~\ref{sec:probabilistic-inference}) 
cannot infer with certainty the values $f$ for nodes $n4$, $n6$, and $n8$ (see right column of the table in Fig.~\ref{fig:example-VF-graph}). By conducting measurements for some of these nodes, the following cases of route inference are possible.

\myitemit{Case A: The routes of the measured nodes are directly inferred.} When we measure a node $i$, we either learn its best path (\eg from BGP data, traceroutes) or through which \ppoint $m$ it routes traffic to $n_{dst}$ (pings~\cite{de2017broad}). In both cases, we can directly infer $f(i)$.

\myitemit{Case B: The routes of the \textit{children} of measured nodes might be inferred.} If node $n4$ is measured, then the route of $n6$ can be directly determined as well, since the eligible paths for $n6$ are through $n4$, and thus it must hold $f(n6)=f(n4)$. However, if $n6$ is measured, it is not always possible to infer the route of $n8$ as well: if $f(n6)=m2=f(n5)$, then we can infer $f(n8)=m2$, whereas if $f(n6)=m1\neq f(n5)$, then we cannot infer with certainty the route of $n8$.

\myitemit{Case C: The routes of the \textit{parents} of measured nodes might be inferred.} If $n6$ is measured, then we can directly infer the route for $n4$ (since, as discussed above, it must hold $f(n6)=f(n4)$). If $n8$ is measured there are two cases: (i) if $f(n8)=m1$, then, since $f(n5)=m2$ (see Fig.~\ref{fig:example-VF-graph}), we can infer that $n8$ selects its best path through $n6$ and thus $f(n6)=f(n8)$; (ii) if $f(n8)=m2$, then we cannot infer with certainty through which node is the best path of $n8$, and, in contrast to the previous case, we cannot infer $f(n6)$. 

Algorithm~\ref{alg:coloring-from-measurements} is based on the aforementioned guidelines to enhance the route inference in a \Rgraph, given a set of oracles.

\myitem{Input/Output.} Algorithm~\ref{alg:coloring-from-measurements} receives as input a \Rgraph, the \ppoints, the values of the routing function $f$ and the probabilities $\pi$ (which are calculated by Algorithms~\ref{alg:coloring-vf-graph} and~\ref{alg:probabilistic-coloring-vf-graph}, respectively), and a set of oracles that map nodes to \ppoints. It returns the updated values of the routing function $f$.

\myitem{Workflow.} For each node $i\in\mathcal{X}$ for which an oracle is provided, Algorithm~\ref{alg:coloring-from-measurements} calls the function $\textproc{SetRoute}$, which updates the routing function $f$ and probabilities $\pi$ (\textit{lines 1--5}). Specifically, $\textproc{SetRoute}$ sets the value of the routing function equal to the one of the provided oracle (\textit{line 8}), and updates the probabilities for node $i$ (\textit{lines 9--10}). Then, it finds the subset $CP_{i}$ of the parent nodes $P_{i}$ of $i$, which may route (or actually route) through the same \ppoint with $i$ (\textit{lines 11--12}). These are the candidate nodes that can be in the best path $\bestaspath{i}{n_{dst}}$. If there is only one such candidate parent node ($|CP_{i}|=1$), then with certainty this node has the same route with $i$. Hence, in case the route for this node is not already inferred ($f(CP_{i})=0$), there is a new inference for this node and $\textproc{SetRoute}$ is called. After making the inferences for the parents of $i$ (\textit{lines 13--15}), the algorithm proceeds to inference for the children nodes of $i$  (\textit{lines 16--26}). For each child $j$ without an inferred route (\textit{line 16}), it collects the distinct values of the routing function of its parents $P_{j}$ (\textit{lines 18--22}). If there is only one such value $CR_{j}$, and $CR_{j}\neq 0$, then it means that all the parent nodes of $j$ route traffic to $CR_{j}$ (in fact, in this case it holds that $CR_{j}\equiv m$). Thus an inference for the route of $j$ is possible, and $\textproc{SetRoute}$ is called. Finally, $\textproc{SetRoute}$ returns the updated $f$ and $\pi$.

\myitem{Complexity: $O(|\mathcal{N_R}|)$.} \blue{The method \textproc{SetRoute} is called at most once per node (even if called recursively), \ie up to $|\mathcal{N_R}|$ times; for more details see Theorem~\ref{thm:efficiency-measurement-algo} and its proof in Appendix~\ref{appendix:thm-efficiency-measurement-algo}.}


\begin{algorithm}[t]
\caption{Enhancing inference with measurements.
}
\label{alg:coloring-from-measurements}
\begin{small}
\begin{algorithmic}[1]
\Statex {\textbf{Input:} \Rgraph $\mathcal{G}_{R}(\mathcal{N}_{R}, \mathcal{E}_{R})$; \ppoints $\mathcal{M}$; routing function $f$; probabilities $\pi$; set of measured nodes $\mathcal{X}$ and their mapping ($\rhd$) to \ppoints.}
\For {$i\in\mathcal{X},~m\in\mathcal{M}$}
	\If{$i\rhd m$}
        \State $(f,\pi)\leftarrow\textproc{SetRoute}(i,m,f,\pi,\mathcal{G}_{R})$
    \EndIf
\EndFor
\State \Return f

\Statex 
\Function{SetRoute}{$i,m,f,\pi,\mathcal{G}_{R}$}
\State $f(i)\leftarrow m$
\State $\pi_{i}(m)\leftarrow 1$
\State $\pi_{i}(d)\leftarrow 0,~~\forall d\neq m$
\State $P_{i}\leftarrow \{j\in\mathcal{N}_{R}: e_{ji}\in\mathcal{E}_{R}\}$ \LineComment{Parents of $i$}
\State $CP_{i}\leftarrow \{j\in P_{i}: \pi_{j}(m)>0\}$ \LineComment{Candidate parents}
\If{$|CP_{i}|=1$ \textbf{and} $f(CP_{i})=0$}
	\State $(f,\pi)\leftarrow\textproc{SetRoute}(CP_{i},m,f,\pi,\mathcal{G}_{R})$
\EndIf
\State $C_{i}\leftarrow \{j\in\mathcal{N}_{R}: e_{ij}\in\mathcal{E}_{R}, f(j)=0\}$ 
\LineComment{Children of $i$ without inferred route}
\For {$j\in C_{i}$}
	\State $P_{j}\leftarrow \{k\in\mathcal{N}_{R}: e_{kj}\in\mathcal{E}_{R}\}$ 
    \State $CR_{j}\leftarrow \emptyset$ \LineComment{Candidate routes for $j$}
    \For {$k\in P_{j}$}
    	\State $CR_{j}\leftarrow CR_{j}\cup \{f(k)\}$
    \EndFor
    \If{$|CR_{j}|= 1$ \textbf{and} $CR_{j}\neq 0$}
    	\State $(f,\pi)\leftarrow\textproc{SetRoute}(j,m,f,\pi,\mathcal{G}_{R})$
    \EndIf
\EndFor
\State \Return $(f,\pi)$
\EndFunction
\end{algorithmic}
\end{small}
\end{algorithm}

\myitem{Problem properties and complexity.} As discussed in Section~\ref{sec:probabilistic-inference}, the \Rgraph is a BN. When an oracle is given, the probabilities in this BN can be updated to infer extra routes. However, updating exactly the probabilities $\pi$ is NP-hard (Lemma~\ref{thm:update-prob-np-hard}), since the \Rgraph is a multiply-connected BN (and not a \textit{polytree})~\cite{cooper1990computational}. However, efficient algorithms to \textit{approximate} the updated probabilities $\pi$ exist~\cite{van2008handbook}.

\begin{lemma}\label{thm:update-prob-np-hard}
Updating the probabilities $\pi$ in the \Rgraph to their new values $\pi'$ when an oracle is given, is NP-hard.
\end{lemma}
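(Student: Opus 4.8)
The plan is to recognize the update $\pi\mapsto\pi'$ as a posterior-marginal computation in the Bayesian network defined by the \Rgraph, and then reduce a known hard problem to it. Writing the oracle as evidence $E=\{X_x=m_x:x\in\mathcal{X}\}$, where $X_i$ denotes the random \ppoint reached by node $i$ (so that $Prob\{X_i=m\}=\pi_i(m)$ as in \eq{eq:definition-route-probability}), the updated quantity is exactly $\pi'_i(m)=Prob\{X_i=m\mid E\}$. By Bayes' rule this equals $Prob\{X_i=m,E\}/Prob\{E\}$, so the crux is the evidence probability $Prob\{E\}$; in particular, a polynomial-time exact update would in particular let us compute $Prob\{E\}$ and decide whether $Prob\{E\}>0$, i.e.\ whether \emph{any} joint configuration is consistent with the oracle. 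I would therefore reduce from a canonical hard problem (3SAT for NP-hardness of the associated decision, as in Cooper~\cite{cooper1990computational}; or a counting problem such as \#SAT for the exact value) to computing $\pi'$.

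The reduction exploits the combinatorial meaning of the conditional law implied by \eq{eq:route-probability-from-parents}: it is the ``copy a random parent'' model, so in a single realization each non-root node selects exactly one parent (parent $j$ with probability $p_{ij}$) and inherits its \ppoint, whence the value of a node is the \ppoint of the root reached by its chain of selected parents (the roots being the neighbors of $n_{dst}$, whose \ppoints are fixed by the mapping to $\mathcal{M}$; cf.\ Theorem~\ref{thm:vf-graph}). Consequently $Prob\{E\}$ is a weighted count of parent-selection assignments consistent with the oracle. I would build, from a formula $\phi$ with variables $v_1,\dots,v_n$ and clauses $C_1,\dots,C_r$, a variable gadget for each $v$ (a node with two incoming edges from two distinctly colored roots, whose selection encodes the truth value of $v$) and a clause gadget for each $C_j$ (a node whose parents are literal-reader nodes), and then place an oracle on each clause node at a designated color so that the observation is consistent iff that clause node can point to a ``satisfied''-colored parent. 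Conjunction over clauses is enforced by the conjunction of the per-clause oracles, so $Prob\{E\}>0$ (respectively, its exact value) encodes the satisfiability (respectively, the number of satisfying assignments) of $\phi$.

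The hard part will be implementing this logical structure using only the restricted conditional of \eq{eq:route-probability-from-parents}. Unlike a generic Bayesian-network reduction with arbitrary conditional tables, a copy node preserves its parent's \ppoint and acts independently, so it cannot natively compute negation (a positive and a negative literal-reader of the same variable inherit the same color) nor, by itself, an OR that reports a uniform ``satisfied'' color across a mixed-polarity clause. The logical constraints must therefore be carried entirely by the zero-probability (inconsistent) events created under conditioning, together with a careful assignment of \ppoint colors in which each clause owns its own observed color and each literal's satisfying branch is routed to that color only in the satisfying case. I expect essentially all of the technical effort to go into this color/negation bookkeeping and into verifying that the \emph{only} nonzero-probability configurations consistent with all oracles are exactly the satisfying assignments.

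Finally, I would check that the construction is a legitimate instance of the update problem: the gadget graph is acyclic and rooted at $n_{dst}$, of size polynomial in $|\phi|$, with valid probabilities $p_{ij}$ (e.g.\ uniform), so it is a bona fide \Rgraph whose pre-oracle probabilities $\pi$ are those produced by Algorithm~\ref{alg:probabilistic-coloring-vf-graph}. Since the reduction is polynomial and an exact update would resolve the encoded hard problem, updating $\pi$ to $\pi'$ is NP-hard. As a more robust alternative that sidesteps the delicate gadget bookkeeping, one may instead reduce generic Bayesian-network inference (NP-hard by~\cite{cooper1990computational}) directly to the \Rgraph update, simulating each arbitrary conditional table by auxiliary copy-nodes pinned through oracles; this makes explicit the sense in which the multiply-connected, non-polytree structure of the \Rgraph is the source of the intractability.
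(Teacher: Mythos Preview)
Your proposal is considerably more careful than the paper's own argument, which consists of two sentences: the \Rgraph is a multiply-connected (non-polytree) Bayesian network, and belief updating in such networks is NP-hard by Cooper~\cite{cooper1990computational}. The paper does not construct a reduction at all; it simply invokes the general hardness result.

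You take a genuinely different route, and in doing so you put your finger on a subtlety the paper elides. Cooper's result establishes hardness for Bayesian networks with \emph{arbitrary} conditional probability tables, whereas the \Rgraph carries only the restricted ``copy a random parent'' conditional of \eq{eq:route-probability-from-parents}. Hardness of the general class does not automatically descend to a restricted subclass, so strictly speaking the paper's citation is not a complete proof. Your plan to build an explicit 3SAT/\#SAT reduction inside this restricted model---with variable gadgets as two-parent selectors and clause gadgets pinned by oracles---is the right instinct, and your honest identification of the bottleneck (implementing negation and disjunction when every node merely copies a parent's color) is exactly where the work lies. Your alternative, simulating an arbitrary CPT by a cluster of copy-nodes whose internal choices are fixed via additional oracles, is arguably the cleaner way to close the gap and is the sense in which the paper's one-line appeal to~\cite{cooper1990computational} can be made rigorous.

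In short: your approach is sound and more scrupulous than the paper's; what remains is to actually carry out one of the two reductions you sketch, since at present the gadget bookkeeping is acknowledged but not executed. If your goal is only to match the paper, the two-sentence citation suffices; if your goal is a proof that stands on its own, finish the CPT-simulation reduction.
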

\begin{proof} The proof is given in Appendix~\ref{appendix:thm-update-prob-np-hard}
.
\end{proof}

Algorithm~\ref{alg:coloring-from-measurements} is based on BN belief propagation methods~\cite{van2008handbook}. The main difference is that it does not aim to update exactly all the probabilities $\pi$, but only the probabilities whose new value $\pi'$ is either $1$ or $0$. This is sufficient for a certain route inference (for the nodes for which this is possible), and can take place in polynomial time, as Theorem~\ref{thm:efficiency-measurement-algo} states.   

\begin{theorem}\label{thm:efficiency-measurement-algo}
Algorithm~\ref{alg:coloring-from-measurements} updates the probabilities $\pi$ for all nodes $i$ for which $\max_{m} \pi_{i}'(m) = 1$ holds, in polynomial time \blue{$O(\mathcal{N}_{R})$}.
\end{theorem}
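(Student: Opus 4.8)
The plan is to split the statement into three tasks and to treat the last as the real content: (i) bound the number of recursive invocations of \textproc{SetRoute}, which yields the running time; (ii) establish \emph{soundness}, i.e. every route the algorithm fixes is genuinely certain under the exact posterior $\pi'$; and (iii) establish \emph{completeness}, i.e. \emph{every} node whose posterior route becomes deterministic is in fact updated. I would prove (i) and (ii) as short structural lemmas and devote most of the effort to (iii).

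For the complexity bound I would first prove the guard lemma: every call of \textproc{SetRoute} on a node $i$ begins by setting $f(i)$ to a non-zero value (line~8), while every \emph{recursive} call is reached only after testing that the target still has $f=0$ — for a parent through the condition $f(CP_{i})=0$, and for a child through the construction $C_{i}=\{j:\dots,f(j)=0\}$ together with the $CR_{j}$ test. Since $f$ is monotone (only ever changed from $0$ to some $m\in\mathcal{M}$), no node can be the target of a second productive recursive call, so \textproc{SetRoute} performs a non-trivial update at most once per node; this bounds the number of such calls by $|\mathcal{N}_{R}|$ and hence gives $O(|\mathcal{N}_{R}|)$, each call touching only the local neighbourhood of its argument.

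For soundness I would check that each of the three triggers is truth-preserving with respect to the exact posterior obtained by conditioning the \Rgraph, viewed as a Bayesian network (cf.\ Lemma~\ref{thm:update-prob-np-hard}), on the oracle. The oracle assignment is correct by definition. For the backward rule (Case~C), if $i$ is certain to reach $m$ and $CP_{i}=\{j\}$ is the unique parent with $\pi_{j}(m)>0$, then in every oracle-consistent world $i$ must select $j$ (no other parent can deliver $m$, since $\pi_{k}(m)=0$ means $k$ reaches $m$ in no world, a property conditioning cannot undo) and $j$ must itself reach $m$; thus $\pi'_{j}(m)=1$. For the forward rule (Case~B), if every parent of $j$ is certain to reach the same $m$ ($CR_{j}=\{m\}$), then whichever parent $j$ selects yields $m$, exactly as in Case~B of Algorithm~\ref{alg:coloring-vf-graph}. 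Because the edges of the \Rgraph encode precisely the eligible paths (Theorem~\ref{thm:vf-graph}), these local arguments are exhaustive, so every node the algorithm marks is genuinely certain.

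Completeness is the crux and where I expect the difficulty. The goal is to show that the set of nodes fixed by the algorithm equals the propagation closure of the oracle set under the forward (Case~B) and backward (Case~C) rules, and that this closure already contains every $i$ with $\max_{m}\pi'_{i}(m)=1$. The structural fact I would exploit is that the tie-break probabilities are strictly positive, $p_{ij}>0$ for all $j\in P_{i}$, so an unobserved node's parent selection is genuinely free unless a downstream oracle constrains it. Hence, for a node $i$ that becomes certain to $m$, either (a) its selection is unconstrained, and positivity forces \emph{all} parents to be certain-to-$m$ (otherwise a consistent world routing $i$ elsewhere would exist), triggering Case~B; or (b) its selection is pinned, which can only arise through a chain of unique-candidate constraints emanating from an oracle, i.e.\ precisely the backward recursion of Case~C. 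I would formalize this dichotomy by a minimal-counterexample argument: take a certain node that the algorithm misses and has least ``derivation rank,'' and show one of the two rules must apply, contradicting minimality. The delicate point — and the main obstacle — is that forward and backward propagation interleave (a backward inference pins a parent, which can then make a sibling subtree forward-certain), so the induction cannot be a plain topological sweep; instead I would induct on the length of the shortest certainty derivation and verify that \textproc{SetRoute} expands both the parent frontier (lines~11--15) and the child frontier (lines~16--26) at every newly fixed node, so the computed set is closed under both rules and therefore contains the full closure.
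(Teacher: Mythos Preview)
Your proposal is correct and uses the same three-part decomposition (complexity, correctness/soundness, completeness) as the paper's proof in Appendix~\ref{appendix:thm-efficiency-measurement-algo}. The complexity and soundness arguments are essentially identical to the paper's: the once-per-node bound via the $f=0$ guards, and the case analysis of the forward and backward rules.

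The completeness argument differs in justification. The paper invokes Bayesian-network theory: a node is conditionally independent of the rest given its Markov blanket (parents, children, spouses), and the algorithm visits exactly the Markov blanket of each newly fixed node, recursing when a blanket member changes; hence no dependent node is missed. Your route is more elementary and combinatorial: you use strict positivity of the tie-break weights $p_{ij}$ to force a dichotomy (either all parents are already certain-to-$m$, or the node's selection is pinned by a unique-candidate chain from an oracle), then close under the two rules by a minimal-counterexample induction on derivation rank. Your approach avoids the appeal to general BN conditional-independence facts and makes explicit \emph{why} the two local rules B and C exhaust the ways a node can become certain, which the paper's Markov-blanket sentence leaves implicit; the paper's version, in turn, is considerably shorter and leans on a standard reference~\cite{pearl1988prob-reasoning}. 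Either argument works, and your more detailed treatment of the forward/backward interleaving is a genuine improvement over the paper's sketch.
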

\begin{proof} 
The proof is given in Appendix~\ref{appendix:thm-efficiency-measurement-algo}.
\end{proof}

\subsection{Preference of Shorter Paths}\label{sec:weak-path-inference}
\blue{
The \Rgraph encodes all eligible paths, given the set of local preferences $\mathcal{Q}$. In practice, a node commonly \textit{prefers the shortest (in terms of AS-hops) among the paths learned from neighbors of equal local preference} (\ie its parents in the \Rgraph)~\cite{ciscobestpath}.
This common behavior is widely considered in related work as well, \eg~\cite{anwar2015investigating, gill2011let,c-bgp
}. Hence, route inference under the assumption of shortest path preference is relevant to real network operations.
}

\blue{
Here, we show how to incorporate the shortest path preference in our methodology. We do this in Algorithm~\ref{alg:transform-vf-bn-weak}, by modifying the \Rgraph to eliminate the eligible paths that are always longer and thus never preferred by a node.
}
Specifically, assuming preference of shorter paths, means that not all the paths in the \Rgraph are eligible anymore. For example, in the \Rgraph of Fig.~\ref{fig:example-VF-graph}, node $n7$ has two paths; however, the path through $n1$ is shorter and preferred. The path through $n3$ is not eligible anymore, and thus the edge between $n3$ and $n7$ must be removed.

\myitem{Input/Output.} Algorithm~\ref{alg:transform-vf-bn-weak} receives as input the \Rgraph, modifies it, and returns the modified \Rgraph.

\myitem{Workflow.} A minimum length (of eligible paths) $L_{i}$ is set for each node $i$, and is initialized to $0$ for $n_{dst}$, and to $\infty$ for every other node (\textit{line 1}). $L_{i}$ denotes the minimum length of the eligible paths $\aspath{i}{n_{dst}}$. A node will prefer the shorter paths, and thus the objective is to remove the longer paths of a node from the \Rgraph. To this end, starting from nodes closer to $n_{dst}$ and following a topological sort, the set of parents $P_{i}$ of the node $i$ is calculated, and the value of $L_{i}$ is set equal to the minimum value $L_{j}$, $j\in P_{i}$, plus one (\textit{lines 3--7}). The parents that have longer paths to $n_{dst}$ will never be preferred by a node $i$. Hence, the incoming edges to $i$ from such parents are removed from the \Rgraph (\textit{line 8}).  

\myitem{Complexity: $O(|\mathcal{N}_{R}|+ |\mathcal{E}_{R}|)$.} \blue{The complexity of the topological sort in \textit{line 2} is $O(|\mathcal{N}_{R}|+ |\mathcal{E}_{R}|)$, and this of the loop in \textit{lines 3--9} is $O(|\mathcal{E}_{R}|)$. Similarly, Algorithm~\ref{alg:transform-vf-bn-weak} is used with Algorithm~\ref{alg:coloring-vf-graph} (see Table~\ref{table:methodology-overview}), which means that the topological sort is calculated only once.}


\begin{theorem}\label{thm:weak-paths-increase-coloring}
Applying Algorithm~\ref{alg:transform-vf-bn-weak} on a \Rgraph, can only increase (not decrease) the set of nodes with certain routes. 
\end{theorem}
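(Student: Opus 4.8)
The plan is to compare the routing function produced by Algorithm~\ref{alg:coloring-vf-graph} on the original \Rgraph $\mathcal{G}_R$ with the one it produces on the graph $\mathcal{G}_R'$ output by Algorithm~\ref{alg:transform-vf-bn-weak}, and to show that every node that is certain in $\mathcal{G}_R$ stays certain, with an unchanged route, in $\mathcal{G}_R'$. First I would record three structural facts about $\mathcal{G}_R'$. (i) Algorithm~\ref{alg:transform-vf-bn-weak} only deletes edges and never touches nodes, so $\mathcal{G}_R'$ is a spanning subgraph of $\mathcal{G}_R$ with $\mathcal{E}_R' \subseteq \mathcal{E}_R$; consequently every parent set shrinks, $P_i' \subseteq P_i$. (ii) A linear order that is topological for $\mathcal{G}_R$ is automatically topological for any edge-subgraph, so I can run both instances of Algorithm~\ref{alg:coloring-vf-graph} with the \emph{same} ordering $\mathcal{T}$, and the parents of $i$ in either graph all precede $i$. (iii) For each node $i$ the deletion in line~8 keeps exactly the parents attaining $\min_{j\in P_i} L_j$; since this minimum is attained, $P_i' \neq \emptyset$ whenever $P_i \neq \emptyset$, so reachability from $n_{dst}$ is preserved.

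Let $f$ and $f'$ be the routing functions computed by Algorithm~\ref{alg:coloring-vf-graph} on $\mathcal{G}_R$ and $\mathcal{G}_R'$ respectively. The heart of the proof is the invariant
\[
f(i)\neq 0 \;\Longrightarrow\; f'(i)=f(i),
\]
which I would establish by induction along $\mathcal{T}$. For the base case, $n_{dst}$ and its direct neighbors are assigned their $f$-values directly from the \ppoint mapping (lines~2--7), independently of the edges, so $f'$ and $f$ agree there. For the inductive step, suppose $f(i)=m\neq 0$. By the definition of $f$ via Algorithm~\ref{alg:coloring-vf-graph}, this forces $P_i\neq\emptyset$ and $f(j)=m$ for \emph{every} parent $j\in P_i$ (otherwise $CR_i$ would contain $0$ or more than one value). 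The retained parent set $P_i'$ is a non-empty subset of $P_i$ by fact~(iii), and each $j\in P_i'$ precedes $i$ in $\mathcal{T}$ with $f(j)=m\neq 0$; hence the induction hypothesis gives $f'(j)=m$. Therefore the candidate-route set computed for $i$ in $\mathcal{G}_R'$ equals $\{m\}$, and Algorithm~\ref{alg:coloring-vf-graph} sets $f'(i)=m=f(i)$.

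Granting the invariant, $\{i: f(i)\neq 0\}\subseteq\{i: f'(i)\neq 0\}$, which is exactly the claim that the set of nodes with certain routes cannot decrease (the example of $n7$ in Fig.~\ref{fig:example-VF-graph} shows it can strictly increase). The one point that requires care, and the only place the argument could go wrong, is that $f'$ is a genuinely different function from $f$: dropping parents does not obviously shrink $CR_i$, because the \emph{values} $f'(j)$ at the surviving parents might a priori differ from $f(j)$. The inductive invariant is precisely what rules this out: it guarantees that each surviving (shortest-path) parent carries over its original certain value, so the kept parents remain mutually consistent. Establishing this invariant, rather than arguing about $CR_i$ directly, is the main obstacle, and the induction over the shared topological order is what makes it rigorous.
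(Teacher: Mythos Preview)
Your proposal is correct and follows essentially the same approach as the paper: both arguments rest on the observation that Algorithm~\ref{alg:transform-vf-bn-weak} only deletes edges (hence parent sets shrink), combined with the characterization that $f(i)=m\neq 0$ forces all parents of $i$ to already have certain route $m$. The paper gives only a two-sentence sketch, whereas you make the argument rigorous by (a) formulating the explicit invariant $f(i)\neq 0\Rightarrow f'(i)=f(i)$, (b) proving it by induction along a shared topological order, and (c) flagging the subtle point that the surviving parents' $f'$-values are not \emph{a priori} the same as their $f$-values---a gap the paper's sketch glosses over with ``removing a parent changes neither the route of the other parents.'' Your fact~(iii), that at least one parent always survives, is also a detail the paper omits but which is needed to ensure $CR_i'\neq\emptyset$.
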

\begin{proof}
We provide a sketch of the proof in Appendix~\ref{appendix:thm-weak-paths-increase-coloring}.
\end{proof}

\begin{algorithm}[t]
\caption{\Rgraph transformation for shortest path preference.}
\label{alg:transform-vf-bn-weak}
\begin{small}
\begin{algorithmic}[1]
\Statex {\textbf{Input:} \Rgraph $\mathcal{G}_{R}(\mathcal{N}_{R}, \mathcal{E}_{R})$; destination node $n_{dst}$.}
\Statex {\textbf{Ouput:} (updated) \Rgraph $\mathcal{G}_{R}(\mathcal{N}_{R}, \mathcal{E}_{R})$.}
\State $L_{n_{dst}}\leftarrow 0$; $L_{i}\leftarrow \infty,~\forall i\in\mathcal{N}_{R}\backslash\{n_{dst}\}$ \LineComment{Initialization}
\State $\mathcal{T}\leftarrow$\textproc{topological\_sort($\mathcal{G}_{R}$)}
\For {$i\in \mathcal{T}\backslash\{n_{dst}\}$}
	\State $P_{i}\leftarrow \{j\in\mathcal{N}_{R}: e_{ji}\in\mathcal{E}_{R}\}$
	\For {$j\in P_{i}$}
    	\State $L_{i} \leftarrow \min\{L_{i}, L_{j}+1\}$
    \EndFor
    \State $\mathcal{E}_{R} \leftarrow \mathcal{E}_{R} - \{e_{ji}\in\mathcal{E}_{R}: L_{j}+1 > L_{i}\}$
\EndFor
\State \Return $\mathcal{G}_{R}(\mathcal{N}_{R}, \mathcal{E}_{R})$
\end{algorithmic}
\end{small}
\end{algorithm}

\section{Use Case: Efficient Measurements}\label{sec:measurements}
In this section, we investigate how to efficiently select measurements in order to increase the (certain) inference under a routing configuration. Specifically, we consider the following problem.  

\myitem{The problem.} \textit{Given a budget of $B$ measurements, what is the optimal set of nodes to be measured that maximizes the (certain) route inference 
in the \Rgraph?}

The above problem may emerge in the context of a number of measurement-related applications in the Internet, such as how to efficiently select a set of vantage points from which to trigger data-plane measurements (\eg select the best set of RIPE Atlas probes~\cite{ripeatlas}, given a limit on measurement credits), or how to optimally deploy monitoring infrastructure for passive (\eg route collectors) or active (\eg probes) measurements.

In the remainder, we study this problem: \blue{in Section~\ref{sec:max-inf-catch-problem}} we show that it is hard to be solved exactly or even approximated (since it requires exponential --to the number of nodes-- complexity), and \blue{in Section~\ref{sec:max-inf-catch-greedy} we} propose a greedy algorithm for efficient measurement selection, leveraging the \Rgraph{'s} structure and properties.

\subsection{Problem Formulation and Properties}\label{sec:max-inf-catch-problem}

\myitem{Problem formulation.} Let $\mathcal{X}$, $\mathcal{X}\subseteq \mathcal{N}_{R}$, be a set of nodes for which we have an oracle (i.e., route measurement), and let $x$, $x\in\mathcal{M}^{|\mathcal{X}|}$, the routes of nodes in $\mathcal{X}$ (\ie $x$ is a vector of size $|\mathcal{X}|$, taking values in state space $\mathcal{M}^{|\mathcal{X}|}$). 
We will denote $\mathcal{X}\rhd x$. For example, if $\mathcal{X}$ consists of three nodes $\{n1, n2, n3\}$, which route to \ppoints $\{m1,m2\}$ as follows: $n1\rhd m1,n2\rhd m1,n3\rhd m2 $, then we denote $x=\{m1,m1,m2\}$.

Given a set $\mathcal{X}$ and its routes $x$, we denote as $\mathcal{NC}_{R}(\mathcal{X}\rhd x)$ the \textit{number of nodes with a certain route} given these oracles:
\begin{equation}\label{eq:definition-conditional-coverage}
\mathcal{NC}_{R}(\mathcal{X}\rhd x) = |\left\{i\in\mathcal{N}_{R}: f(i)\neq 0 | \mathcal{X}\rhd x\right\}|    
\end{equation}
Note that 
we cannot know through which \ppoint each measured node routes its traffic before conducting a measurement. Hence, to evaluate the effectiveness of selecting a set of nodes, we consider all the possible measurement outcomes $x$, $x\in\mathcal{M}^{|\mathcal{X}|}$. To this end, we denote the \textit{expected number of nodes with a certain route}, under a set of measured nodes $\mathcal{X}$ as:
\begin{equation}\label{eq:objective-function-greedy}
E_{P}\left[\mathcal{NC}_{R}(\mathcal{X})\right] = \sum_{x\in\mathcal{M}^{|X|}} \mathcal{NC}_{R}(\mathcal{X}\rhd x)\cdot P(\mathcal{X}\rhd x)     
\end{equation}
where $P(\mathcal{X}\rhd x)$ denotes the probability of realization of the measurements outcome $x$.

Then, given a budget of at most $B$ measurements, and a set $\mathcal{Y}$, $\mathcal{Y}\subseteq\mathcal{N}_{R}$ of nodes which can be measured (e.g., for measurements with RIPE Atlas, $\mathcal{Y}$ can be the set of ASes that host at least one probe), the optimization problem can be expressed as\footnote{Generalizations of the problem can be expressed as well, \eg by weighting with $w_{i}$ (\eg based on the incoming traffic load from $i$) the importance of knowing the route of each node $i$ and modifying the definition of the  objective function in \eq{eq:definition-conditional-coverage} as $\mathcal{NC}_{R}(\mathcal{X}\rhd x) = \sum_{i\in\{\mathcal{N}_{R}: f(i)\neq 0 | \mathcal{X}\rhd x\}} w_{i}$, and/or assigning different measurement costs $c_{i}$ per node $i$ by modifying the constraint as $\sum_{i\in \mathcal{X}}c_{i}\leq B$.}:


\begin{problem}\label{problem:budgeted-measurements-certain}
~\hfill~$
\max_{\mathcal{X}\subseteq\mathcal{Y}} E_{P}\left[\mathcal{NC}_{R}(\mathcal{X})\right],~\text{~~~~}~s.t.~~|\mathcal{X}|\leq B
$
\end{problem}

\myitem{Modularity of the objective and the greedy algorithm.} Problem~\ref{problem:budgeted-measurements-certain} belongs to the class of combinatorial problems of maximizing a set function under a cardinality constraint. Lemma~\ref{lemma:properties-problem-certain} summarizes the properties of the objective function of Problem~\ref{problem:budgeted-measurements-certain}, which allow us to characterize its complexity and approximability. 
\begin{lemma}\label{lemma:properties-problem-certain}
The objective function of Problem~\ref{problem:budgeted-measurements-certain} is (i) non-negative and monotone, (ii) non-submodular, (iii) non-supermodular.
\end{lemma}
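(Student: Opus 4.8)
The plan is to write the objective as $g(\mathcal{X}):=E_{P}[\mathcal{NC}_{R}(\mathcal{X})]$ and to build all three parts on one structural fact: \emph{augmenting the oracle set can never destroy an already-certain route}. For fixed oracles $\mathcal{X}\rhd x$ let $\Omega(\mathcal{X}\rhd x)$ be the set of globally consistent route assignments, \ie all ways of selecting for every node of the \Rgraph a single best-path parent consistent with $\mathcal{X}\rhd x$; by Theorem~\ref{thm:vf-graph} these are exactly the assignments realizable under some tie-breaking, and $f(i)\neq0$ holds iff all members of $\Omega(\mathcal{X}\rhd x)$ route $i$ through the same \ppoint. Appending one oracle only adds a constraint, so $\Omega\big((\mathcal{X}\cup\{v\})\rhd(x,m)\big)\subseteq\Omega(\mathcal{X}\rhd x)$, and unanimity is inherited by subsets; hence $\mathcal{NC}_{R}(\mathcal{X}\rhd x)\le\mathcal{NC}_{R}\big((\mathcal{X}\cup\{v\})\rhd(x,m)\big)$ for every extension $m$. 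Non-negativity of $g$ is then immediate, and for monotonicity I would marginalize over the new coordinate via $P(\mathcal{X}\rhd x)=\sum_{m}P\big((\mathcal{X}\cup\{v\})\rhd(x,m)\big)$, giving
\begin{align*}
g(\mathcal{X})&=\sum_{x}\sum_{m}\mathcal{NC}_{R}(\mathcal{X}\rhd x)\,P\big((\mathcal{X}\cup\{v\})\rhd(x,m)\big)\\
&\le\sum_{x}\sum_{m}\mathcal{NC}_{R}\big((\mathcal{X}\cup\{v\})\rhd(x,m)\big)\,P\big((\mathcal{X}\cup\{v\})\rhd(x,m)\big)=g(\mathcal{X}\cup\{v\}),
\end{align*}
so $g$ is monotone, settling part~(i).

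For part~(ii) I would exhibit a minimal \Rgraph containing an ``AND gadget'': two neighbors $p_{1}\rhd m_{1}$ and $p_{2}\rhd m_{2}$ of $n_{dst}$; two nodes $a,b$ that each have both $p_{1}$ and $p_{2}$ as parents (so each is uncertain); and a node $w$ whose parents are exactly $a$ and $b$ (so $w$ becomes certain iff $a$ and $b$ route identically). Under uniform tie-breaking a direct count gives $g(\emptyset)=2$, $g(\{a\})=g(\{b\})=3$, and $g(\{a,b\})=\tfrac{9}{2}$. Hence the marginal gain of $b$ \emph{rises} from $g(\{b\})-g(\emptyset)=1$ to $g(\{a,b\})-g(\{a\})=\tfrac{3}{2}$ as the conditioning set grows from $\emptyset$ to $\{a\}$, which contradicts submodularity.

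For part~(iii) I would instead use a ``redundancy gadget'': an uncertain node $a$ with parents $p_{1}\rhd m_{1}$ and $p_{2}\rhd m_{2}$, together with a node $c$ whose \emph{only} parent is $a$. Once $a$ is known, $c$ is automatically certain (Case~B of Section~\ref{sec:vf-graph-based-inference}); conversely, measuring $c$ forces $a$ (Case~C). This yields $g(\emptyset)=2$, $g(\{a\})=g(\{c\})=4$, and $g(\{a,c\})=4$, so the marginal gain of $c$ \emph{falls} from $g(\{c\})-g(\emptyset)=2$ to $g(\{a,c\})-g(\{a\})=0$ on the larger set, contradicting supermodularity.

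The principal difficulty is the accounting in the two gadgets: one must sum $\mathcal{NC}_{R}(\mathcal{X}\rhd x)$ over all outcomes $x\in\mathcal{M}^{|\mathcal{X}|}$ with their probabilities and verify that no \emph{unintended} propagation of certainty occurs elsewhere that could perturb the marginal gains. Keeping each gadget minimal---so that only $a,b,c,w$ can change status---makes these finite sums explicit and renders the strict inequalities transparent, while the unanimity argument of the first paragraph is the conceptual backbone guaranteeing that the computed inequalities carry the stated signs.
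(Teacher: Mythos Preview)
Your proposal is correct. Part~(i) is argued more carefully than in the paper: the paper gives a verbal justification that an oracle ``cannot change the (certainly inferred) route of a node,'' whereas your set $\Omega(\mathcal{X}\rhd x)$ of consistent global assignments and the observation that unanimity is inherited by subsets makes the monotonicity inequality explicit after marginalizing over the new coordinate.

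For part~(ii), your AND gadget ($a,b$ each with parents $p_{1},p_{2}$; $w$ with parents $a,b$) is exactly the paper's second counterexample, with the same arithmetic specialized to uniform tie-breaking. For part~(iii), you take a genuinely different route: the paper's gadget has $n_{1}$ with parents $m_{1},m_{2}$ and $n_{2}$ with parents $n_{1},m_{2}$, so that measuring $n_{1}\rhd m_{2}$ forward-infers $n_{2}$ with probability $1-p$, and measuring $n_{2}\rhd m_{1}$ back-infers $n_{1}$ with probability $pq$; this yields $\Delta_{A}=2-p\geq 1\geq 1-pq=\Delta_{B}$. Your redundancy gadget with $c$ having $a$ as its \emph{sole} parent is simpler and gives the cleaner drop $2\to 0$; the price is that your gadget relies on back-inference (Case~C) always succeeding, whereas the paper's example shows non-supermodularity even when back-inference succeeds only partially. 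Both are valid; yours is more transparent, the paper's is slightly more robust to the choice of edge probabilities.
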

\begin{proof}
The proof is given in Appendix~\ref{appendix:thm-properties-modularity}.
\end{proof}

On the one hand, if the objective function of Problem~\ref{problem:budgeted-measurements-certain} was \textit{submodular}, then applying a greedy algorithm, of polynomial to $\mathcal{N}_{R}$ number of evaluations of the objective function $E_{P}\left[\mathcal{NC}_{R}(\mathcal{X})\right]$,  would come with an approximation guarantee of $1-1/e$ of the optimal solution~\cite{krause2012submodular}. On the other hand, if it was \textit{supermodular}, then the problem would be NP-hard to approximate~\cite{krause2012submodular}\footnote{Maximizing a super-modular function is equivalent to minimizing a sub-modular function, which is NP-hard when the size of the set is constrained.}. 
However, in the generic case of the \Rgraph we consider, with a monotone neither submodular, nor supermodular, objective, it has been recently shown that applying a greedy algorithm still comes with approximation guarantees (however, worse than in the case of a submodular function $1-1/e$) and usually the performance in practice is not far from the optimal~\cite{bian2017guarantees}. Therefore, in the following, we design a greedy algorithm for Problem~\ref{problem:budgeted-measurements-certain}, which starts with an empty set $\mathcal{X}^{0}=\emptyset$, and at each step $k$ adds to set $\mathcal{X}^{k-1}$ the node that increases the most the expected number of nodes with certain inference, i.e., 
\begin{equation*}
    \mathcal{X}^{k} = \mathcal{X}^{k-1}\cup \textstyle\arg\max_{i\in \mathcal{Y} \backslash \mathcal{X}^{k-1}} E_{P}\left[ \mathcal{NC}_{R}( \mathcal{X}^{k-1}\cup \{i\} ) \right]
\end{equation*}


\textit{Remark:} The approximation of the greedy algorithm depends on the \textit{submodularity ratio} and \textit{curvature} of the objective function, which in our case is determined by the structure of the \Rgraph~\cite{bian2017guarantees}. While deriving approximation guarantees as a function of structure and properties of the \Rgraph is an interesting research direction, it is out of the scope of this paper, and we defer it to future work. 

\myitem{Complexity in evaluating the objective.} A second challenge in solving Problem~\ref{problem:budgeted-measurements-certain}, even with a greedy algorithm, is that the evaluation of the objective function (\eq{eq:objective-function-greedy}) in each step, involves the calculation of the probabilities $P$, which may require also exponential to $|\mathcal{N}|$ time (see Section~\ref{sec:measurement-enhanced-inference}). We demonstrate this with the following example. Let $\mathcal{X}^{k}$ be the set of the first $k$ nodes selected by the greedy (or, any) algorithm, and a node $j\notin \mathcal{X}^{k}$. To evaluate the value of the objective function when adding node $j$ to the set of measurements, we need to proceed as follows:
\begin{align*}
E_{P}\left[ \mathcal{NC}_{R}( \mathcal{X}^{k}\cup \{j\} ) \right]
&= 
\sum_{x}\sum_{m} \mathcal{NC}_{R}( \mathcal{X}^{k} \cup \{j\} \rhd x\cup m)\cdot P(\mathcal{X}^{k}\cup\{j\}\rhd x\cup m) \\ 
&= \sum_{x}\sum_{m} \mathcal{NC}_{R}( \mathcal{X}^{k}\cup \{j\}\rhd x\cup m )\cdot P(j\rhd m | \mathcal{X}^{k}\rhd x) \cdot P(\mathcal{X}^{k}\rhd x)
\end{align*}
where we applied the Bayes theorem to express the joint probability as a product of the conditional probability. 


In the last equation, we can calculate the terms $\mathcal{NC}_{R}( \mathcal{X}^{k}\cup \{j\}\rhd x\cup m )$ using Algorithm~\ref{alg:coloring-from-measurements} (in $O(N)$ steps), and the terms $P(\mathcal{X}^{k}\rhd x)$ are already calculated in the $k-1$ step of the greedy algorithm. The remaining terms $P(j\rhd m | \mathcal{X}^{k}\rhd x)$ correspond to the updated probabilities $\pi_{j}$ for node $j$, given the set of oracles $\mathcal{X}^{k}\rhd x$. As discussed in Section~\ref{sec:measurement-enhanced-inference}, the exact calculation of the updated probabilities $\pi$ is NP-hard. 

In the greedy algorithm we propose, we trade accuracy for efficiency in the calculations for $\pi$ at each step, and update the probabilities $\pi$ with an approximate (``belief propagation'') method. 

\subsection{A Greedy Algorithm}\label{sec:max-inf-catch-greedy}
We present the greedy algorithm we propose for Problem~\ref{problem:budgeted-measurements-certain}, which is built upon the aforementioned guidelines. 

\myitem{Input/Output.} Algorithm~\ref{alg:greedy-measurements} receives as input a \Rgraph, the values of the routing function $f$ and the probabilities $\pi$, a set of nodes $\mathcal{Y}$ that are eligible to be measured, and a measurement budget $B$. It returns a set $\mathcal{X}$ of size $B$, containing the nodes to be measured
. 

\myitem{Workflow.} After the initialization (\textit{\red{line 1}}), Algorithm~\ref{alg:greedy-measurements} enters the greedy node selection loop (\textit{\red{lines 2--6}}), where at each iteration a node $i$ is added to the set of measured nodes $\mathcal{X}$ (\textit{\red{line 4}}). The node that is added is the one that --if measured-- increases the most the expected number of nodes with a certain route (\textit{\red{line 3}}). The expectation is calculated by \eq{eq:objective-function-greedy} using the probabilities $P$, i.e., 
\begin{align}
P(\mathcal{X}\cup\{j\}\rhd x\cup m) 
&= P(j\rhd m|\mathcal{X}\rhd x)\cdot P(\mathcal{X}\rhd x)
= \pi_{j}^{(\mathcal{X}\rhd x)}(m)\cdot P(\mathcal{X}\rhd x) \label{eq:conditional-probability-P-pi}
\end{align}
where we denote $\pi_{j}^{(\mathcal{X}\rhd x)}(m)=P(j\rhd m|\mathcal{X}\rhd x)$.~Note that $\pi_{j}^{(\emptyset\rhd x)}(m) = \pi_{j}(m)$. After adding node $i$ to set $\mathcal{X}$, the probabilities $\pi^{(\mathcal{X}\rhd x)}$ and $P(\mathcal{X}\rhd x)$, which will be needed in the next iteration, are calculated using the approximate method \textproc{UpdateProbabilities} (\textit{\red{line 5}}).


The method \textproc{UpdateProbabilities} calculates the probabilities $P(\mathcal{X}\cup\{j\}\rhd x\cup m)$ and $\pi_{j}^{(\mathcal{X}\rhd x)}(m)$, $\forall$ possible measurement outcomes (\textit{\red{lines 9--15}}). The former probabilities are calculated by using \eq{eq:conditional-probability-P-pi} and previous values (\textit{\red{line 10}}). The latter probabilities are calculated approximately in \textit{\red{lines 12--14}}. First, for the given outcome $\mathcal{X}\cup\{j\}\rhd x\cup m$, Algorithm~\ref{alg:coloring-from-measurements} is used to calculate the set of nodes with certain inference $\mathcal{Z}$ (\textit{\red{lines 12--13}}). We remind that Algorithm~\ref{alg:coloring-from-measurements} (called in \textit{\red{line 11}}) is a belief propagation method to update the probabilities of all the nodes with a certain route inference after a measurement/oracle is given. For the remaining nodes (with uncertain inference), we approximately update their probabilities by taking into account the inference for nodes in $\mathcal{Z}$ and applying only forward belief propagation in the \Rgraph (i.e., only in the direction of its edges). In other words, when a certain inference is made for a node $i\in\mathcal{Z}$, we consider only its effect on the probabilities of the (direct and indirect) children of $i$, and neglect the effect on the probabilities of its parents. This can be done by removing from the \Rgraph all the incoming edges to nodes in $\mathcal{Z}$ (\textit{\red{line 13}}), and then applying Algorithm~\ref{alg:probabilistic-coloring-vf-graph} (\textit{\red{line 14}}), which starts at the roots of the \Rgraph and through forward belief propagation calculates the probabilities $\pi$ for all nodes. 

Finally, we would like to remark that considering only forward belief propagation is the most reasonable choice in many use cases of our framework; namely, when the detailed preferences of a node $i$ to its parents $p_{ij}$ are not known and their values are arbitrarily set, \eg to equal values among all parents (see discussion in Section~\ref{sec:probabilistic-inference}).

\begin{algorithm}[t]
\caption{Selection of the set of nodes to be measured.
}
\label{alg:greedy-measurements}
\begin{small}
\begin{algorithmic}[1]
\Statex {\textbf{Input:} \Rgraph $\mathcal{G}_{R}(\mathcal{N}_{R}, \mathcal{E}_{R})$; \ppoints $\mathcal{M}$; routing function $f$; probabilities $\pi$; set of measurement-eligible nodes $\mathcal{Y}$, with $\mathcal{Y}\subseteq \mathcal{N}_{R}$; measurement budget $B$, with $B<|\mathcal{Y}|$.}
\State $\mathcal{X}\leftarrow \emptyset;~~P(\emptyset\rhd m)\leftarrow 1;~~P(i\rhd m) \leftarrow \pi_{i}(m),~~~\forall i\in\mathcal{Y}, m\in\mathcal{M}$
\LineComment{Initialization}
\While {$|\mathcal{X}| < B$} \LineComment{The greedy node selection loop}
    \State $i \leftarrow \arg\max_{j\in \mathcal{Y} \backslash \mathcal{X}} E_{(P,\pi)}\left[ \mathcal{NC}_{R}( \mathcal{X}\cup \{j\} ) \right]$
    \State $\mathcal{X} \leftarrow \mathcal{X}\cup \{i\}$
    \State $\pi,P \leftarrow \textproc{UpdateProbabilities}\left(\mathcal{X}, i, \pi , P\right)$
\EndWhile
\State \Return $\mathcal{X}$

\Statex 
\Function{UpdateProbabilities}{$\mathcal{X}, i, \pi, P $}
\For{$x\in \mathcal{M}^{|\mathcal{X}|}, m\in\mathcal{M}$} 

\State $P(\mathcal{X}\cup\{j\}\rhd x\cup m) \leftarrow P(\mathcal{X}\rhd x)\cdot \pi_{i}^{(\mathcal{X}\rhd x)}(m)$
\State $f\leftarrow \textproc{Algorithm\ref{alg:coloring-from-measurements}}(\mathcal{G}_{R},\mathcal{M},f,\pi^{(\mathcal{X}\rhd x)},\mathcal{X}\cup \{i\}\rhd x\cup m) $
\State $\mathcal{Z} \leftarrow \left\{j\in\mathcal{N}_{R}: f(j)\neq 0\right\}$
\State $\mathcal{G}^{'}_{R}\leftarrow \mathcal{G}_{R}\left(\mathcal{N}_{R}, \mathcal{E}_{R} - \{e_{j\ell}\in\mathcal{E}_{R}:\ell\in \mathcal{Z}\}\right)$
\State $\pi^{(\mathcal{X}\rhd x,i\rhd m)} \leftarrow \textproc{Algorithm\ref{alg:probabilistic-coloring-vf-graph}}( \mathcal{G}^{'}_{R},\mathcal{M},f )$

\EndFor
\State \Return $\pi, P$
\EndFunction
\end{algorithmic}
\end{small}
\end{algorithm}

\section{Performance Evaluation}\label{sec:results}
In this section, we apply the proposed methods to the Internet AS-graph. Using realistic simulations, we evaluate the capability of our methodology to infer Internet routes, and discuss related insights. 

\subsection{Setup}\label{sec:results-sim-setup}
We build the AS-level topology using the experimentally collected CAIDA AS-relationship dataset~\cite{caidaasrel}. This contains a list of 
$\sim452k$ peering links between 
$\sim62k$ ASes, and their relationships $\ell_{ij}\in\{p2c,p2p,c2p\}$. We \blue{consider a single node per AS, valley-free routing, and set the policies according to Section~\ref{sec:vf-model} (\eq{eq:vf-local-preferences} and \eq{eq:vf-export-policies}). In the simulations, we break ties for routes received from neighbors of the same type (\eg from two customers) arbitrarily. However, in the inference, we assume that we do not know how exactly the nodes break ties (otherwise inference would be trivial). To account for this (assumed) lack of knowledge, we consider in the inference the more generic values $\hat{q}_{ij}=\hat{q}_{ik}\Leftrightarrow\ell_{ij}=\ell_{ik}$, \ie equal preferences for all neighbors of the same type. This takes into account all possible tie-breaking outcomes, and corresponds to a practical scenario, where we would like to infer catchment with coarse knowledge of the policies ($\ell_{ij}$).
} 


At each simulation, we create a new node $n_{dst}$, add it to the topology, and add \textit{c2p} links (with $n_{dst}$ the customer) to $|\mathcal{M}|$ randomly selected nodes; these $|\mathcal{M}|$ nodes are assumed to be connected in different \ppoints of $n_{dst}$. 
We announce a prefix from $n_{dst}$, and run (simulate) BGP
. For each different scenario setup, we conduct $1000$ simulation runs.

\subsection{Gains from the \Rgraph-based Inference}\label{sec:results-sim-gains}
A main contribution of the proposed methodology (basic/certain inference, Sections~\ref{sec:build-vf-graph} and~\ref{sec:vf-graph-based-inference}
) is that it achieves to (i) encode all eligible paths in a simple graph, and (ii) exploit the structure of the \Rgraph to infer routes even for nodes with multiple eligible paths. The simulation results in Fig.~\ref{fig:gains-basic-methodology} quantify these~gains.

\begin{figure}
\centering
\subfigure[Proposed vs.~Naive Inference]{\includegraphics[width=0.45\linewidth]{./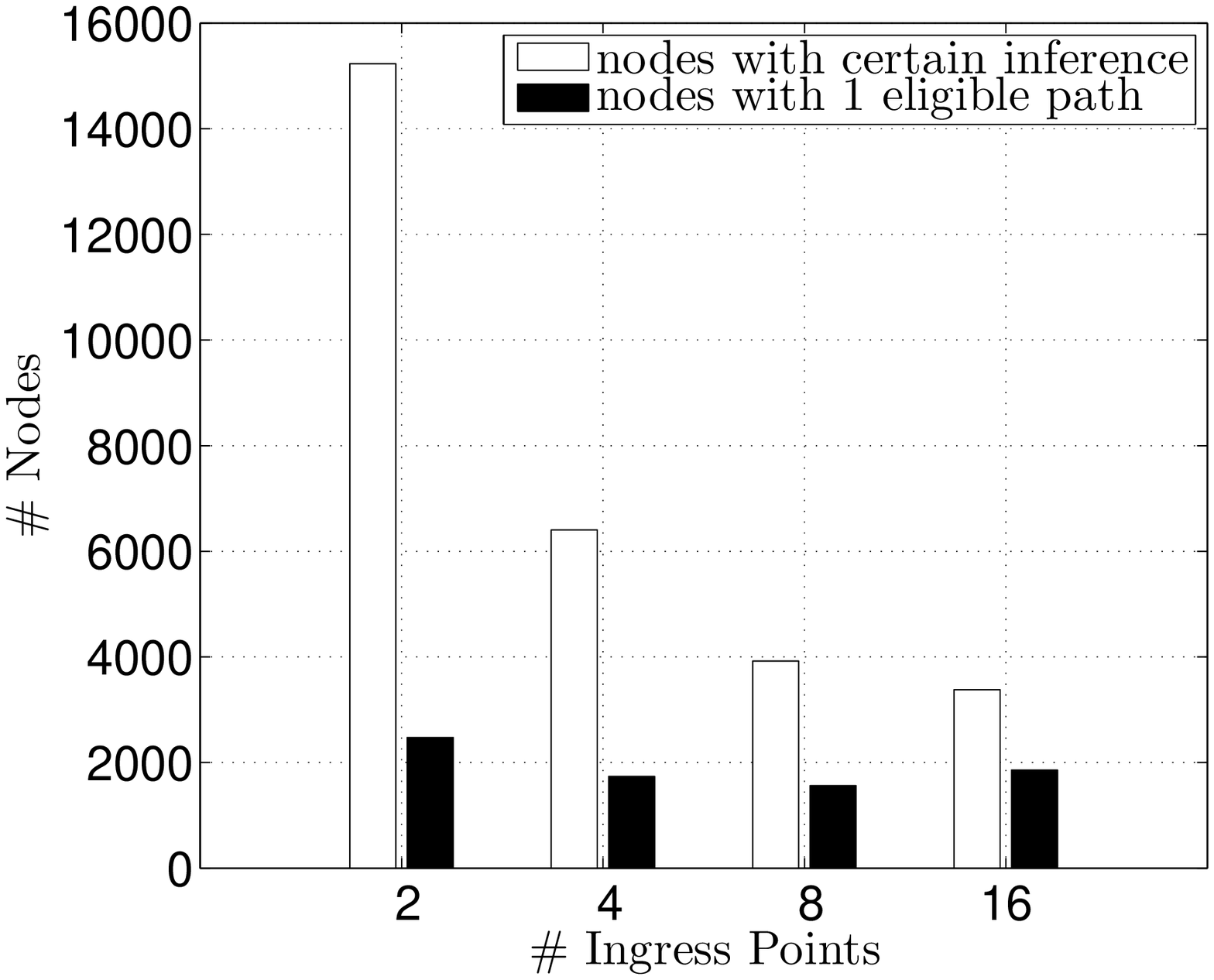}\label{fig:avg-nb-inference}}
\hspace{0.05\linewidth}
\subfigure[Eligible Paths]{\includegraphics[width=0.45\linewidth]{./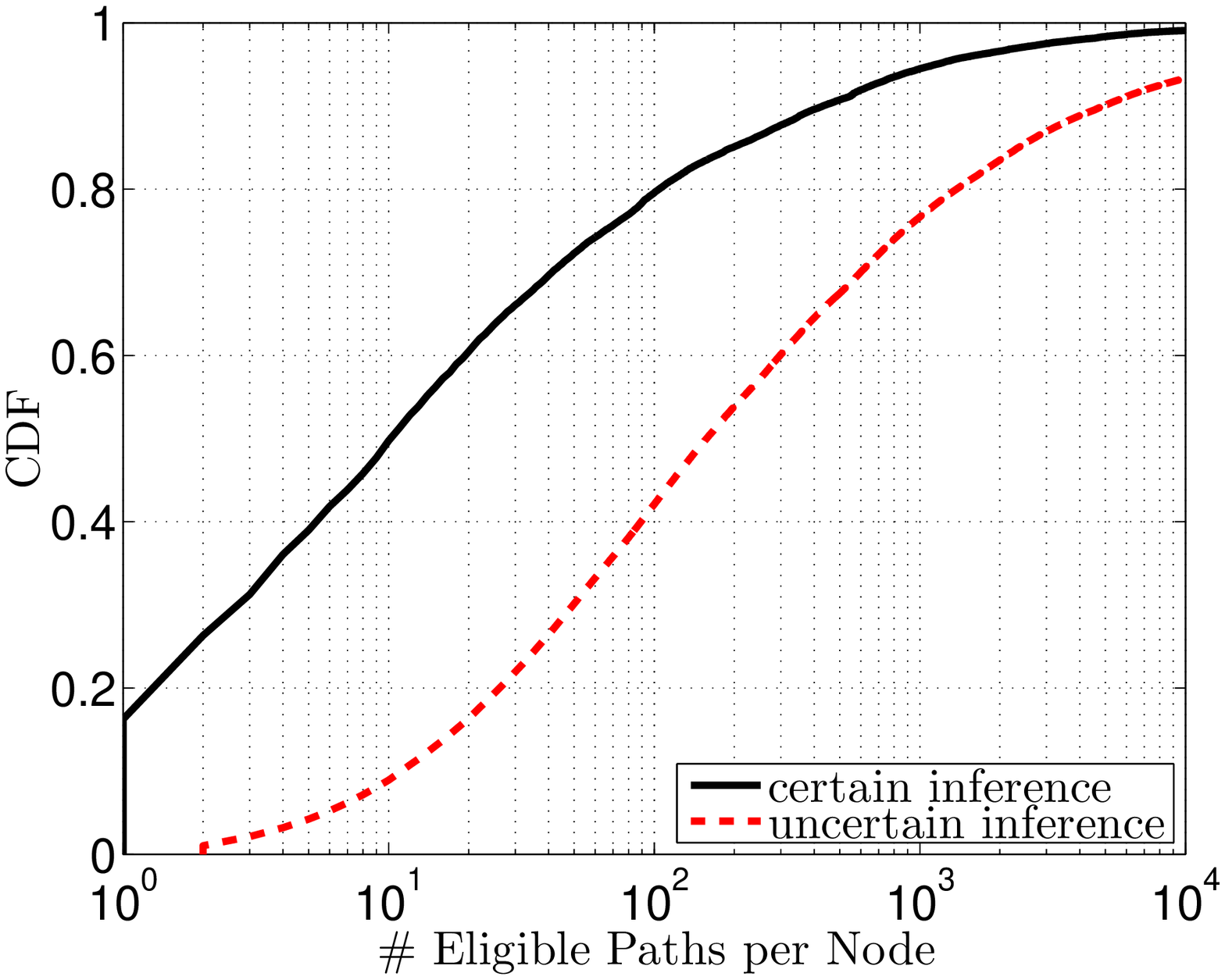}\label{fig:cdf-eligible-paths}}
\vspace{-5mm}
\caption{(a) Number of nodes with certain inference, and number of nodes with one eligible path. (b) Distribution of the number of eligible paths per~node, for nodes with certain and uncertain inference; setup with 2 \ppoints.}
\label{fig:gains-basic-methodology}
\end{figure}

Figure~\ref{fig:avg-nb-inference} compares the average number of nodes for which our methodology inferred a certain route (white bars), and the average number of nodes with only one eligible path (black bars). For scenarios in which the network has two \ppoints (leftmost bars), \textit{our methodology infers the routes of almost an order of magnitude more nodes than a naive inference} (that infers routes only for nodes with a single eligible path). As the number of \ppoints increases, the number of eligible paths --and thus the uncertainty-- increases as well; however, even for a large number of \ppoints (rightmost bars), our methodology infers around two times more nodes than a naive approach.

Moreover, Fig.~\ref{fig:cdf-eligible-paths} shows that $50\%$ (0.5 in y-axis) of the nodes for which an inference can be made (continuous line), have more than $10$ eligible paths (x-axis); respectively, in $20\%$ of the inferences ($0.8$ in y-axis) the nodes have more than $100$ eligible paths. This further highlights the gains from exploiting the structure of the \Rgraph towards making certain inferences.

\subsection{\Rgraph vs. Simulation-based Inference}\label{sec:rgraph-vs-sims}
\blue{As discussed earlier, one could use simulation-based approaches to estimate the catchment~\cite{gao2001stable,c-bgp,muhlbauer2006building}. Since each simulation run returns a single outcome (which is affected by the randomness in tie-breaking), several runs are needed to calculate estimates of the catchment. On the contrary, our methodology \textit{exactly} calculates the statistics for catchment, in a \textit{lightweight} way (computational complexity is approximately equal to one simulation run). The results in Fig.~\ref{fig:rgraph-vs-sims} demonstrate these advantages of our methodology.}

\blue{In Fig.~\ref{fig:rg-vs-sims-space}, we present results for \red{5} indicative scenarios (x-axis); in each of them $n_{dst}$ is connected to a randomly selected pair of \ppoints, \ie $\mathcal{M}=\{m1,m2\}$. For each scenario we do the following. (i) We run 1000 simulations, assuming shortest path preference, and for each run we measure the catchment of $m1$; we present the distribution of the results in boxplots (\texttt{SIMS}). (ii) We then apply our methodology to calculate the following quantities: \\
\textit{\textbf{Lower (LOW) and Upper (UPP) bounds}}: The certain catchment $|CC(m1)|$ of $m1$, where $CC(m1) = \{i\in\mathcal{N}: f(i)=m1\}$, (calculated by Algorithm~\ref{alg:coloring-vf-graph}) is a \textit{lower bound} for the catchment of $m1$, since more nodes (whose inference is not certain) may route to $m1$ as well. Respectively, an \textit{upper bound} for the catchment of $m1$ is given by $|\mathcal{N}|-|CC(m2)|$, since the nodes in $CC(m2)$ cannot route to $m1$. We present the lower/upper bounds with (\texttt{RG-LOW-SP} / \texttt{RG-UPP-SP}) and without (\texttt{RG-LOW-NO-SP} / \texttt{RG-UPP-NO-SP}) shortest path preference.\\
\textit{\textbf{Mean value (AVG)}}: We calculate the mean value of the catchment for $m1$ as $\sum_{i}\pi_{i}(m1)$, where the route probabilities $\pi_{i}$ are calculated by the probabilistic inference Algorithm~\ref{alg:probabilistic-coloring-vf-graph}, with (\texttt{RG-AVG-SP}) and without (\texttt{RG-AVG-NO-SP}) assuming shortest path preference.}

\begin{figure}
\centering
\subfigure[R-graph vs simulations in space]{\includegraphics[width=0.45\linewidth]{./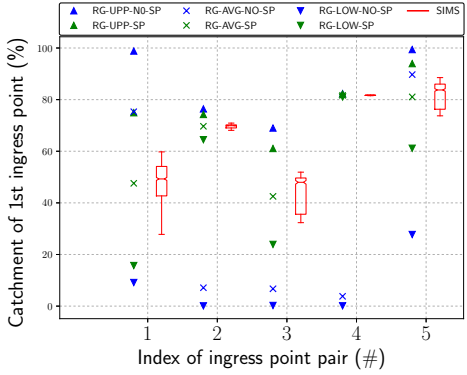}\label{fig:rg-vs-sims-space}}
\hspace{0.05\linewidth}
\subfigure[R-graph vs simulations in time]{\includegraphics[width=0.45\linewidth]{./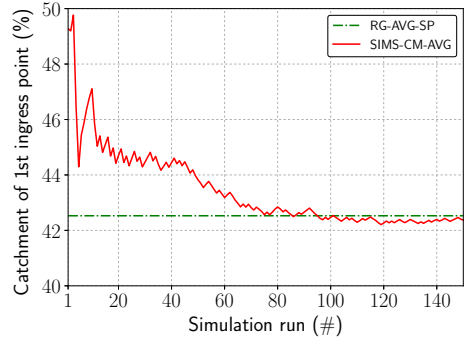}\label{fig:rg-vs-sims-time}}
\vspace{-5mm}
\caption{(a) Simulation results and inferences for the catchment of 1st \ppoint ($m1$) for different scenarios. (b) Catchment of 1st \ppoint ($m1$) for a scenario, calculated as a cumulative moving average of different simulation runs (\texttt{CM-AVG}) and as predicted by our methodology (\texttt{RG-AVG-SP}).
}
\label{fig:rgraph-vs-sims}
\end{figure}

\blue{
Some main observations and insights from the comparison of the simulation results (\texttt{SIMS}) with our predictions are: \\
(i) The predicted mean values \texttt{RG-AVG-SP} with shortest path preference (\ie as in the simulation setup) \textit{coincide always} with the average values calculated from the simulation results (\texttt{SIMS}). Note though that our prediction requires only a single simulation run, whereas simulation-based approaches require several runs to converge to the mean value. We demonstrate this in Fig.~\ref{fig:rg-vs-sims-time}, which shows that the average value of catchment calculated from simulation runs (continuous line), \textit{needs almost 100 runs to converge} to the predicted mean value (dashed line). The presented results are for the \red{3rd} scenario of Fig.~\ref{fig:rg-vs-sims-space}; however, similar patterns were observed across all the pairs we examined.
\\
(ii) As expected, none of the simulation results is outside the bounds \texttt{RG-LOW-SP} and \texttt{RG-UPP-SP}. When the upper and lower bounds are closer, simulation results are more concentrated around the mean. \textit{The distance between the lower/upper bounds shed light on the effect of the randomness in a simulation}. For example, in the \red{4th} scenario, the bounds coincide, thus showing that the catchment is not affected by the tie-breaking process; or, equivalently, knowing only coarse estimates of the policies is enough for an accurate prediction. On the other hand, in the \red{1st} scenario, the distance between the bounds is larger, which implies that measurements would be needed for an accurate calculation of the catchment. \\
(iii) The bounds that are calculated without assuming shortest path preference (\texttt{NO-SP}) are looser, since they account for a larger set of possible scenarios. The difference between the predictions with (\texttt{SP}) and without (\texttt{NO-SP}) shortest path preference reveals the effect of the path lengths in a routing configuration. This knowledge can be useful in traffic engineering. For example, in the \red{2nd} scenario, the two \textit{upper} bounds are very close. This means that the maximum catchment of $m1$ is affected mainly by the local preferences and not by the path lengths. Hence, even if we increase the length of the paths to $m2$ through path prepending, this would not increase significantly the catchment of $m1$. On the contrary, the large distance between the two \textit{lower} bounds in the same \red{2nd} scenario, indicates that applying path prepending to announcements through $m1$, can significantly decrease its catchment.}

\subsection{Completeness of Inference}\label{sec:results-sim-completeness}

In Fig.~\ref{fig:avg-nb-inference} we see that, \eg for $|\mathcal{M}|=2$ a certain inference is possible for $\sim15k$ of the total $\sim62k$ nodes in the graph. Here, we investigate for how many nodes (completeness) our methodology returns a \textit{certain} inference, with or without measurements. \textit{Remark:} \textit{probabilistic} inference is made for all nodes (see Section~\ref{sec:probabilistic-inference}).

To consider realistic scenarios, we simulate measurements from the vantage points of several real Internet measurement platforms:
\begin{itemize}[leftmargin=*]
\item \textbf{RouteViews}~\cite{routeviews} and \textbf{RIPE RIS}~\cite{riperis}  (\texttt{RV\_RIS}) provide BGP RIBs and updates collected from more than 400 ASes  worldwide. 
\item \textbf{RIPE Atlas}~\cite{ripeatlas} comprises more than $25k$ probes (in $\sim3.5k$ ASes), \ie devices able to run pings (\texttt{RA\_PING}) or traceroutes (\texttt{RA\_TRACE}) towards certain Internet destinations.
\item \textbf{Looking Glasses} (\texttt{LG}) are servers that provide the BGP RIBs of  the networks (ASes) they are hosted in. We use the Periscope platform~\cite{caidaperiscope}, to obtain a list of LGs in 883 ASes.
\end{itemize}

\noindent\textit{Remark:} The BGP data of a network $i$ or traceroutes from $i$ to $n_{dst}$ can provide a route oracle for all the nodes in the best path $\bestaspath{i}{n_{dst}}$. Pings from $n_{dst}$ to $i$ can provide a route oracle only for $i$~\cite{de2017broad}.

\begin{figure}
\centering
\subfigure[Base Model]{\includegraphics[width=0.49\linewidth]{./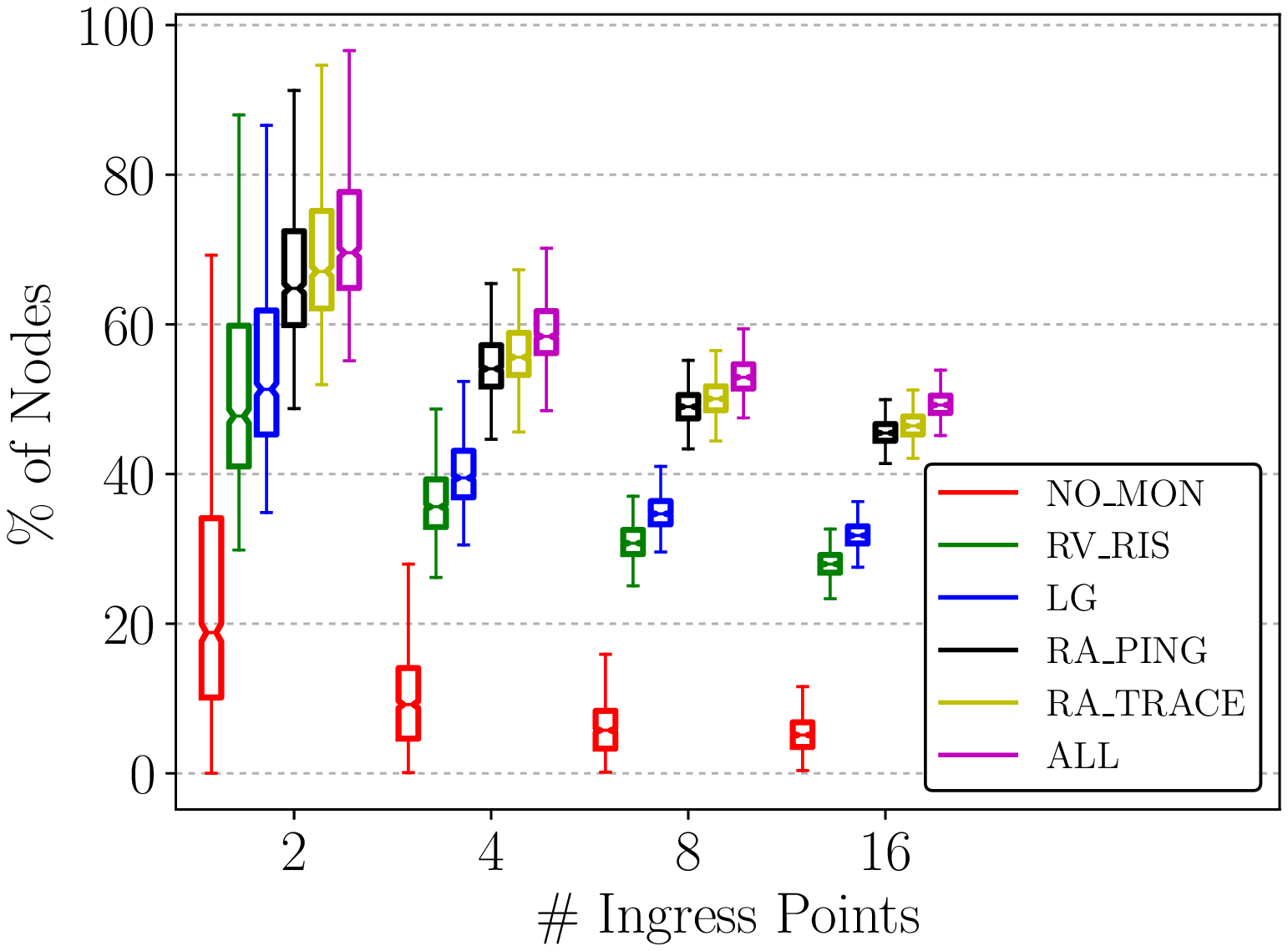}\label{fig:color_cov_perc_strict}}
\subfigure[Base Model + SP Preference]{\includegraphics[width=0.49\linewidth]{./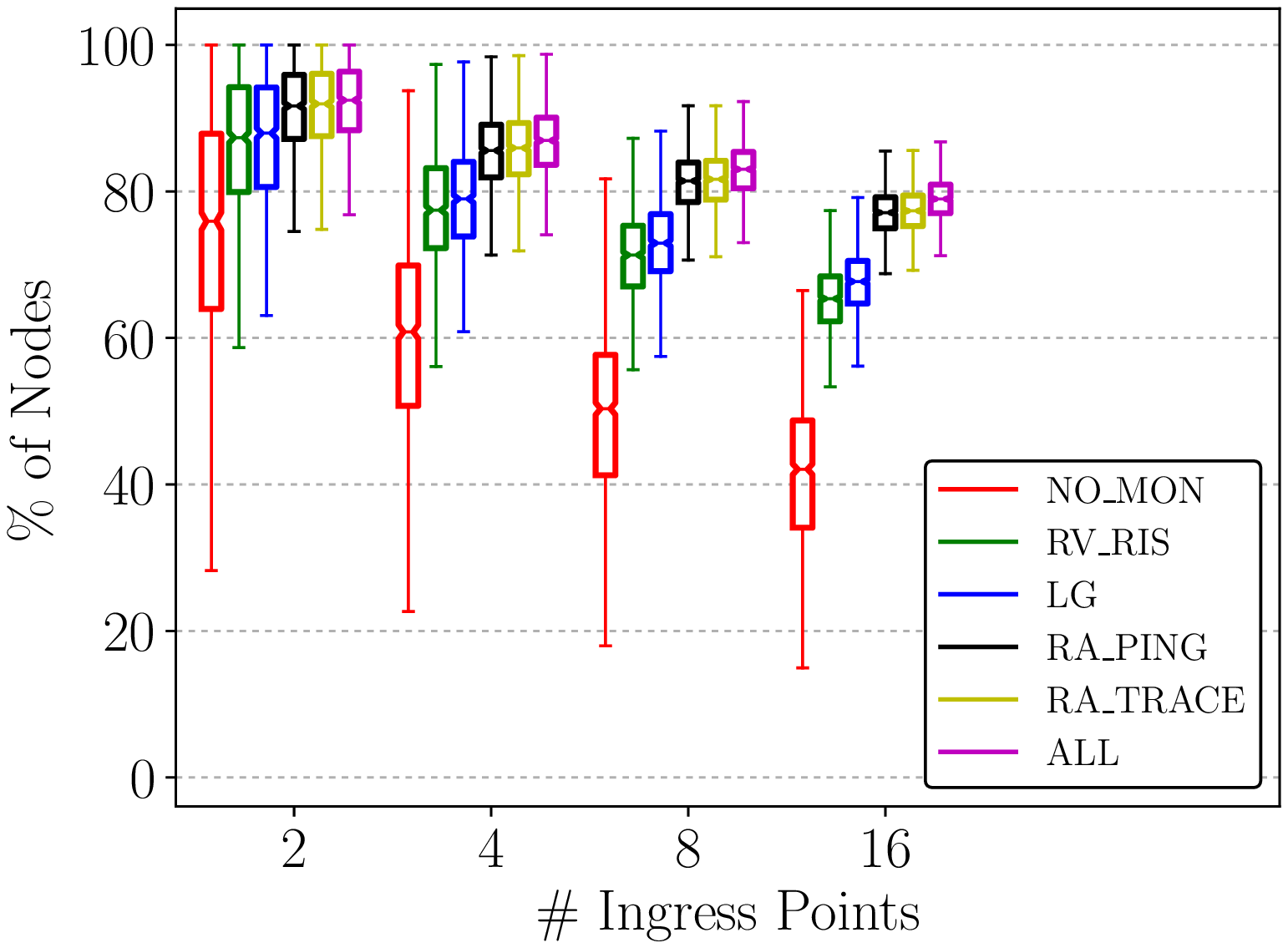}\label{fig:color_cov_perc_weak}}
\vspace{-5mm}
\caption{Distribution (over $1000$ simulation runs) of the percentage of nodes with a certain inference (y-axis), in scenarios with different number of \ppoints (x-axis). Boxplots correspond to different setups without and with measurements, for scenarios (a) \textit{without} and (b) \textit{with} preference~of~shorter~paths (SP).}
\label{fig:color_cov_perc}
\end{figure}

Figure~\ref{fig:color_cov_perc} shows for how many nodes a certain inference is possible in different setups. Some key observations are:

\noindent(i) The number of inferences decreases with the number of \ppoints. Although this is expected, our methodology quantifies how this behavior is affected by different parameters (number of \ppoints, measurement setup, etc.). 

\noindent(ii) Assuming preference of shorter paths leads to significantly more inferences. For instance, even without measurements (red boxplots - \texttt{NO\_MON}) the median percentages increase 
from $5\%-19\%$ (Fig.~\ref{fig:color_cov_perc_strict}) to $42\%-76\%$ (Fig.~\ref{fig:color_cov_perc_weak}). 

\noindent(iii) Public measurement platforms can significantly enhance inference. Their contribution is crucial when many \ppoints are in use; \eg in Fig.~\ref{fig:color_cov_perc_strict} for $|\mathcal{M}|=16$ using all platforms increases inference from $5\%$ to $49\%$, and in Fig.~\ref{fig:color_cov_perc_weak} from $42\%$ to $79\%$. 

\noindent(iv) Interestingly enough, even a lightweight measurement campaign with pings (black boxplots - \texttt{RA\_PING}; \eg as suggested in~\cite{de2017broad}), can achieve almost the same enhancement with employing all platforms together. However, we simulated pings only to RIPE Atlas probes ($3.5k$ measurements), in contrast to~\cite{de2017broad} that requires orders of magnitude more measurements; combining our methodology with that technique could potentially lead to even more efficient route inference.

\subsection{Efficient Measurements}
\blue{
Next, we evaluate the ability of Algorithm~\ref{alg:greedy-measurements} to select a set of nodes to be measured.
We consider two scenarios with $|\mathcal{M}|=2$, taking into account only the $\sim20k$ non-stub nodes of the AS-graph, and apply Algorithm~\ref{alg:coloring-vf-graph} to calculate the certain inference. The number of nodes whose routes cannot be inferred with certainty are 2975 (``low'') and 11918 (``high'') in the scenarios of Fig.~\ref{fig:greedy-low} and Fig.~\ref{fig:greedy-high}, respectively. To enhance inference, we conduct measurements to a set of nodes $\mathcal{X}$, and then apply Algorithm~\ref{alg:coloring-from-measurements}. 
}
\blue{
In Fig.~\ref{fig:greedy} we present results for the extra number of nodes whose routes are inferred with certainty after the measurements. Sets $\mathcal{X}$ are selected with the greedy Algorithm~\ref{alg:greedy-measurements} among 100 and 1000 nodes with RIPE Atlas probes (continuous lines, ``Greedy''), or are selected randomly among nodes with RIPE Atlas probes (dashed line, ``Random''). The main observation is that selecting the nodes to be measured with the proposed algorithm is significantly more efficient than a random selection. Our algorithm is able to select a good set of nodes, and its efficiency increases when the set of available nodes (with probes) is larger (``Greedy 1000'' vs. ``Greedy 100''). Comparing the results in Fig.~\ref{fig:greedy-low} and~\ref{fig:greedy-high}, reveals that the careful selection of the set $\mathcal{X}$ is more crucial for scenarios with high --initial-- uncertainty (Fig.~\ref{fig:greedy-high}); \eg a single measurement from the node selected with our algorithm can infer with certainty up to 1000 extra routes (``Greedy 1000'' in Fig.~\ref{fig:greedy-high}).}

\begin{figure}
\centering
\subfigure[2975 uncertain nodes]{\includegraphics[width=0.4\linewidth, height=0.35\linewidth]{./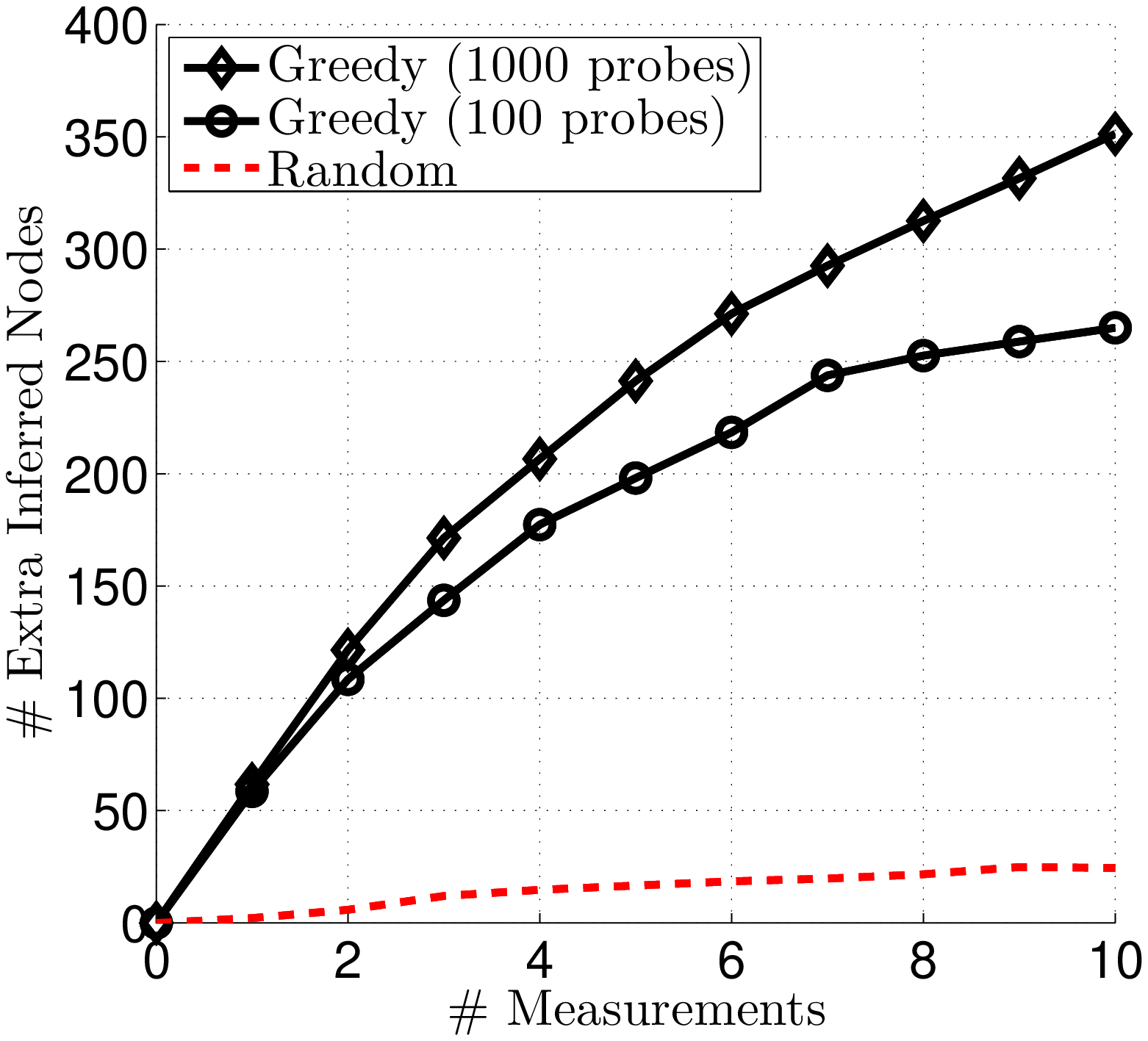}\label{fig:greedy-low}}
\hspace{0.05\linewidth}
\subfigure[11918 uncertain nodes]{\includegraphics[width=0.4\linewidth, height=0.35\linewidth]{./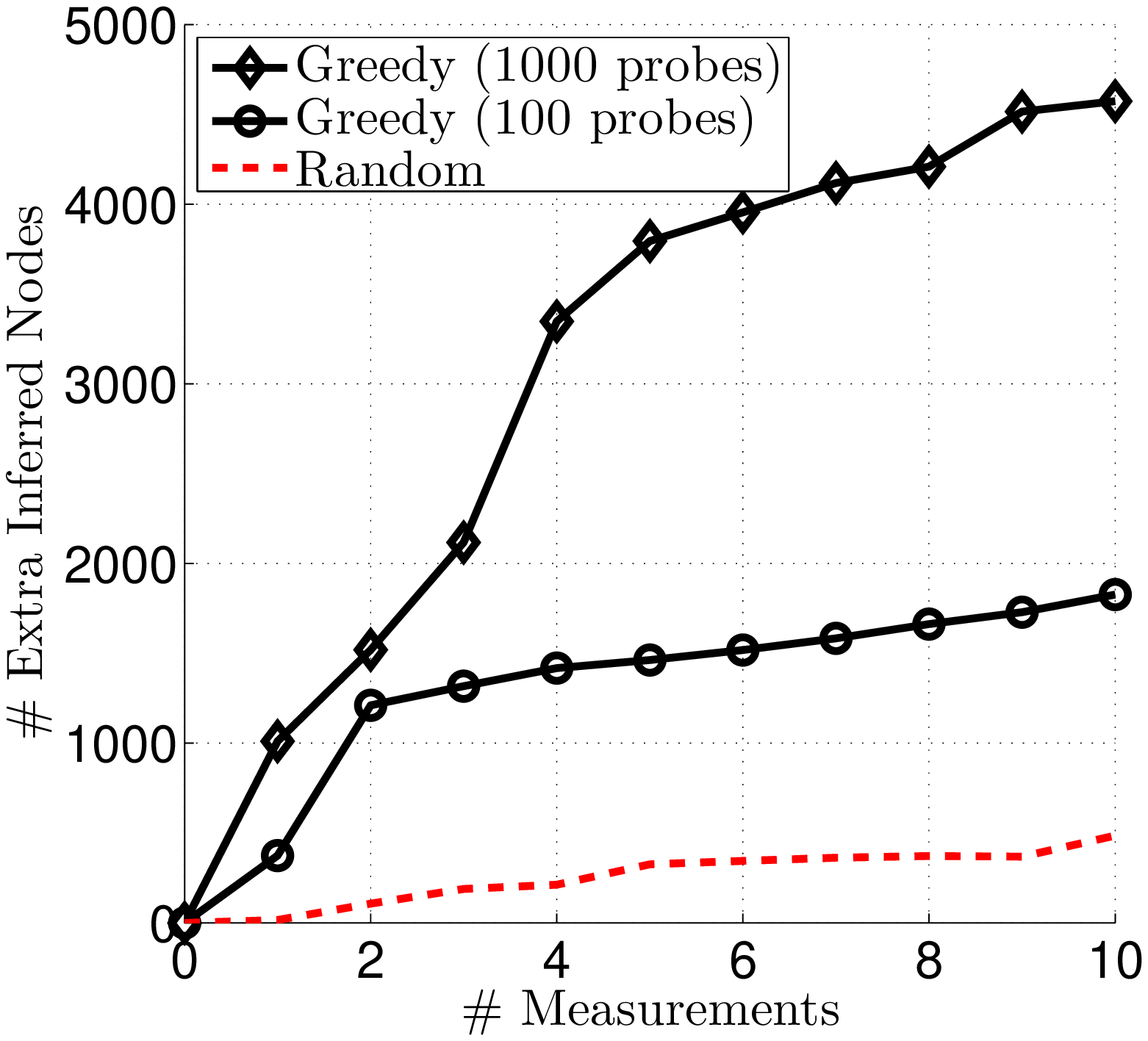}\label{fig:greedy-high}}
\vspace{-5mm}
\caption{Number of extra nodes whose routes can be inferred with certainty (y-axis) when a set of measurements $\mathcal{X}$ is provided ($|\mathcal{X}|$ in x-axis), in two scenarios with (a) low and (b) high number of nodes with initially uncertain route.}
\label{fig:greedy}
\end{figure}

\subsection{\camera{Real-World Evaluation}}\label{sec:real-evaluation}

\camera{Besides simulations, here, we provide evaluation results from measurements and experiments in the real Internet.}

\myitem{\camera{Measurements for MOAS prefixes.}}
\camerax{
The proposed inference framework can be applied on top of any given topology and routing model (\ie it takes this information as input). As a result, its accuracy
}
\camera{
depends on how \textit{complete} and \textit{accurate} the knowledge of (i) the routing policies $\mathcal{Q}$ and $\mathcal{H}$ and (ii) the AS-level graph is (perfect knowledge leads to $100\%$ inference accuracy).} 

\camera{We verified this by comparing our inference results against real BGP routing entries collected from more than $200$ route collectors of RIPE RIS~\cite{riperis} and RouteViews~\cite{routeviews} for around $300$ 
prefixes that are anycasted by more than one AS (\ie Multi-Origin AS, MOAS) in the Internet~\cite{caidapfx2as}. When using the VF model (see Section~\ref{sec:vf-model}) and the available AS-relationships~\cite{caidaasrel}, the achieved accuracy (for networks whose routing entries are available) is 60-70\%\footnote{An interesting observation is that assuming shortest path preference (Section~\ref{sec:weak-path-inference}) increases the inference completeness from  
30\% to 65\%, without significantly affecting the accuracy;
this supports the real-world relevance of this assumption~\cite{anwar2015investigating, gill2011let,c-bgp}.}, which complies with the observed accuracy of the VF-model for Internet routing~\cite{anwar2015investigating}. As a comparison, the accuracy of simulation-based catchment prediction in these scenarios is $10\%$ lower than the accuracy of the certain inference with shortest path preference.}
\camera{Introducing fine-grained refinements in the routing policies for some nodes, increases accuracy; we tested this by considering per-prefix policies, similarly to~\cite{anwar2015investigating}. 
Specifically, we re-ran our inference by ``correcting'' (\ie replacing, adding, removing) in the \Rgraph the links close to the anycasters (starting from the first hops), with the actually observed links in the measured paths
. 
}
\camerax{For example, if a link was not included in the initial topology dataset (AS-relationships~\cite{caidaasrel}), but was observed in the real measurements, we added it in the \Rgraph to increase the completeness of the topology; or, if an observed link existed in the topology, but did not appear in the \Rgraph (\ie due to an inaccurate routing policy), we similarly added it in the ``corrected'' \Rgraph. }
\camera{With a 30\% of the links observed by the monitors being corrected, the average accuracy increases to 80\%. This observation validates that the inference accuracy depends on the underlying knowledge of the topology and routing policies.}

\camerax{Moreover, the structure of the \Rgraph provides further insights about what are the important links and policies for a routing configuration, and how missing information (\eg topology incompleteness) would affect inference. For example, a link that is in the topology dataset but does \textit{not} appear in the \Rgraph, does not affect the inference, \ie missing this link (\eg in an incomplete dataset) would not be important. Similarly, a link that appears in the \Rgraph but removing it does not affect the \textit{certain} inference of any node
, would not be important for the examined routing configuration. This information could be used --similarly to our experiments-- to design methods for targeted corrections (\eg through targeted measurements) of a topology/routing model.}


\myitem{\camera{Anycast experiments with the PEERING testbed.}}
\camera{We conducted controlled IP anycast experiments in the real Internet using the PEERING testbed~\cite{schlinker2014peering,peering}, which owns ASNs, IP prefixes and has BGP connections with operational networks in several locations around the world. We announce the same prefix from different PEERING locations, \ie \ppoints. Figure~\ref{fig:peering-anycast-example} shows the fraction of the catchment of each \ppoint in four experiment scenarios (SC-0, SC-1, SC-2, and SC-2*), as measured from the RIPE RIS~\cite{riperis} and RouteViews~\cite{routeviews} route collectors 
(black bars), and inferred using our framework (\eq{eq:traffic-load} with $T_{i}=T, \forall i$) on top of the VF model (white bars).}

\camera{We consider an initial scenario (``SC-0''), where a network has two \ppoints, \textit{AMS} and \textit{UFMG} (
for more details about the PEERING locations see~\cite{peering}).  As seen in Fig.~\ref{fig:peering-anycast-example}, the load distribution is highly skewed towards \textit{AMS}; our inference captures this imbalance, with a 10\% deviation from measured values. The network would like to evaluate whether it can balance the load by adding more \ppoints, before proceeding to an actual deployment (\ie hypothetical scenario).}
    
\camera{One option (``SC-1'') is to add an extra \ppoint (at the PEERING location \textit{GRNET}). However, our inference (white bars for SC-1) predicts that this would have only a small effect on the load distribution, and thus it would be an inefficient deployment. Our experimental results (black bars for SC-1) verify this behavior, i.e., the added \ppoint in SC-1 (``other'' bars) attracts a small percentage of traffic
. Hence, the network considers a second option (``SC-2'') to add two other \ppoints (at the PEERING locations \textit{ISI} and \textit{UW}). Our inference predicts that SC-2 would (i) move a significant fraction of load from \textit{AMS} to the added \ppoints, and (ii) not affect the load of \textit{UFMG}, \ie SC-2 achieves a better load balancing than SC-1. While deviations between inferred and measured catchment exist also here (due to the employed naive VF model
), the actual behavior is approximated well by our predictions.}

\camera{Moreover, we calculated the certain catchment \textit{without} shortest path preference for \textit{AMS} in SC-2, which corresponds to a ``lower bound'' (cf. Section~\ref{sec:rgraph-vs-sims}) for the \textit{AMS} catchment, i.e., under any path length combination. We found the \textit{AMS} certain catchment to be almost zero (not shown in Fig.~\ref{fig:peering-anycast-example}). This means that the short path lengths towards \textit{AMS} are the main causes for traffic to be routed to this \ppoint. Therefore, prepending the announcements from \textit{AMS} (to artificially increase path lengths) could further decrease the attracted load. In fact, since the \textit{AMS} certain catchment is very small, an intensive prepending could even diminish the load in \textit{AMS}. We verified this through experiments (``SC-2*'') with the \ppoints of SC-2, where we prepended 5 hops (i.e., more than the median AS-path length in the Internet~\cite{sermpezis2017can}) in the announcements from \textit{AMS}.}

\begin{figure}
\centering
\subfigure[SC-0]{\includegraphics[width=0.22\linewidth]{./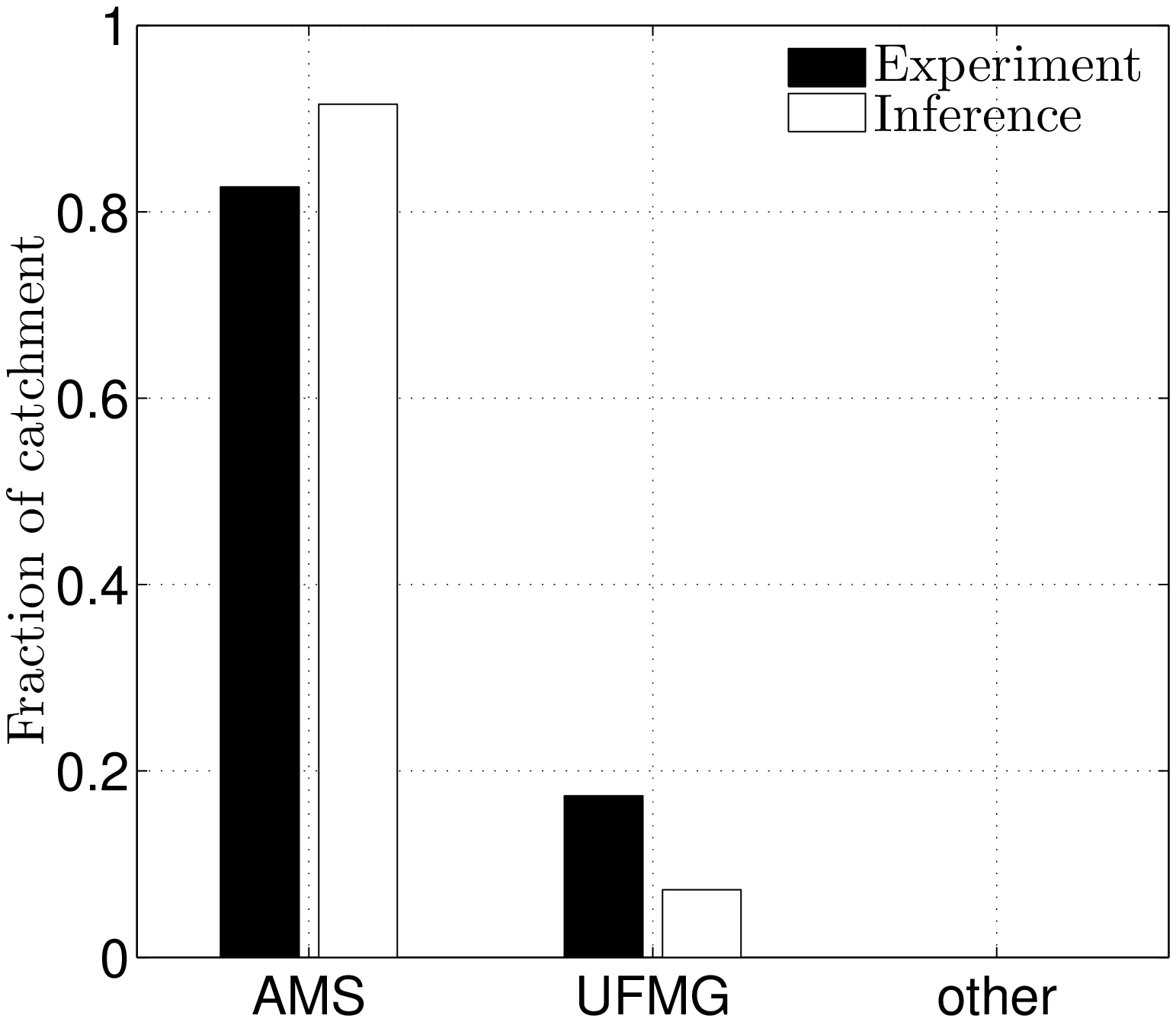}\label{fig:peering-anycast-example-0}}
\hspace{0.02\linewidth}
\subfigure[SC-1]{\includegraphics[width=0.22\linewidth]{./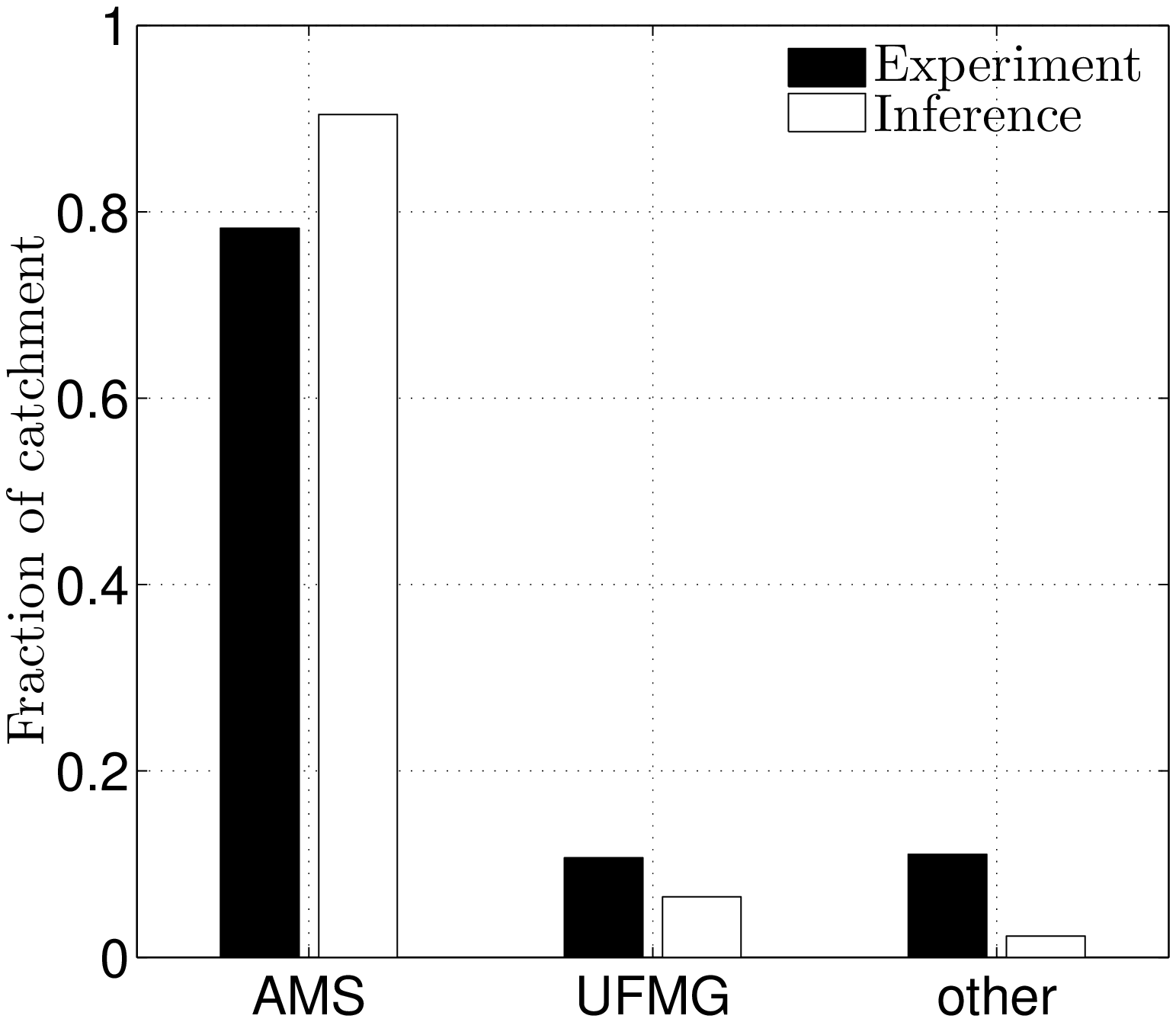}\label{fig:peering-anycast-example-1}}
\hspace{0.02\linewidth}
\subfigure[SC-2]{\includegraphics[width=0.22\linewidth]{./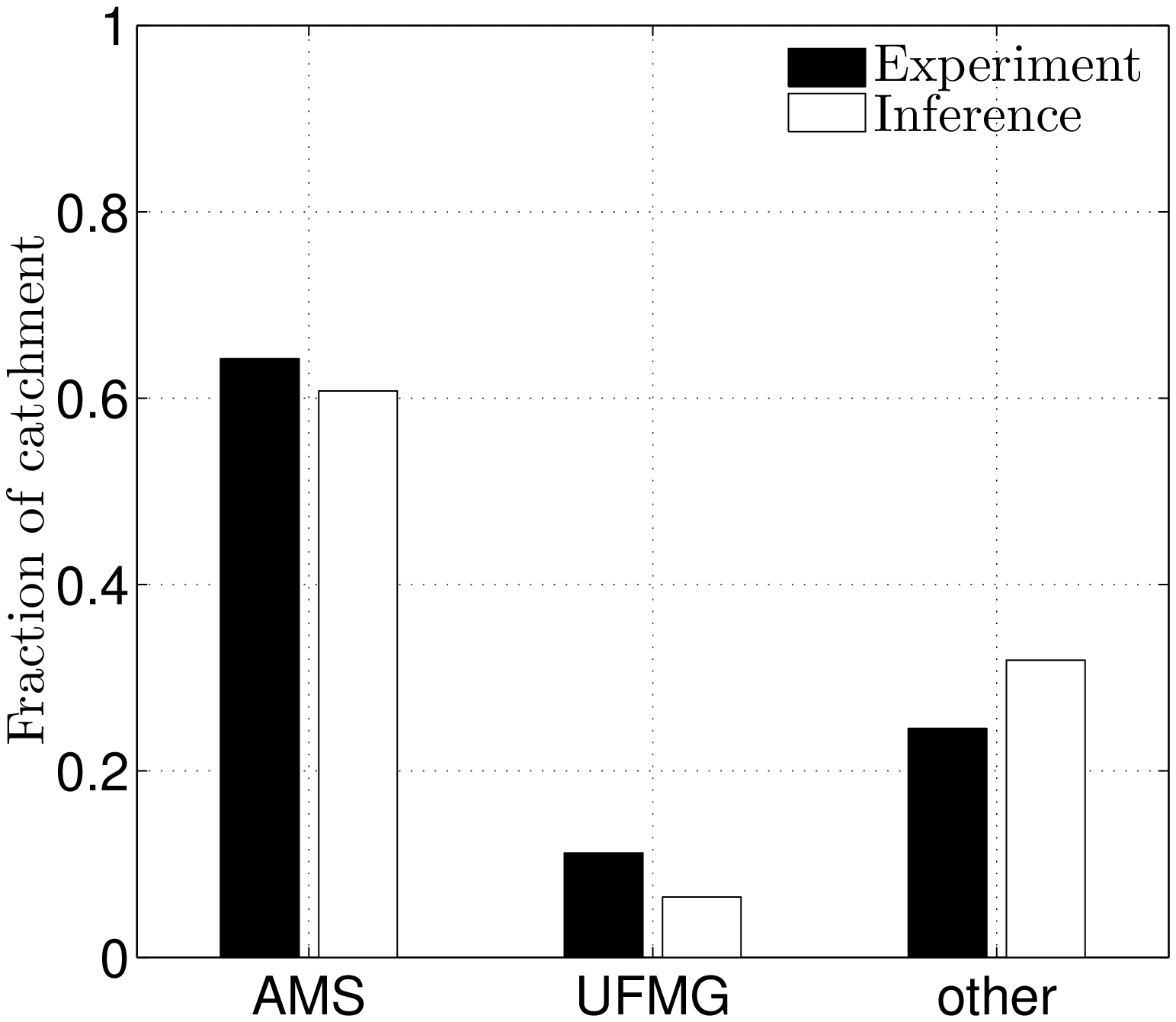}\label{fig:peering-anycast-example-2}}
\hspace{0.02\linewidth}
\subfigure[SC-2*]{\includegraphics[width=0.22\linewidth]{./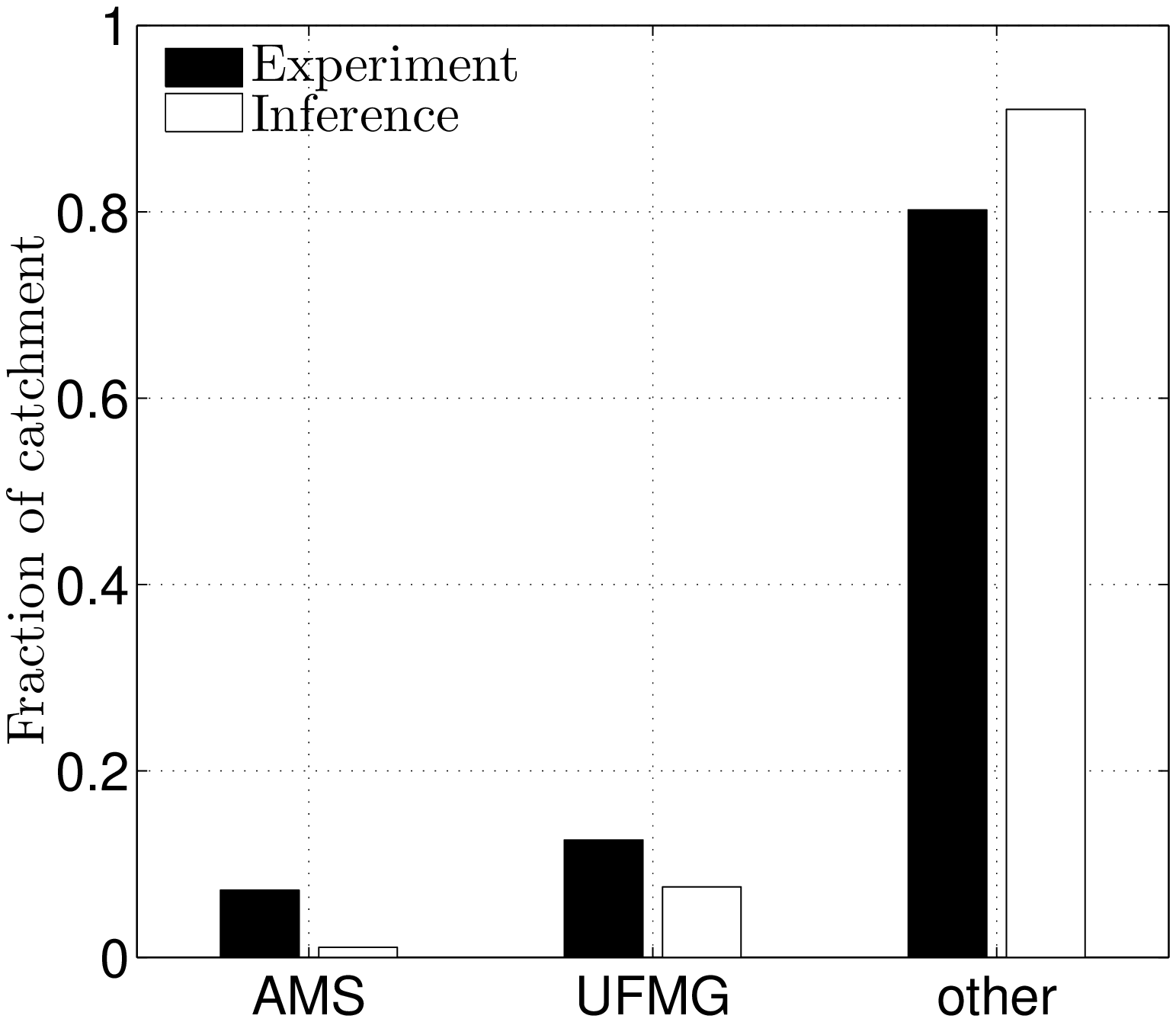}\label{fig:peering-anycast-example-2p}}
\caption{\camera{Experiments in the real Internet with the PEERING testbed: Catchment of the \ppoints \{\textit{AMS, UFMG, other}\} in four deployments/scenarios (SC-0, SC-1, SC-2, SC-2*) as observed from monitors in the Internet (black bars) and predicted using our framework (white bars)}.}
\label{fig:peering-anycast-example}
\end{figure}

\section{Related Work}\label{sec:related}

The majority of related literature focuses on methodologies for \emph{measuring} the catchment in existing deployments~\cite{baltra2014ingress,mao2005level,lee2011scalable,cicalese2015fistful,de2017broad,ra2017seeing}. A methodology for measuring the routes towards the different \ppoints of a destination network, based on past measurements, is proposed in~\cite{baltra2014ingress}. Similarly, \cite{mao2005level} infers AS-level paths $\bestaspath{i}{n_{dst}}$ without measurements from the source network $i$, but based on BGP tables collected from multiple vantage points, AS-relationship data, and valley-free assumptions, while~\cite{lee2011scalable} infers routes by stitching path segments from existing measurements. Latency-based~\cite{cicalese2015fistful} and data-plane~\cite{de2017broad} measurement methodologies have been recently proposed for mapping anycast catchment. In \textit{Verfploeter}~\cite{de2017broad}, a system in the network of $n_{dst}$ performs exhaustive ping measurements (to all routed IP prefixes) and monitors from which \ppoint the reply packets arrive to $n_{dst}$. 
\blue{In contrast to these works, our methodology can infer catchment also on \textit{hypothetical} deployments \camera{(see, \eg Section~\ref{sec:real-evaluation})}, a task more challenging than the already demanding task of calculating existing catchment~\cite{rfc4786,lodhi2015complexities}. Furthermore, our methodology can complement existing measurement methods, and can be used as a base for devising more lightweight/efficient techniques, \eg by exploiting the structure and knowledge offered by the \Rgraph (similarly and by extending the concepts presented in Section~\ref{sec:measurements}).
}




\blue{Prior work on Internet route prediction comprises mainly simula\-tion-based approaches~\cite{gao2001stable,c-bgp,muhlbauer2006building, feamster2004model}, which simulate the operation of BGP based on known or estimated routing policies. Our work builds on top of these approaches, and provides more informative results and insights. 
}
\blue{The works~\cite{c-bgp,feamster2004model} develop models~\cite{feamster2004model} and tools~\cite{c-bgp,feamster2004model} mainly for the intra-domain routing (iBGP) and traffic engineering (TE) of egress traffic, whereas our goal is to predict ingress routes and perform TE with eBGP policies. Nevertheless, these approaches could be combined with ours for a joint intra/inter-domain TE. The importance of the intra-domain structure to the inter-domain routing is highlighted in~\cite{muhlbauer2006building}, whose approach is orthogonal and could be used complementarily to our approach as well, \eg to provide more fine-grained routing policies $\mathcal{Q}$ and $\mathcal{H}$.
}

Route inference or prediction has been employed in different contexts as well, \eg for designing targeted active measurements~\cite{cunha2016sibyl}, optimal monitor placement~\cite{gregori2012incompleteness}, or investigation of potential path redundancy in the Internet~\cite{kloti2015policy}. Our framework can be used complementary to these works. 
\camera{Finally, probabilistic network programming languages~\cite{gehr2018bayonet}, which capture probabilistic network behavior and analyze it through standard probabilistic inference methods, could be combined with our work to design novel efficient inference tools for Internet routing applications.}

\section{Conclusions}\label{sec:conclusions-future-work}
We proposed and studied a methodology to infer routing behavior in the Internet for existing or hypothetical topological and routing configurations. Our methodology deviates from and enhances existing approaches, by predicting \ppoint catchment with certainty or probabilistically, with or without measurements, and under generic routing assumptions. 
\camera{Our methods can be useful for a number of network management application, as well as open new research directions; some indicative examples are:}



\myitem{\camera{Applications.}}
\camera{
(i) \textit{Traffic Engineering:} An operator can efficiently predict and obtain rich information (lower/upper bounds through certain catchment, effect of path lengths, etc.) about the impact of adding/removing \ppoints or doing path prepending; due to the large number of possible actions and their combinations, evaluation through experiments or simulation-based approaches would become inefficient. (ii) \textit{Peering strategy:} Establishing a new peering connection with a single network or many networks at an IXP, may significantly change the catchment of \ppoints or may have negligible impact (see experiments in Section~\ref{sec:real-evaluation}). Today, networks have higher flexibility in establishing peerings even with distant networks, \eg through resellers and remote peering~\cite{castro2014remote,nomikos2018peer}; catchment prediction can enable them to make informed decisions, before proceeding to actual deployments. (iii) \textit{Resilience:} Our framework facilitates to study the resilience of a network against failures of \ppoints or peering links. The structure and properties of the \Rgraph (\eg centrality) can further reveal the links whose failure would affect the network the most.
We believe that a graph-theoretic approach, based on the \Rgraph, could complement and enhance existing measurement-based approaches, \eg~\cite{fontugne2018thin}. (iv) \textit{Network security:} IP anycast is used by DDoS protection organizations to attract and scrub DDoS traffic destined to a victim network~\cite{de2016anycast}, or to mitigate hijacking attacks~\cite{sermpezis2018artemis}. These organizations can select where to deploy \ppoints in order to maximize their catchment (\eg by mapping potential attackers to ``illegitimate'' \ppoints), and thus best protect their customers.
}

\myitem{\camera{Future research directions.}}
\camera{
We identify two future research directions that could be facilitated by our framework. (i) \textit{Internet routing models:} Existing models for Internet topology and routing~\cite{gao2001stable,c-bgp,muhlbauer2006building,feamster2004model}, such as the AS-graph and the VF, are widely used in research and network operations. Despite (or, due to) their generality, they suffer from limited accuracy. However, this accuracy can be increased when modeling the topology and the policies from the perspective of a single network (\eg per-prefix policies; see~\cite{anwar2015investigating} and Section~\ref{sec:real-evaluation}), rather than having a common topology/routing model for all networks. To this end, one could use the \Rgraph, which encodes the topology and routing policies from the perspective of a single network, $n_{dst}$. For example, a \Rgraph can be created on top of a general model, and then refined from real measurement data, \eg similarly to~\cite{muhlbauer2006building}. \camerax{While building a \textit{general} (\ie for all networks) data-driven model, such as~\cite{muhlbauer2006building}, may require a very large number of measurements~\cite{gregori2012incompleteness} to capture accurately all routing information, when it comes to the perspective of a single network $n_{dst}$, one can focus her efforts only on the ``important'' links (see Section~\ref{sec:real-evaluation}).} (ii) \textit{Reinforcement learning for network management:} Optimization problems that arise in network management processes are frequently combinatorial (see, \eg Section~\ref{sec:measurements}), and thus difficult to solve with analytic methods. Recently, Reinforcement Learning (RL) methods have been proposed for efficient network management and routing operations~\cite{yu2018drom, yao2018networkai}.
In absence of real data, RL agents can be trained on simulated environments. Our framework offers richer information than simulations and requires less computations (\eg see Section~\ref{sec:rgraph-vs-sims}). Hence, it could significantly reduce the time needed for the training of a RL agent that would involve testing over a large number of different scenarios.
}

\camera{To facilitate further research and reproducibility, we make the code for an implementation of the proposed methods available in~\cite{sermpezis2019code}.}

\section*{APPENDIX}
\appendix
\section{Proof of Theorem~\ref{thm:vf-graph}}\label{appendix:proof-theorem-vf-graph}
We first define as \Rpath from $i$ to $n_{dst}$, a path created by starting at $n_{dst}$ and following directed edges until reaching $i$.

We prove the Theorem, by proving the following two items: (i) any path in the \Rgraph (\ie \Rpath), is an eligible path; (ii) any eligible path is encoded in the \Rgraph as a \Rpath.

\myitemit{Any path in the \Rgraph (\ie \Rpath), is an eligible path.} Let a \Rpath $rp = [n_{1}, n_{2}, ..., n_{K}, n_{dst}]$. The $rp$ is constructed by following edges in the \Rgraph, which means that $e_{n_{k+1}n_{k}}\in\mathcal{E_{R}}$, where $k=1,...,K-1$. The existence of the edge denotes that (see Algorithm~\ref{alg:build-vf-graph}): (a) $n_{k+1}$ is in the set \textit{best\_neighbors} of $n_{k}$, or equivalently $q_{n_{k}n_{k+1}}\geq q_{n_{k}j}$, $\forall j\in \{i\in \mathcal{N_{R}}: e_{in_{k}}\in\mathcal{E_{R}}\}$, and thus can be selected by $n_{k}$. (b) $n_{k+1}$ exports its best path $\bestaspath{n_{k+1}}{n_{dst}}$ to $n_{k}$, and routes all paths of equal local preference similarly (see Section~\ref{sec:generic-model}); thus any path $[n_{k+1}, x, ..., n_{K}, n_{dst}]$ can be a path that reaches the RIB of $n_{k}$, for any $x$ that $q_{n_{k+1}x} = q_{n_{k+1}n_{k+2}}$. These two conditions satisfy the definition of eligible paths (Def.~\ref{def:eligible-path}, Section~\ref{sec:model}).

\myitemit{Any eligible path is encoded in the \Rgraph as a \Rpath.} Let 
an eligible path $ep = [n_{1}, n_{2}, ...\\n_{K}, n_{dst}]$ that is not a \Rpath, \ie at least one edge in $ep$ does not exist in the \Rgraph; let this edge be between $n_{k+1}$ and $n_{k}$. Let also a node $x$ that is a parent of $n_{k}$ in the \Rgraph (\ie $e_{xn_{k}}\in\mathcal{E_{R}}$). Then, it must hold that $q_{n_{k}x}>q_{n_{k}n_{k+1}}$ or $h_{n_{k+1}n_{k+2}n_{k}}=0$. In the former case, the path $[n_{k},n_{k+1},...,n_{dst}]$ cannot be the best path of $n_{k}$ and thus $n_{1}$ will never have in its RIB the path $ep$ (contradiction). In the latter case, the path $[n_{k+1},n_{k+2},...,n_{dst}]$ is never exported to $n_{k}$, which means that $ep$ does not conform to routing policies (contradiction).

\section{Proof of Lemma~\ref{thm:update-prob-np-hard}}\label{appendix:thm-update-prob-np-hard}
In general, a node $i$ in the \Rgraph has more than one paths to $n_{dst}$, which means that the \Rgraph is a multiply-connected BN (and not a \textit{polytree}). The problem of updating the probabilities (or, ``belief updating'') in non-polytree BNs is known to be NP-hard (by reduction to a SAT problem)~\cite{cooper1990computational}.

\section{Proof of Theorem~\ref{thm:efficiency-measurement-algo}}\label{appendix:thm-efficiency-measurement-algo}
\textit{Correctness:} A node $i$ has a certain route only if (a) all its parents $P_{i}$ have a certain route, or (b) (at least) one of its children $j\in C_{i}$ has a certain route and $j$ routes through $i$. The former case is captured by the condition in \textit{line 23} (for node $j$ and its parents), and the latter in \textit{line 13} where the condition requires that $i$ routes traffic through the node $j\equiv CP_{i}$.

\noindent\textit{Completeness:} The updating process of Algorithm~\ref{alg:coloring-from-measurements} is based on the fact that a BN node is conditionally independent of all other nodes given its Markov blanket, \ie its parents, children, and ``spouses'' (parents of common children)~\cite{pearl1988prob-reasoning}. Hence, for each oracle, let for a node $i$, Algorithm~\ref{alg:coloring-from-measurements} visits \textit{all nodes that are dependent} on $i$, \ie its parents (\textit{line 12}), children (\textit{line 16}), and parents of children (\textit{line 18}). Any other node is not dependent, unless a change in the value/route of a visited node takes place (in that case Algorithm~\ref{alg:coloring-from-measurements} calls again \textproc{SetRoute} for this node; in \textit{line 14} or \textit{24}).

\noindent\textit{Complexity:} The function \textproc{SetRoute} is called only for nodes with an oracle (\textit{line 3}), or only for nodes without a certain route (see in \textit{line 13} condition $f(CP_{i}=0)$, and in \textit{line 24}, node $j\in C_{i}$ satisfies the condition $f(j)=0$ from \textit{line 16}). As soon as a node sees a certain route, it is not considered for further inference. Let $i,j\in\mathcal{X}$ and $k$ a neighbor of both $i$ and $j$; if \textproc{SetRoute} is called for $k$ when $i$ is visited, then it will not be re-called for $k$ when $j$ is visited. Hence, \textproc{SetRoute} is called at most $|\mathcal{N}|$ times.
\textit{Remark: }This does not mean that the recursion depth of \textit{lines 14} and \textit{24} is at most one, but that \emph{the sum of recursive calls of \textproc{SetRoute} is bounded by} $|\mathcal{N}|$.

\section{Proof of Theorem~\ref{thm:weak-paths-increase-coloring}}\label{appendix:thm-weak-paths-increase-coloring} 
We provide a sketch of the proof. By design, Algorithm~\ref{alg:transform-vf-bn-weak} only removes edges from the \Rgraph. A node $i$ has a certain route only if all its parents $P_{i}$ have the same certain route as well; removing a parent changes neither the route of the other parents, nor the route of $i$. On the other hand, let all parents of $i$, except for one parent $j\in P_{i}$, have the same route $m$; then $i$ does not have a certain route. If the edge $e_{ji}$ is removed, the remaining parents of $i$ will have the same route $m$, and thus a new route inference for node $i$ can be safely made.

\section{Proof of Lemma~\ref{lemma:properties-problem-certain}}
\label{appendix:thm-properties-modularity}

The first item follows straightforwardly from the definition of the function $|\mathcal{NC}_{R}(\mathcal{X})|$ (the size of a set is non-negative), and the fact that a measurement is only an observation that cannot change the (certainly inferred) route of a node and thus decrease the number of nodes which already have certain routes. In particular, a certain route for node $i$ is independent of the route probabilities of other nodes without a certain inference (otherwise the route of $i$ would not have been inferred with certainty). Hence, (a) in case the measured node already has a certain route, a (valid) measurement cannot change this route, and (b) in case the measured node does not have a certain route, then it does not affect the route of $i$; in either case the route of $i$ is not affected.

We prove the second and third items through two counter-examples depicted in Fig.~\ref{fig:examples-submodularity}. 

\begin{figure}
\centering
\includegraphics[width=1\linewidth]{./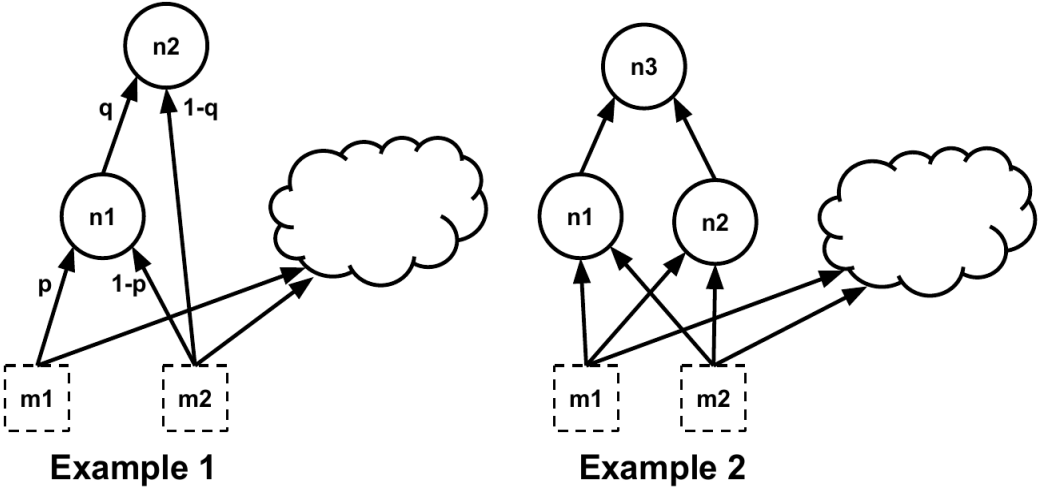}
\caption{Examples of a \Rgraph. The \ppoints $m1$ and $m2$ are denoted with boxes, the nodes of interest are denoted with circles, and the clouds correspond to a part of the \Rgraph that has no (outgoing) edges towards the nodes of interest and whose structure is not of interest. The weight next to each edge denotes the probability for a node to select the route from this incoming edge.}
\label{fig:examples-submodularity}
\end{figure}

We remind that a set function $f$ is \textit{submodular} when $\forall A\subseteq B $ and $\forall \epsilon\notin A$ it holds that
\begin{equation}\label{eq:submodularity-def}
f(A\cup \{\epsilon\}) - f(A) \geq f(B\cup \{\epsilon\}) - f(B)
\end{equation}
and is \textit{supermodular} when $\forall A\subseteq B $ and $\forall \epsilon\notin A$ it holds that
\begin{equation}\label{eq:supermodularity-def}
f(A\cup \{\epsilon\}) - f(A) \leq f(B\cup \{\epsilon\}) - f(B)
\end{equation}
In other words, the marginal gain in a submodular function by adding an element $\epsilon$ to a set $S$ diminishes with the size of the set $S$.

\myitem{Example 1.} Consider the first example in Fig.~\ref{fig:examples-submodularity}, let $A$ be a set of nodes in the ``cloud'' of Fig.~\ref{fig:examples-submodularity}, and let
\begin{align*}
A & \cap \{n1,n2\}    = \emptyset \\
B &\equiv A \cup \{n2\} \\
\epsilon & \equiv \{n1\}
\end{align*}
In this case, the objective function of \eq{eq:objective-function-greedy} takes the value
\begin{equation}
E_{P}\left[\mathcal{NC}_{R}(A\cup \{\epsilon\})\right] = E_{P}\left[\mathcal{NC}_{R}(A)\right] + 1 + (1-p)
\end{equation}
because we will have an oracle for $n1$ (i.e., we increment by 1), and if this oracle is $n1\rhd m2$ (which happens with probability $1-p$) then we can certainly infer that $n2$ routes to $m2$ as well (i.e., we increment one more by 1, but now with probability $1-p$); otherwise we cannot infer the route of $n2$ with certainty (and we do not increment).

Also for $B$ we get for the objective function
\begin{equation}
E_{P}\left[\mathcal{NC}_{R}(B)\right] = E_{P}\left[\mathcal{NC}_{R}(A)\right] + 1 + p\cdot q
\end{equation}
because we will have an oracle for $n2$ (i.e., we increment by 1), and if this oracle is $n2\rhd m1$, this means that we can also infer that $n1\rhd m1$, because this is the only way that $n2$ can route to $m1$, and the respective probability is $p\cdot q$ (\ie w.p. $p$ the node $n1$ routes to $m1$ \textit{and} $n2$ selects the route from $n1$ w.p. $q$); otherwise (\ie $n1\rhd m2$) we cannot infer the route of $n1$ with certainty.

Finally, trivially, we get
\begin{equation}
E_{P}\left[\mathcal{NC}_{R}(B\cup \{\epsilon\})\right] = E_{P}\left[\mathcal{NC}_{R}(A)\right] + 2
\end{equation}
because we have oracles for both nodes $n1$ and $n2$.

The above equations give 
\begin{align}
\Delta_{A} &= E_{P}\left[\mathcal{NC}_{R}(A\cup \{\epsilon\})\right] - E_{P}\left[\mathcal{NC}_{R}(A)\right] = 1 + (1-p) \\
\Delta_{B} &= E_{P}\left[\mathcal{NC}_{R}(B\cup \{\epsilon\})\right] - E_{P}\left[\mathcal{NC}_{R}(B)\right] = 1- p\cdot q
\end{align}
It is easy to see that $\Delta_{A}\geq 1 \geq \Delta_{B}$, which means that \textit{the objective function cannot be supermodular}, because there exists an $\epsilon$ for which the inequality \eq{eq:supermodularity-def} does not hold.

\myitem{Example 2.} Consider the second example in Fig.~\ref{fig:examples-submodularity}, let $A$ be a set of nodes in the ``cloud'' of Fig.~\ref{fig:examples-submodularity}, and let
\begin{align*}
A & \cap \{n1,n2,n3\}    = \emptyset \\
B &\equiv A \cup \{n2\} \\
\epsilon & \equiv \{n1\}
\end{align*}
The objective function of \eq{eq:objective-function-greedy} takes the value 
\begin{equation}
E_{P}\left[\mathcal{NC}_{R}(A\cup \{\epsilon\})\right] = E_{P}\left[\mathcal{NC}_{R}(A)\right] + 1
\end{equation}
because we will have an oracle for $n1$ (i.e., we increment by 1), and no matter what this oracle is, the route probabilities $\pi$ for $n2$ and $n3$ will be always non-zero for both $m1$ and $m2$ (\ie we cannot make any other certain inference).

Also for $B$ we get for the objective function
\begin{equation}
E_{P}\left[\mathcal{NC}_{R}(B)\right] = E_{P}\left[\mathcal{NC}_{R}(A)\right] + 1
\end{equation}
because we will have an oracle for $n2$ (i.e., we increment by 1), and no matter what this oracle is, the route probabilities $\pi$ for $n1$ and $n3$ will be always non-zero for both $m1$ and $m2$ (\ie we cannot make any other certain inference).

Finally, let $w$ denote the probability that $n1$ and $n2$ route to the same \ppoint. Then, we get
\begin{equation}
E_{P}\left[\mathcal{NC}_{R}(B\cup \{\epsilon\})\right] = E_{P}\left[\mathcal{NC}_{R}(A)\right] + 2 + w
\end{equation}
because we will have an oracle for nodes $n1$ and $n2$ (i.e., we increment by 2), and if both oracles for $n1$ and $n2$ are for the same \ppoint (which happens w.p. $w$), we can make one more certain inference for $n3$; otherwise we cannot infer the route of $n3$ with certainty.

The above equations give 
\begin{align}
\Delta_{A} &= E_{P}\left[\mathcal{NC}_{R}(A\cup \{\epsilon\})\right] - E_{P}\left[\mathcal{NC}_{R}(A)\right] = 1 \\
\Delta_{B} &= E_{P}\left[\mathcal{NC}_{R}(B\cup \{\epsilon\})\right] - E_{P}\left[\mathcal{NC}_{R}(B)\right] = 1 + w
\end{align}
It is easy to see that $\Delta_{A} \leq \Delta_{B}$, which means that \textit{the objective function cannot be submodular}, because there exists an $\epsilon$ for which the inequality \eq{eq:submodularity-def} does not hold.


\begin{thebibliography}{52}


\ifx \showCODEN    \undefined \def \showCODEN     #1{\unskip}     \fi
\ifx \showDOI      \undefined \def \showDOI       #1{#1}\fi
\ifx \showISBNx    \undefined \def \showISBNx     #1{\unskip}     \fi
\ifx \showISBNxiii \undefined \def \showISBNxiii  #1{\unskip}     \fi
\ifx \showISSN     \undefined \def \showISSN      #1{\unskip}     \fi
\ifx \showLCCN     \undefined \def \showLCCN      #1{\unskip}     \fi
\ifx \shownote     \undefined \def \shownote      #1{#1}          \fi
\ifx \showarticletitle \undefined \def \showarticletitle #1{#1}   \fi
\ifx \showURL      \undefined \def \showURL       {\relax}        \fi
\providecommand\bibfield[2]{#2}
\providecommand\bibinfo[2]{#2}
\providecommand\natexlab[1]{#1}
\providecommand\showeprint[2][]{arXiv:#2}

\bibitem[\protect\citeauthoryear{Anwar, Niaz, Choffnes, et~al\mbox{.}}{Anwar
  et~al\mbox{.}}{2015}]%
        {anwar2015investigating}
\bibfield{author}{\bibinfo{person}{Ruwaifa Anwar}, \bibinfo{person}{Haseeb
  Niaz}, \bibinfo{person}{David Choffnes}, {et~al\mbox{.}}}
  \bibinfo{year}{2015}\natexlab{}.
\newblock \showarticletitle{Investigating interdomain routing policies in the
  wild}. In \bibinfo{booktitle}{\emph{Proc. ACM IMC}}.
\newblock


\bibitem[\protect\citeauthoryear{Baltra, Beverly, and Xie}{Baltra
  et~al\mbox{.}}{2014}]%
        {baltra2014ingress}
\bibfield{author}{\bibinfo{person}{Guillermo Baltra}, \bibinfo{person}{Robert
  Beverly}, {and} \bibinfo{person}{Geoffrey~G Xie}.}
  \bibinfo{year}{2014}\natexlab{}.
\newblock \showarticletitle{Ingress point spreading: A new primitive for
  adaptive active network mapping}. In \bibinfo{booktitle}{\emph{Proc. PAM}}.
\newblock


\bibitem[\protect\citeauthoryear{Bian, Buhmann, Krause, and Tschiatschek}{Bian
  et~al\mbox{.}}{2017}]%
        {bian2017guarantees}
\bibfield{author}{\bibinfo{person}{Andrew~An Bian}, \bibinfo{person}{Joachim~M
  Buhmann}, \bibinfo{person}{Andreas Krause}, {and} \bibinfo{person}{Sebastian
  Tschiatschek}.} \bibinfo{year}{2017}\natexlab{}.
\newblock \showarticletitle{Guarantees for Greedy Maximization of
  Non-submodular Functions with Applications}. In
  \bibinfo{booktitle}{\emph{International Conference on Machine Learning
  (ICML)}}.
\newblock


\bibitem[\protect\citeauthoryear{Caesar and Rexford}{Caesar and
  Rexford}{2005}]%
        {caesar2005bgp}
\bibfield{author}{\bibinfo{person}{Matthew Caesar} {and}
  \bibinfo{person}{Jennifer Rexford}.} \bibinfo{year}{2005}\natexlab{}.
\newblock \showarticletitle{{BGP routing policies in ISP networks}}.
\newblock \bibinfo{journal}{\emph{IEEE network}} \bibinfo{volume}{19},
  \bibinfo{number}{6} (\bibinfo{year}{2005}), \bibinfo{pages}{5--11}.
\newblock


\bibitem[\protect\citeauthoryear{CAIDA}{CAIDA}{2018a}]%
        {caidaasrel}
\bibfield{author}{\bibinfo{person}{CAIDA}.} \bibinfo{year}{2018}\natexlab{a}.
\newblock \bibinfo{title}{{AS-Relationships Dataset}}.
\newblock
  \bibinfo{howpublished}{\url{http://data.caida.org/datasets/as-relationships/}}.
\newblock
\newblock
\shownote{{Dataset collected on 1st July 2018}.}


\bibitem[\protect\citeauthoryear{CAIDA}{CAIDA}{2018b}]%
        {caidaperiscope}
\bibfield{author}{\bibinfo{person}{CAIDA}.} \bibinfo{year}{{2018}}\natexlab{b}.
\newblock \bibinfo{title}{{Periscope Looking Glass API}}.
\newblock
  \bibinfo{howpublished}{\url{http://www.caida.org/tools/utilities/looking-glass-api/}}.
\newblock


\bibitem[\protect\citeauthoryear{CAIDA}{CAIDA}{2018c}]%
        {caidapfx2as}
\bibfield{author}{\bibinfo{person}{CAIDA}.} \bibinfo{year}{{2018}}\natexlab{c}.
\newblock \bibinfo{title}{{Routeviews Prefix-to-AS mappings (pfx2as) for IPv4
  and IPv6}}.
\newblock
  \bibinfo{howpublished}{\url{http://data.caida.org/datasets/routing/routeviews-prefix2as/}}.
\newblock


\bibitem[\protect\citeauthoryear{Castro, Cardona, Gorinsky, and
  Francois}{Castro et~al\mbox{.}}{2014}]%
        {castro2014remote}
\bibfield{author}{\bibinfo{person}{Ignacio Castro},
  \bibinfo{person}{Juan~Camilo Cardona}, \bibinfo{person}{Sergey Gorinsky},
  {and} \bibinfo{person}{Pierre Francois}.} \bibinfo{year}{2014}\natexlab{}.
\newblock \showarticletitle{Remote peering: More peering without internet
  flattening}. In \bibinfo{booktitle}{\emph{Proc. ACM CoNEXT}}.
  \bibinfo{pages}{185--198}.
\newblock


\bibitem[\protect\citeauthoryear{Cicalese, Joumblatt, Rossi, Buob, Aug{\'e},
  and Friedman}{Cicalese et~al\mbox{.}}{2015}]%
        {cicalese2015fistful}
\bibfield{author}{\bibinfo{person}{Danilo Cicalese}, \bibinfo{person}{Diana
  Joumblatt}, \bibinfo{person}{Dario Rossi}, \bibinfo{person}{Marc-Olivier
  Buob}, \bibinfo{person}{Jordan Aug{\'e}}, {and} \bibinfo{person}{Timur
  Friedman}.} \bibinfo{year}{2015}\natexlab{}.
\newblock \showarticletitle{{A fistful of pings: Accurate and lightweight
  anycast enumeration and geolocation}}. In \bibinfo{booktitle}{\emph{Proc.
  IEEE INFOCOM}}.
\newblock


\bibitem[\protect\citeauthoryear{Cisco}{Cisco}{2019}]%
        {ciscobestpath}
\bibfield{author}{\bibinfo{person}{Cisco}.} \bibinfo{year}{{2019}}\natexlab{}.
\newblock \bibinfo{title}{{BGP Best Path Selection Algorithm}}.
\newblock
  \bibinfo{howpublished}{\url{https://www.cisco.com/c/en/us/support/docs/ip/border-gateway-protocol-bgp/13753-25.html}}.
\newblock


\bibitem[\protect\citeauthoryear{Cooper}{Cooper}{1990}]%
        {cooper1990computational}
\bibfield{author}{\bibinfo{person}{Gregory~F Cooper}.}
  \bibinfo{year}{1990}\natexlab{}.
\newblock \showarticletitle{The computational complexity of probabilistic
  inference using Bayesian belief networks}.
\newblock \bibinfo{journal}{\emph{Artificial intelligence}}
  \bibinfo{volume}{42}, \bibinfo{number}{2-3} (\bibinfo{year}{1990}),
  \bibinfo{pages}{393--405}.
\newblock


\bibitem[\protect\citeauthoryear{Cunha, Marchetta, Calder, et~al\mbox{.}}{Cunha
  et~al\mbox{.}}{2016}]%
        {cunha2016sibyl}
\bibfield{author}{\bibinfo{person}{{\'I}talo Cunha}, \bibinfo{person}{Pietro
  Marchetta}, \bibinfo{person}{Matt Calder}, {et~al\mbox{.}}}
  \bibinfo{year}{2016}\natexlab{}.
\newblock \showarticletitle{{Sibyl: a practical Internet route oracle}}. In
  \bibinfo{booktitle}{\emph{Proc. USENIX NSDI}}.
\newblock


\bibitem[\protect\citeauthoryear{De~Vries, de~O~Schmidt, Hardaker,
  et~al\mbox{.}}{De~Vries et~al\mbox{.}}{2017}]%
        {de2017broad}
\bibfield{author}{\bibinfo{person}{Wouter~B De~Vries}, \bibinfo{person}{Ricardo
  de O~Schmidt}, \bibinfo{person}{Wes Hardaker}, {et~al\mbox{.}}}
  \bibinfo{year}{2017}\natexlab{}.
\newblock \showarticletitle{Broad and load-aware anycast mapping with
  verfploeter}. In \bibinfo{booktitle}{\emph{Proc. ACM IMC}}.
\newblock


\bibitem[\protect\citeauthoryear{de~Vries, Schmidt, and Pras}{de~Vries
  et~al\mbox{.}}{2016}]%
        {de2016anycast}
\bibfield{author}{\bibinfo{person}{Wouter~B de Vries}, \bibinfo{person}{Ricardo
  de~O Schmidt}, {and} \bibinfo{person}{Aiko Pras}.}
  \bibinfo{year}{2016}\natexlab{}.
\newblock \showarticletitle{Anycast and its potential for DDoS mitigation}. In
  \bibinfo{booktitle}{\emph{IFIP International Conference on Autonomous
  Infrastructure, Management and Security}}. Springer,
  \bibinfo{pages}{147--151}.
\newblock


\bibitem[\protect\citeauthoryear{Feamster, Winick, and Rexford}{Feamster
  et~al\mbox{.}}{2004}]%
        {feamster2004model}
\bibfield{author}{\bibinfo{person}{Nick Feamster}, \bibinfo{person}{Jared
  Winick}, {and} \bibinfo{person}{Jennifer Rexford}.}
  \bibinfo{year}{2004}\natexlab{}.
\newblock \showarticletitle{A model of {BGP} routing for network engineering}.
  In \bibinfo{booktitle}{\emph{ACM SIGMETRICS PER}}, Vol.~\bibinfo{volume}{32}.
  \bibinfo{pages}{331--342}.
\newblock


\bibitem[\protect\citeauthoryear{Fontugne, Shah, and Aben}{Fontugne
  et~al\mbox{.}}{2018}]%
        {fontugne2018thin}
\bibfield{author}{\bibinfo{person}{Romain Fontugne}, \bibinfo{person}{Anant
  Shah}, {and} \bibinfo{person}{Emile Aben}.} \bibinfo{year}{2018}\natexlab{}.
\newblock \showarticletitle{The (thin) bridges of as connectivity: Measuring
  dependency using AS hegemony}. In \bibinfo{booktitle}{\emph{International
  Conference on Passive and Active Network Measurement}}.
\newblock


\bibitem[\protect\citeauthoryear{Gao and Rexford}{Gao and Rexford}{2001}]%
        {gao2001stable}
\bibfield{author}{\bibinfo{person}{Lixin Gao} {and} \bibinfo{person}{Jennifer
  Rexford}.} \bibinfo{year}{2001}\natexlab{}.
\newblock \showarticletitle{Stable Internet routing without global
  coordination}.
\newblock \bibinfo{journal}{\emph{IEEE/ACM TON}} \bibinfo{volume}{9},
  \bibinfo{number}{6} (\bibinfo{year}{2001}), \bibinfo{pages}{681--692}.
\newblock


\bibitem[\protect\citeauthoryear{Gehr, Misailovic, Tsankov, Vanbever, Wiesmann,
  and Vechev}{Gehr et~al\mbox{.}}{2018}]%
        {gehr2018bayonet}
\bibfield{author}{\bibinfo{person}{Timon Gehr}, \bibinfo{person}{Sasa
  Misailovic}, \bibinfo{person}{Petar Tsankov}, \bibinfo{person}{Laurent
  Vanbever}, \bibinfo{person}{Pascal Wiesmann}, {and} \bibinfo{person}{Martin
  Vechev}.} \bibinfo{year}{2018}\natexlab{}.
\newblock \showarticletitle{Bayonet: probabilistic inference for networks}. In
  \bibinfo{booktitle}{\emph{Proc. ACM SIGPLAN Conference on Programming
  Language Design and Implementation}}. \bibinfo{pages}{586--602}.
\newblock


\bibitem[\protect\citeauthoryear{Gill, Schapira, and Goldberg}{Gill
  et~al\mbox{.}}{2011}]%
        {gill2011let}
\bibfield{author}{\bibinfo{person}{Phillipa Gill}, \bibinfo{person}{Michael
  Schapira}, {and} \bibinfo{person}{Sharon Goldberg}.}
  \bibinfo{year}{2011}\natexlab{}.
\newblock \showarticletitle{Let the market drive deployment: A strategy for
  transitioning to BGP security}. In \bibinfo{booktitle}{\emph{ACM SIGCOMM
  CCR}}, Vol.~\bibinfo{volume}{41}. \bibinfo{pages}{14--25}.
\newblock


\bibitem[\protect\citeauthoryear{Giotsas, Luckie, Huffaker,
  et~al\mbox{.}}{Giotsas et~al\mbox{.}}{2014}]%
        {giotsas2014inferring}
\bibfield{author}{\bibinfo{person}{Vasileios Giotsas}, \bibinfo{person}{Matthew
  Luckie}, \bibinfo{person}{Bradley Huffaker}, {et~al\mbox{.}}}
  \bibinfo{year}{2014}\natexlab{}.
\newblock \showarticletitle{{Inferring complex AS relationships}}. In
  \bibinfo{booktitle}{\emph{Proc. ACM IMC}}.
\newblock


\bibitem[\protect\citeauthoryear{Gregori, Improta, Lenzini, Rossi, and
  Sani}{Gregori et~al\mbox{.}}{2012}]%
        {gregori2012incompleteness}
\bibfield{author}{\bibinfo{person}{Enrico Gregori}, \bibinfo{person}{Alessandro
  Improta}, \bibinfo{person}{Luciano Lenzini}, \bibinfo{person}{Lorenzo Rossi},
  {and} \bibinfo{person}{Luca Sani}.} \bibinfo{year}{2012}\natexlab{}.
\newblock \showarticletitle{{On the incompleteness of the AS-level graph: a
  novel methodology for BGP route collector placement}}. In
  \bibinfo{booktitle}{\emph{Proc. ACM IMC}}.
\newblock


\bibitem[\protect\citeauthoryear{G{\"u}rsun and Crovella}{G{\"u}rsun and
  Crovella}{2012}]%
        {gursun2012traffic}
\bibfield{author}{\bibinfo{person}{Gonca G{\"u}rsun} {and}
  \bibinfo{person}{Mark Crovella}.} \bibinfo{year}{2012}\natexlab{}.
\newblock \showarticletitle{On traffic matrix completion in the internet}. In
  \bibinfo{booktitle}{\emph{Proc. ACM IMC}}.
\newblock


\bibitem[\protect\citeauthoryear{Kloti, Kotronis, Ager, et~al\mbox{.}}{Kloti
  et~al\mbox{.}}{2015}]%
        {kloti2015policy}
\bibfield{author}{\bibinfo{person}{Rowan Kloti}, \bibinfo{person}{Vasileios
  Kotronis}, \bibinfo{person}{Bernhard Ager}, {et~al\mbox{.}}}
  \bibinfo{year}{2015}\natexlab{}.
\newblock \showarticletitle{Policy-compliant path diversity and bisection
  bandwidth}. In \bibinfo{booktitle}{\emph{Proc. IEEE INFOCOM}}.
\newblock


\bibitem[\protect\citeauthoryear{Korb and Nicholson}{Korb and
  Nicholson}{2010}]%
        {van2008handbook}
\bibfield{author}{\bibinfo{person}{Kevin~B. Korb} {and} \bibinfo{person}{Ann~E.
  Nicholson}.} \bibinfo{year}{2010}\natexlab{}.
\newblock \bibinfo{booktitle}{\emph{Bayesian Artificial Intelligence}
  (\bibinfo{edition}{2nd} ed.)}.
\newblock \bibinfo{publisher}{CRC Press, Inc.}
\newblock


\bibitem[\protect\citeauthoryear{Krause and Golovin}{Krause and
  Golovin}{2012}]%
        {krause2012submodular}
\bibfield{author}{\bibinfo{person}{Andreas Krause} {and}
  \bibinfo{person}{Daniel Golovin}.} \bibinfo{year}{2012}\natexlab{}.
\newblock \showarticletitle{Submodular function maximization}.
\newblock \bibinfo{journal}{\emph{Tractability: Practical Approaches to Hard
  Problems}} \bibinfo{volume}{3}, \bibinfo{number}{19} (\bibinfo{year}{2012}),
  \bibinfo{pages}{8}.
\newblock


\bibitem[\protect\citeauthoryear{Lapukhov, Premji, and Mitchell}{Lapukhov
  et~al\mbox{.}}{2016}]%
        {bgp-in-dc-rfc7938}
\bibfield{author}{\bibinfo{person}{P Lapukhov}, \bibinfo{person}{A Premji},
  {and} \bibinfo{person}{J Mitchell}.} \bibinfo{year}{2016}\natexlab{}.
\newblock \bibinfo{title}{{Use of BGP for routing in large-scale data
  centers}}.
\newblock \bibinfo{howpublished}{RFC 7938}.
\newblock


\bibitem[\protect\citeauthoryear{Lee, Jang, Lee, Iannaccone, and Moon}{Lee
  et~al\mbox{.}}{2011}]%
        {lee2011scalable}
\bibfield{author}{\bibinfo{person}{DK Lee}, \bibinfo{person}{Keon Jang},
  \bibinfo{person}{Changhyun Lee}, \bibinfo{person}{Gianluca Iannaccone}, {and}
  \bibinfo{person}{Sue Moon}.} \bibinfo{year}{2011}\natexlab{}.
\newblock \showarticletitle{{Scalable and systematic Internet-wide path and
  delay estimation from existing measurements}}.
\newblock \bibinfo{journal}{\emph{Computer Networks}} \bibinfo{volume}{55},
  \bibinfo{number}{3} (\bibinfo{year}{2011}), \bibinfo{pages}{838--855}.
\newblock


\bibitem[\protect\citeauthoryear{Li, Levin, Spring, and Bhattacharjee}{Li
  et~al\mbox{.}}{2018}]%
        {anycast-sigcomm2018}
\bibfield{author}{\bibinfo{person}{Zhihao Li}, \bibinfo{person}{Dave Levin},
  \bibinfo{person}{Neil Spring}, {and} \bibinfo{person}{Bobby Bhattacharjee}.}
  \bibinfo{year}{2018}\natexlab{}.
\newblock \showarticletitle{Internet Anycast: Performance, Problems, and
  Potential}. In \bibinfo{booktitle}{\emph{Proc. ACM SIGCOMM}}.
\newblock


\bibitem[\protect\citeauthoryear{Lindqvist and Abley}{Lindqvist and
  Abley}{2006}]%
        {rfc4786}
\bibfield{author}{\bibinfo{person}{Kurt Lindqvist} {and} \bibinfo{person}{Joe
  Abley}.} \bibinfo{year}{2006}\natexlab{}.
\newblock \bibinfo{title}{{Operation of Anycast Services}}.
\newblock \bibinfo{howpublished}{RFC 4786}.
\newblock
\urldef\tempurl%
\url{https://doi.org/10.17487/RFC4786}
\showDOI{\tempurl}


\bibitem[\protect\citeauthoryear{Lodhi, Laoutaris, Dhamdhere, and
  Dovrolis}{Lodhi et~al\mbox{.}}{2015}]%
        {lodhi2015complexities}
\bibfield{author}{\bibinfo{person}{Aemen Lodhi}, \bibinfo{person}{Nikolaos
  Laoutaris}, \bibinfo{person}{Amogh Dhamdhere}, {and}
  \bibinfo{person}{Constantine Dovrolis}.} \bibinfo{year}{2015}\natexlab{}.
\newblock \showarticletitle{{Complexities in Internet peering: Understanding
  the "black" in the "black art"}}. In \bibinfo{booktitle}{\emph{Proc. IEEE
  INFOCOM}}.
\newblock


\bibitem[\protect\citeauthoryear{Luckie, Huffaker, Dhamdhere,
  et~al\mbox{.}}{Luckie et~al\mbox{.}}{2013}]%
        {luckie2013relationships}
\bibfield{author}{\bibinfo{person}{Matthew Luckie}, \bibinfo{person}{Bradley
  Huffaker}, \bibinfo{person}{Amogh Dhamdhere}, {et~al\mbox{.}}}
  \bibinfo{year}{2013}\natexlab{}.
\newblock \showarticletitle{{AS relationships, customer cones, and
  validation}}. In \bibinfo{booktitle}{\emph{Proc. ACM IMC}}.
\newblock


\bibitem[\protect\citeauthoryear{Mao, Qiu, Wang, and Zhang}{Mao
  et~al\mbox{.}}{2005}]%
        {mao2005level}
\bibfield{author}{\bibinfo{person}{Z~Morley Mao}, \bibinfo{person}{Lili Qiu},
  \bibinfo{person}{Jia Wang}, {and} \bibinfo{person}{Yin Zhang}.}
  \bibinfo{year}{2005}\natexlab{}.
\newblock \showarticletitle{{On AS-level path inference}}. In
  \bibinfo{booktitle}{\emph{ACM SIGMETRICS PER}}, Vol.~\bibinfo{volume}{33}.
  \bibinfo{pages}{339--349}.
\newblock


\bibitem[\protect\citeauthoryear{Moura, Schmidt, Heidemann, de~Vries, Muller,
  Wei, and Hesselman}{Moura et~al\mbox{.}}{2016}]%
        {moura2016anycast}
\bibfield{author}{\bibinfo{person}{Giovane~C.M. Moura},
  \bibinfo{person}{Ricardo de~O. Schmidt}, \bibinfo{person}{John Heidemann},
  \bibinfo{person}{Wouter~B. de Vries}, \bibinfo{person}{Moritz Muller},
  \bibinfo{person}{Lan Wei}, {and} \bibinfo{person}{Cristian Hesselman}.}
  \bibinfo{year}{2016}\natexlab{}.
\newblock \showarticletitle{Anycast vs. DDoS: Evaluating the November 2015 Root
  DNS Event}. In \bibinfo{booktitle}{\emph{Proc. ACM IMC}}.
\newblock


\bibitem[\protect\citeauthoryear{M{\"u}hlbauer, Feldmann, Maennel, Roughan, and
  Uhlig}{M{\"u}hlbauer et~al\mbox{.}}{2006}]%
        {muhlbauer2006building}
\bibfield{author}{\bibinfo{person}{Wolfgang M{\"u}hlbauer},
  \bibinfo{person}{Anja Feldmann}, \bibinfo{person}{Olaf Maennel},
  \bibinfo{person}{Matthew Roughan}, {and} \bibinfo{person}{Steve Uhlig}.}
  \bibinfo{year}{2006}\natexlab{}.
\newblock \showarticletitle{Building an AS-topology model that captures route
  diversity}.
\newblock \bibinfo{journal}{\emph{ACM SIGCOMM CCR}} \bibinfo{volume}{36},
  \bibinfo{number}{4} (\bibinfo{year}{2006}), \bibinfo{pages}{195--206}.
\newblock


\bibitem[\protect\citeauthoryear{{{NANOG} mailing list archives}}{{{NANOG}
  mailing list archives}}{2018}]%
        {nanog-select-transit}
\bibfield{author}{\bibinfo{person}{{{NANOG} mailing list archives}}.}
  \bibinfo{year}{2018}\natexlab{}.
\newblock \bibinfo{title}{{How to choose a transit provider?}}
\newblock \bibinfo{howpublished}{\url{http://seclists.org/nanog/2018/Dec/281}}.
\newblock


\bibitem[\protect\citeauthoryear{Nomikos, Kotronis, Sermpezis, Gigis,
  Manassakis, Dietzel, Konstantaras, Dimitropoulos, and Giotsas}{Nomikos
  et~al\mbox{.}}{2018}]%
        {nomikos2018peer}
\bibfield{author}{\bibinfo{person}{George Nomikos}, \bibinfo{person}{Vasileios
  Kotronis}, \bibinfo{person}{Pavlos Sermpezis}, \bibinfo{person}{Petros
  Gigis}, \bibinfo{person}{Lefteris Manassakis}, \bibinfo{person}{Christoph
  Dietzel}, \bibinfo{person}{Stavros Konstantaras}, \bibinfo{person}{Xenofontas
  Dimitropoulos}, {and} \bibinfo{person}{Vasileios Giotsas}.}
  \bibinfo{year}{2018}\natexlab{}.
\newblock \showarticletitle{O Peer, Where Art Thou?: Uncovering Remote Peering
  Interconnections at IXPs}. In \bibinfo{booktitle}{\emph{Proc. ACM IMC}}.
  \bibinfo{pages}{265--278}.
\newblock


\bibitem[\protect\citeauthoryear{Orsini, King, Giordano, Giotsas, and
  Dainotti}{Orsini et~al\mbox{.}}{2016}]%
        {caidabgpstream}
\bibfield{author}{\bibinfo{person}{Chiara Orsini}, \bibinfo{person}{Alistair
  King}, \bibinfo{person}{Danilo Giordano}, \bibinfo{person}{Vasileios
  Giotsas}, {and} \bibinfo{person}{Alberto Dainotti}.}
  \bibinfo{year}{2016}\natexlab{}.
\newblock \showarticletitle{BGPStream: a software framework for live and
  historical BGP data analysis}. In \bibinfo{booktitle}{\emph{Proc. ACM IMC}}.
\newblock
\newblock
\shownote{\url{https://bgpstream.caida.org/}.}


\bibitem[\protect\citeauthoryear{Pearl}{Pearl}{1988}]%
        {pearl1988prob-reasoning}
\bibfield{author}{\bibinfo{person}{Judea Pearl}.}
  \bibinfo{year}{1988}\natexlab{}.
\newblock \bibinfo{booktitle}{\emph{Probabilistic Reasoning in Intelligent
  Systems: Networks of Plausible Inference} (\bibinfo{edition}{1} ed.)}.
\newblock \bibinfo{publisher}{Morgan Kaufmann}.
\newblock


\bibitem[\protect\citeauthoryear{PEERING}{PEERING}{2019}]%
        {peering}
\bibfield{author}{\bibinfo{person}{PEERING}.} \bibinfo{year}{2019}\natexlab{}.
\newblock \bibinfo{title}{{The PEERING testbed}}.
\newblock \bibinfo{howpublished}{\url{https://peering.usc.edu/}}.
\newblock


\bibitem[\protect\citeauthoryear{Quoitin and Uhlig}{Quoitin and Uhlig}{2005}]%
        {c-bgp}
\bibfield{author}{\bibinfo{person}{Bruno Quoitin} {and} \bibinfo{person}{Steve
  Uhlig}.} \bibinfo{year}{2005}\natexlab{}.
\newblock \showarticletitle{Modeling the routing of an autonomous system with
  C-BGP}.
\newblock \bibinfo{journal}{\emph{IEEE network}} \bibinfo{volume}{19},
  \bibinfo{number}{6} (\bibinfo{year}{2005}), \bibinfo{pages}{12--19}.
\newblock
\newblock
\shownote{\url{http://c-bgp.sourceforge.net//index.php}.}


\bibitem[\protect\citeauthoryear{Rekhter, Li, and Hares}{Rekhter
  et~al\mbox{.}}{2006}]%
        {bgprfc4271}
\bibfield{author}{\bibinfo{person}{Yakov Rekhter}, \bibinfo{person}{Tony Li},
  {and} \bibinfo{person}{Susan Hares}.} \bibinfo{year}{2006}\natexlab{}.
\newblock \bibinfo{title}{{A Border Gateway Protocol 4 (BGP-4)}}.
\newblock \bibinfo{howpublished}{RFC 4271}.
\newblock


\bibitem[\protect\citeauthoryear{{RIPE NCC}}{{RIPE NCC}}{2018a}]%
        {ripeatlas}
\bibfield{author}{\bibinfo{person}{{RIPE NCC}}.}
  \bibinfo{year}{{2018}}\natexlab{a}.
\newblock \bibinfo{title}{{RIPE Atlas}}.
\newblock \bibinfo{howpublished}{\url{https://atlas.ripe.net/}}.
\newblock


\bibitem[\protect\citeauthoryear{{RIPE NCC}}{{RIPE NCC}}{2018b}]%
        {riperis}
\bibfield{author}{\bibinfo{person}{{RIPE NCC}}.}
  \bibinfo{year}{{2018}}\natexlab{b}.
\newblock \bibinfo{title}{{Routing Information Service (RIS)}}.
\newblock
  \bibinfo{howpublished}{\url{https://www.ripe.net/analyse/internet-measurements/routing-information-service-ris}}.
\newblock


\bibitem[\protect\citeauthoryear{Schlinker, Zarifis, Cunha, Feamster, and
  Katz-Bassett}{Schlinker et~al\mbox{.}}{2014}]%
        {schlinker2014peering}
\bibfield{author}{\bibinfo{person}{Brandon Schlinker},
  \bibinfo{person}{Kyriakos Zarifis}, \bibinfo{person}{Italo Cunha},
  \bibinfo{person}{Nick Feamster}, {and} \bibinfo{person}{Ethan Katz-Bassett}.}
  \bibinfo{year}{2014}\natexlab{}.
\newblock \showarticletitle{{PEERING: An AS for Us}}. In
  \bibinfo{booktitle}{\emph{Proceedings of the 13th ACM Workshop on Hot Topics
  in Networks (HotNets)}}.
\newblock


\bibitem[\protect\citeauthoryear{Sermpezis and Dimitropoulos}{Sermpezis and
  Dimitropoulos}{2017}]%
        {sermpezis2017can}
\bibfield{author}{\bibinfo{person}{Pavlos Sermpezis} {and}
  \bibinfo{person}{Xenofontas Dimitropoulos}.} \bibinfo{year}{2017}\natexlab{}.
\newblock \showarticletitle{Can SDN accelerate BGP convergence? A performance
  analysis of inter-domain routing centralization}. In
  \bibinfo{booktitle}{\emph{2017 IFIP Networking Conference (IFIP Networking)
  and Workshops}}. IEEE, \bibinfo{pages}{1--9}.
\newblock


\bibitem[\protect\citeauthoryear{Sermpezis and Kotronis}{Sermpezis and
  Kotronis}{2019}]%
        {sermpezis2019code}
\bibfield{author}{\bibinfo{person}{Pavlos Sermpezis} {and}
  \bibinfo{person}{Vasileios Kotronis}.} \bibinfo{year}{2019}\natexlab{}.
\newblock \bibinfo{title}{GitHub repository with the Catchment Inference in
  Internet Routing code and algorithm implementation}.
\newblock
  \bibinfo{howpublished}{\url{https://github.com/FORTH-ICS-INSPIRE/anycast_catchment_prediction}}.
\newblock


\bibitem[\protect\citeauthoryear{Sermpezis, Kotronis, Gigis, Dimitropoulos,
  Cicalese, King, and Dainotti}{Sermpezis et~al\mbox{.}}{2018}]%
        {sermpezis2018artemis}
\bibfield{author}{\bibinfo{person}{Pavlos Sermpezis},
  \bibinfo{person}{Vasileios Kotronis}, \bibinfo{person}{Petros Gigis},
  \bibinfo{person}{Xenofontas Dimitropoulos}, \bibinfo{person}{Danilo
  Cicalese}, \bibinfo{person}{Alistair King}, {and} \bibinfo{person}{Alberto
  Dainotti}.} \bibinfo{year}{2018}\natexlab{}.
\newblock \showarticletitle{ARTEMIS: Neutralizing BGP hijacking within a
  minute}.
\newblock \bibinfo{journal}{\emph{IEEE/ACM Transactions on Networking (TON)}}
  \bibinfo{volume}{26}, \bibinfo{number}{6} (\bibinfo{year}{2018}),
  \bibinfo{pages}{2471--2486}.
\newblock


\bibitem[\protect\citeauthoryear{{University of Oregon}}{{University of
  Oregon}}{2018}]%
        {routeviews}
\bibfield{author}{\bibinfo{person}{{University of Oregon}}.}
  \bibinfo{year}{{2018}}\natexlab{}.
\newblock \bibinfo{title}{{Route Views Project}}.
\newblock \bibinfo{howpublished}{\url{www.routeviews.org}}.
\newblock


\bibitem[\protect\citeauthoryear{{Verizon}}{{Verizon}}{2017}]%
        {ra2017seeing}
\bibfield{author}{\bibinfo{person}{{Verizon}}.}
  \bibinfo{year}{{2017}}\natexlab{}.
\newblock \bibinfo{title}{{Seeing the World with RIPE Atlas}}.
\newblock
  \bibinfo{howpublished}{{\url{https://labs.ripe.net/Members/verizon_digital/seeing-the-world-with-ripe-atlas}}}.
\newblock


\bibitem[\protect\citeauthoryear{Wei and Heidemann}{Wei and Heidemann}{2018}]%
        {wei2018does}
\bibfield{author}{\bibinfo{person}{Lan Wei} {and} \bibinfo{person}{John
  Heidemann}.} \bibinfo{year}{2018}\natexlab{}.
\newblock \showarticletitle{Does anycast hang up on you?}
\newblock \bibinfo{journal}{\emph{IEEE Transactions on Network and Service
  Management}} (\bibinfo{year}{2018}).
\newblock


\bibitem[\protect\citeauthoryear{Yao, Mai, Xu, Zhang, Li, and Liu}{Yao
  et~al\mbox{.}}{2018}]%
        {yao2018networkai}
\bibfield{author}{\bibinfo{person}{Haipeng Yao}, \bibinfo{person}{Tianle Mai},
  \bibinfo{person}{Xiaobin Xu}, \bibinfo{person}{Peiying Zhang},
  \bibinfo{person}{Maozhen Li}, {and} \bibinfo{person}{Yunjie Liu}.}
  \bibinfo{year}{2018}\natexlab{}.
\newblock \showarticletitle{NetworkAI: An intelligent network architecture for
  self-learning control strategies in software defined networks}.
\newblock \bibinfo{journal}{\emph{IEEE Internet of Things Journal}}
  \bibinfo{volume}{5}, \bibinfo{number}{6} (\bibinfo{year}{2018}),
  \bibinfo{pages}{4319--4327}.
\newblock


\bibitem[\protect\citeauthoryear{Yu, Lan, Guo, and Hu}{Yu
  et~al\mbox{.}}{2018}]%
        {yu2018drom}
\bibfield{author}{\bibinfo{person}{Changhe Yu}, \bibinfo{person}{Julong Lan},
  \bibinfo{person}{Zehua Guo}, {and} \bibinfo{person}{Yuxiang Hu}.}
  \bibinfo{year}{2018}\natexlab{}.
\newblock \showarticletitle{DROM: Optimizing the Routing in Software-Defined
  Networks With Deep Reinforcement Learning}.
\newblock \bibinfo{journal}{\emph{IEEE Access}}  \bibinfo{volume}{6}
  (\bibinfo{year}{2018}), \bibinfo{pages}{64533--64539}.
\newblock


\end{thebibliography}
\end{document}